\newcommand{\ourmethod}{TetWeave}
\newcommand{\ourtitle}{\ourmethod{}: Isosurface Extraction using On-The-Fly Delaunay Tetrahedral Grids for Gradient-Based Mesh Optimization}
\newcommand{\figref}[1]{Fig.\ \ref{#1}}
\newcommand{\secref}[1]{Sec.\ \ref{#1}}
\newcommand{\tableref}[1]{Table \ref{#1}}
\newcommand{\appref}[1]{Appendix \ref{#1}}
\newcommand{\bigo}{\mathcal{O}}
\definecolor{acmblue}{RGB}{0, 102, 204}  
\definecolor{githubpurple}{RGB}{110, 84, 148}  
\newcommand{\cmark}{\ding{51}}%
\newcommand{\xmark}{\ding{55}}%
\definecolor{darkgreen}{rgb}{0.0, 0.6, 0.0}
\definecolor{darkred}{rgb}{0.85, 0.0, 0.0}
\newcommand{\cmarkcol}{{\color{darkgreen}\cmark}}
\newcommand{\xmarkcol}{{\color{darkred}\xmark}}
\setlist[itemize]{left=0pt, itemsep=0pt, topsep=0pt}
\setlist[enumerate]{left=0pt, itemsep=0pt, topsep=0pt}
\title{\ourtitle{}}
\author{Alexandre Binninger}
\affiliation{\institution{ETH Zurich}\city{Zurich}\country{Switzerland}}
\email{alexandre.binninger@inf.ethz.ch}
\author{Ruben Wiersma}
\affiliation{\institution{ETH Zurich}\city{Zurich}\country{Switzerland}}
\email{ruben.wiersma@inf.ethz.ch}
\author{Philipp Herholz}
\affiliation{\institution{Zurich}\country{Switzerland}}
\email{ph.herholz@gmail.com}
\author{Olga Sorkine-Hornung}
\affiliation{\institution{ETH Zurich}\city{Zurich}\country{Switzerland}}
\email{sorkine@inf.ethz.ch}
\begin{document}

\begin{abstract}
  We introduce \ourmethod{}, a novel isosurface representation for gradient-based mesh optimization that jointly optimizes the placement of a tetrahedral grid used for Marching Tetrahedra and a novel directional signed distance at each point.
\ourmethod{} constructs tetrahedral grids on-the-fly via Delaunay triangulation, enabling increased flexibility compared to predefined grids. The extracted meshes are guaranteed to be watertight, two-manifold and intersection-free.
The flexibility of \ourmethod{} enables a resampling strategy that places new points where reconstruction error is high and allows to encourage mesh fairness without compromising on reconstruction error.
This leads to high-quality, adaptive meshes that require minimal memory usage and few parameters to optimize.
Consequently, \ourmethod{} exhibits near-linear memory scaling relative to the vertex count of the output mesh — a substantial improvement over predefined grids.
We demonstrate the applicability of \ourmethod{} to a broad range of challenging tasks in computer graphics and vision, such as multi-view 3D reconstruction, mesh compression and geometric texture generation.
Our code is available at 
%
\href{https://github.com/AlexandreBinninger/TetWeave}{%
\textcolor{githubpurple}{https://github.com/AlexandreBinninger/TetWeave}}.
\end{abstract}

\begin{CCSXML}
<ccs2012>
   <concept>
       <concept_id>10010147.10010371.10010396.10010398</concept_id>
       <concept_desc>Computing methodologies~Mesh geometry models</concept_desc>
       <concept_significance>500</concept_significance>
       </concept>
   <concept>
       <concept_id>10010147.10010178.10010224.10010240.10010242</concept_id>
       <concept_desc>Computing methodologies~Shape representations</concept_desc>
       <concept_significance>300</concept_significance>
       </concept>
   <concept>
       <concept_id>10010147.10010178.10010224.10010245.10010254</concept_id>
       <concept_desc>Computing methodologies~Reconstruction</concept_desc>
       <concept_significance>100</concept_significance>
       </concept>
 </ccs2012>
\end{CCSXML}

\ccsdesc[500]{Computing methodologies~Mesh geometry models}
\ccsdesc[300]{Computing methodologies~Shape representations}
\ccsdesc[100]{Computing methodologies~Reconstruction}
\keywords{isosurface mesh extraction, gradient-based mesh optimization, photogrammetry}

\begin{teaserfigure}
  \includegraphics[width=\linewidth]{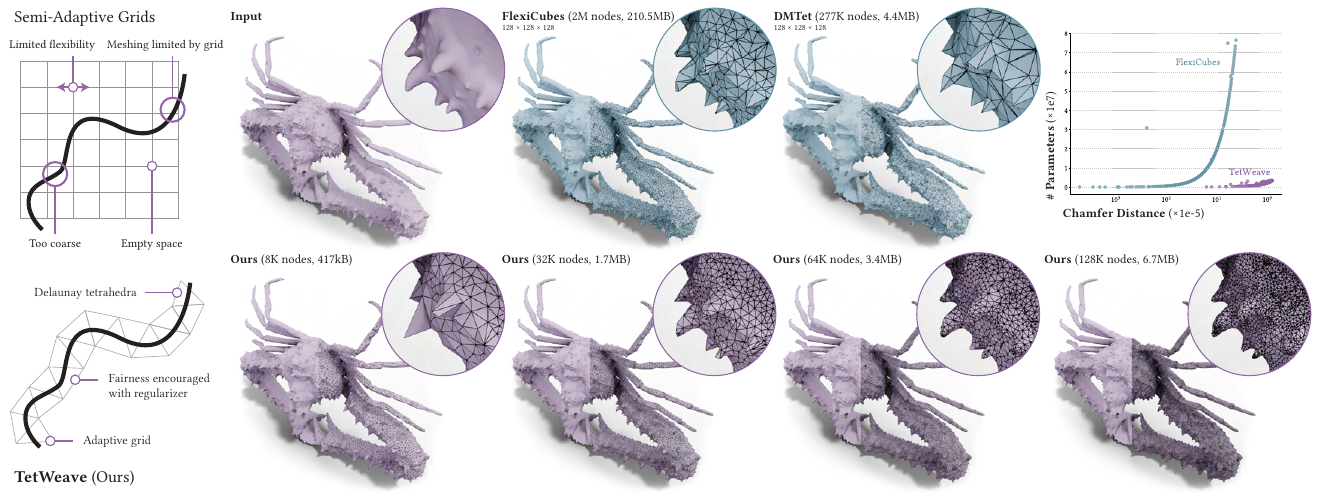}
  \Description{\ourmethod{} Teaser}
  \caption{\textbf{\ourmethod{}} jointly optimizes a tetrahedral grid and a directional signed distance function used for Marching Tetrahedra. Our method \textit{weaves} a background grid around the surface, which is regularized to give fair output meshes. The results are compared with semi-adaptive grid methods (top row), which start from a predefined grid. These methods have limited flexibility and their meshing is influenced by the configuration of the predefined grid. They require many more nodes in the background grid because of unused empty space to get a similar level of detail in the output mesh (graph in top right).}
  \label{fig:teaser}
\end{teaserfigure}

\maketitle

\section{Introduction}
\label{sec:introduction}

\begin{figure*}[t!]
    \centering
	\small
	\setlength{\tabcolsep}{5pt}
	\begin{tabular}{cccccc}
\includegraphics[width=0.22\linewidth]{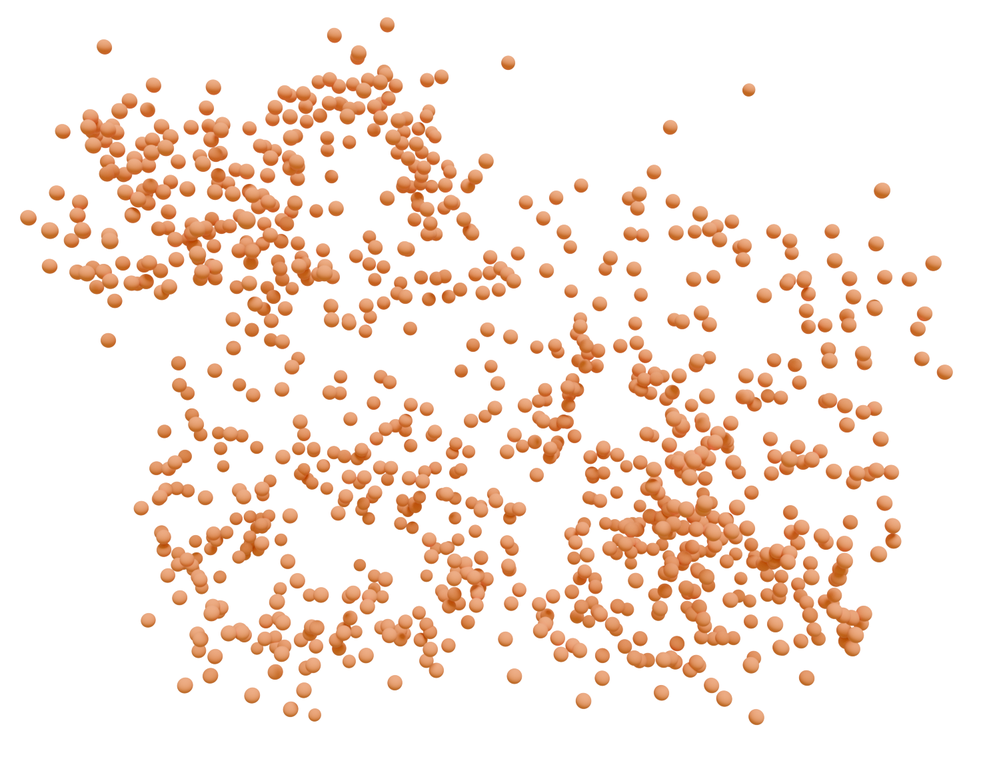} & \includegraphics[width=0.22\linewidth]{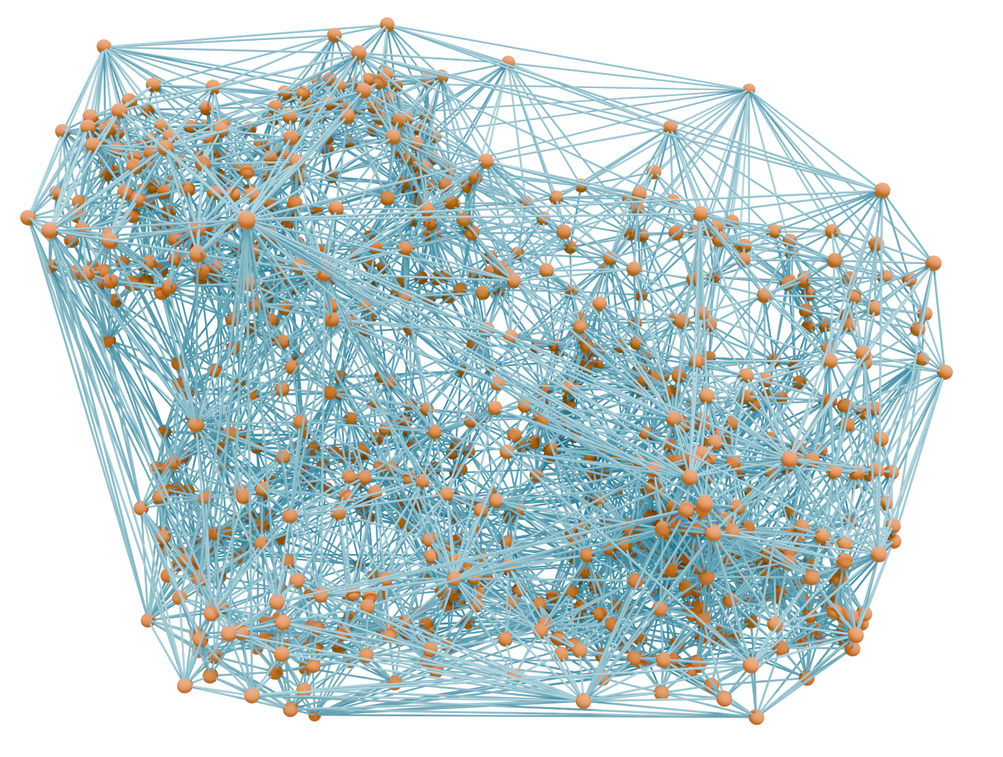} &
\includegraphics[width=0.22\linewidth]{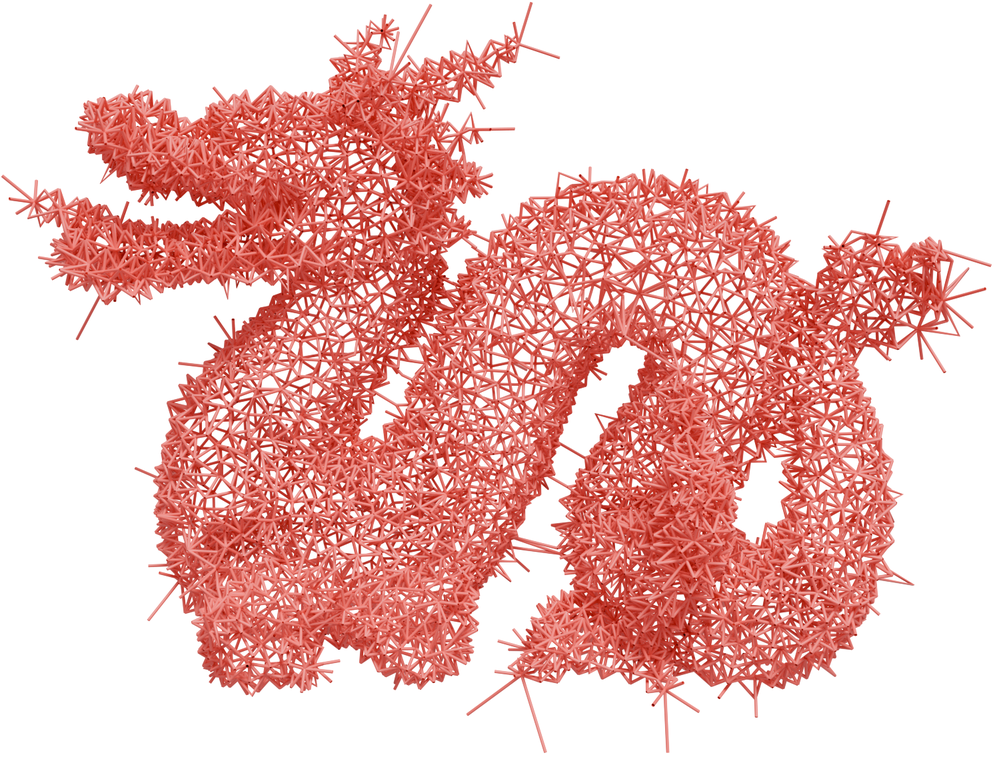} &
\includegraphics[width=0.22\linewidth]{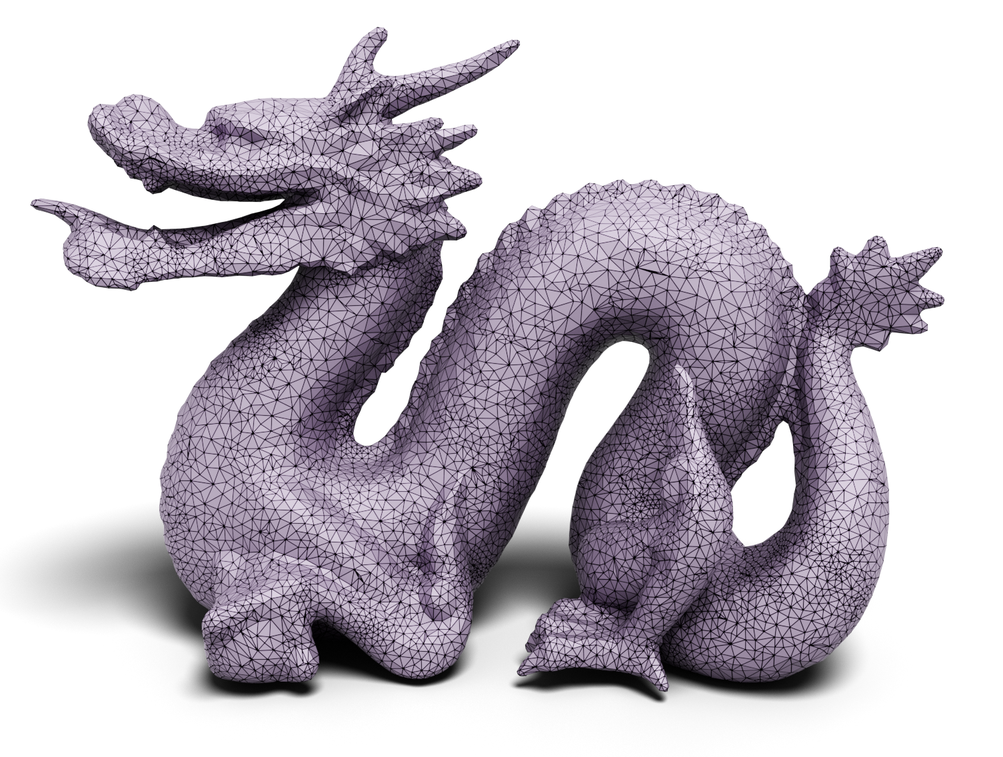}  \\
Point cloud & Delaunay triangulation & Active edges &  Mesh extraction \\
        \end{tabular}
    \caption{Illustration of our mesh extraction pipeline, which begins with a point cloud where each point is associated with a signed distance value. The process starts by generating a tetrahedral grid through Delaunay triangulation. Next, active edges are identified, and a directional signed distance is computed for each active point using spherical harmonics (\secref{sec:directional_sdf}). The final mesh is extracted using the Marching Tetrahedra algorithm (\secref{sec:mesh_extraction}). Our method iteratively refines the randomly initialized point cloud, distributing points to closely align with the target shape (\secref{sec:resampling}), which ensures scalability and adaptability. In this figure, only a portion of the point cloud and Delaunay Triangulation is displayed to enhance clarity.
    }
    \label{fig:pipeline_figure}
    \Description{Pipeline figure}
\end{figure*}
Many recent 3D applications require shape representations that are both expressive for artists and differentiable for optimization.
Differentiable shape representations enable appealing applications, such as shape generation \cite{gao2022get3d}, text-to-3D synthesis \cite{poole2022dreamfusion}, inverse rendering \cite{Munkberg_2022_CVPR} and geometric texture synthesis \cite{Liu:Paparazzi:2018,Hertz2020deep}. 
{An illustrative task using differentiable shape representations is surface reconstruction from multi-view images. In this task, the objective is to recover the 3D geometry of an object from a set of input views. A common approach is to optimize a shape representation to match the input views using gradient descent. This requires a differentiable renderer and, of interest to us, a differentiable shape representation.}
While triangle meshes are widely used in computer graphics as an efficient and practical shape representation and thus highly desirable, they are difficult to use in a differentiable setting, especially when the topology is not known a priori{, as is the case with surface reconstruction}. To address these limitations, considerable effort has been invested in developing representations capable of generating meshes for gradient-based optimization, rather than relying on direct mesh optimization.

A typical approach {in shape optimization} is to optimize a signed distance function and then generate a mesh using an isosurface extraction method, like Marching Cubes \cite{marchingcube}. This approach inherits limitations from the isosurface extraction technique. For instance, Marching Cubes is prone to staircasing artifacts. To overcome these challenges, works like DMTet~\cite{shen2021dmtet} and FlexiCubes~\cite{shen2023flexicubes} jointly optimize the implicit representation and the spatial grid structure from which the final mesh is extracted. We refer to these approaches as \emph{grid-adaptive isosurface representations}. Still, these approaches face issues, such as high memory consumption due to poor scaling properties, susceptibility to self-intersections and challenges with adaptive meshing across complex multi-scale grids.

{To address these issues,} we propose \ourmethod{}, a novel differentiable isosurface representation that allows joint optimization of the placement of an unstructured tetrahedral grid and the signed distance at each node of the grid. Contrary to previous methods like DMTet or FlexiCubes that rely on deforming a precomputed grid structure, we use Delaunay triangulation \cite{delaunay1934} on an arbitrary point cloud to generate a support structure (which we refer to as a \textit{grid}). This grid is used for isosurface extraction with Marching Tetrahedra \cite{doi_marchingtet}, which guarantees meshes to be watertight, 2-manifold and intersection-free. These are crucial properties for many downstream applications.

{The flexibility of TetWeave comes at a much lower cost than prior work using a predefined grid. For example, FlexiCubes~\cite{shen2023flexicubes} starts from a voxel grid, where each grid cell is equipped with parameters that adjust the position of the grid points and the placement of the resulting mesh's vertices. This flexibility requires a considerable number of parameters per grid cell: $21$ parameters per grid cell and $3$ per grid point. Moreover, because FlexiCubes starts from a voxel grid, the surface resolution scales poorly as the grid resolution increases. This results in a memory-intensive representation that inevitably fails to reconstruct high-frequency details (see \autoref{fig:teaser}). TetWeave does not rely on such a predefined grid structure. The grid points are free to move anywhere in ambient space and only require storing a position and the value of the signed distance function at that location.

To allow additional flexibility for the placement of mesh vertices, we introduce the notion of \emph{directional signed distance}, encoded with spherical harmonics coefficients at each point. This allows distinct implicit surface positions along different grid edges (unlike a single SDF per point), providing finer control over vertex placement during extraction. To ensure that running Marching Cubes on the resulting background grid produces high-quality triangulations with minimal sliver triangles, we propose} a simple loss function measuring fairness. Finally, our approach incorporates a resampling technique to refine details during optimization, enabling adaptive meshing tailored to customizable objectives, such as reducing reconstruction error. Thus, we achieve linear memory scaling relative to the resolution of the resulting meshes, while producing high-quality, highly detailed meshes with adaptive resolution, {as shown in \figref{fig:teaser}}. We demonstrate that this is useful in applications of gradient-based mesh optimization, such as multi-view 3D reconstruction, mesh compression and geometric texture generation.

Summarizing our main contributions:
\begin{itemize}
\setlength{\itemsep}{0pt}
    \setlength{\parskip}{0pt}
    \setlength{\parsep}{0pt}
    \item We use Marching Tetrahedra on Delaunay triangulations of arbitrary point clouds in gradient-based mesh optimization pipelines.
    \item A directional signed distance function to more accurately capture the distance to the surface along tetrahedral edges.
    \item A method to adapt the tetrahedral grid to an unknown surface.
    \item Two regularization terms to improve the quality of our meshes.
\end{itemize}

{We accentuate that \ourmethod{} is not designed as a general-purpose adaptive meshing technique or for isosurface extraction from fixed scalar fields. Rather, we propose a specialized representation optimized for gradient-based mesh processing, particularly suited for applications like multi-view 3D reconstruction.}
\section{Related work}
\label{sec:related_work}

In this section, we briefly review related work on differentiable surface representations, with a particular emphasis on isosurface extraction methods. We refer the reader to the survey by \citet{araujo_survey_poly} for a more extensive presentation. \tableref{tab:methodCriteria} provides a compact comparison of closely related isosurfacing methods. The approaches are compared according to the following criteria, inspired by \cite{shen2023flexicubes}:

\begin{itemize}
    \item \emph{Grad}: differentiation with respect to the generated mesh is possible and enables robust gradient-based pipelines.
    \item \emph{Sharp}: sharp features are correctly reconstructed.
    \item \emph{Fair}: the tessellation is fair, with few sliver triangles.
    \item \emph{Intersection-free}: the mesh is guaranteed not to self-intersect.
    \item \emph{2-manifold}: the topology is guaranteed to be two-manifold.
    \item \emph{Scalable resolution}: the number of vertices scales well with the number of parameters of the underlying representation.
\end{itemize}

\begin{figure}[t]
    \centering
    \small
    \setlength{\tabcolsep}{0pt}
    \begin{tabular}{c c}
    \includegraphics[width=.5\linewidth]{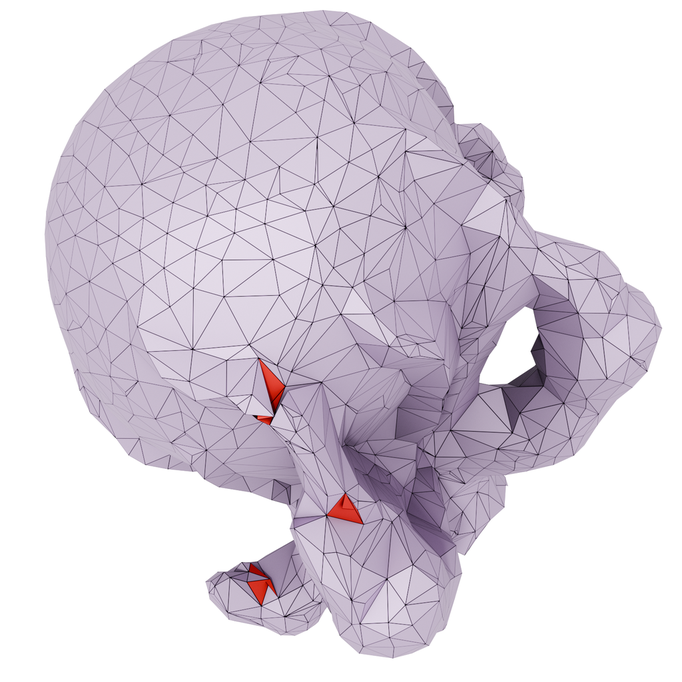} & 
    \includegraphics[width=.5\linewidth]{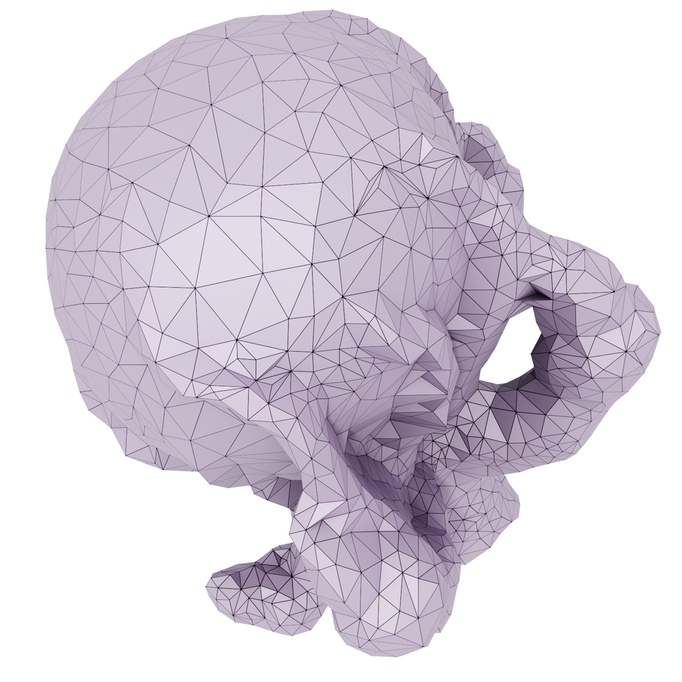} \\
    FlexiCubes - $128^3$ grid resolution & \ourmethod{} - $32$K points \\
    \end{tabular}
    \caption{Contrary to FlexiCubes \cite{shen2023flexicubes}, our method is guaranteed free from self-intersections.}
    \label{fig:self_intersections}
    \Description{TODO}
\end{figure}

\paragraph{Differentiable shape representations.}
Some methods differentiate directly on the positions of a (triangular) mesh. Such approaches require careful regularization to avoid degeneracy \cite{Nicolet2021Large} and are bound by the topology of the initialization \cite{wang2018pixel2mesh,hanocka_2020_point2mesh}. Mesh R-CNN \cite{gkioxari2020meshrcnn} uses a two-stage approach to support arbitrary topologies, first predicting the topology before optimizing the vertex positions of a template mesh. AtlasNet \cite{groueix2018atlasnetpapiermacheapproachlearning} predicts parametric patches to reconstruct a mesh, but the patches are not guaranteed to be connected.
MeshSDF \cite{remelli2020meshsdf} allows for the computation of gradients from the mesh extraction process.

Several learning methods reconstruct a mesh from a point cloud. PointTriNet \cite{sharp2020ptn} iteratively predicts triangle meshes, IERMeshing \cite{Liu_2020_IERMeshing} constructs the mesh based on the ratio between the geodesic and Euclidean distance, and
DSE \cite{Rakotosaona_2021_CVPR_DSE} builds a triangle mesh via 2D Delaunay triangulation of local projections. DeepDT \cite{LuoMT21DeepDT} and DMNet \cite{Zhang_2023_ICCV_DMNet} utilize Delaunay triangulation for point cloud meshing, employing neural networks to classify tetrahedra for surface reconstruction. These methods have difficulties ensuring that the resulting meshes are watertight or manifold. By introducing a continuous latent connectivity space at each vertex,
SpaceMesh \cite{spacemesh2024} trains a neural network that generates a watertight 2-manifold mesh from a point cloud, but cannot guarantee intersection-free outputs.

Neural implicit representations emerged as a powerful representation for continuously generating or interpolating shapes with various topologies. Early work represents shapes with a unique latent vector, and a neural network maps this latent code to a signed distance function (SDF) \cite{DeepSDF_Park_2019_CVPR,mescheder2019occupancynetworkslearning3d}. Many improvements over these methods involve latent code decomposition, being in a regular grid structure \cite{Peng2020ECCV} or a multiresolution grid \cite{takikawa2021nglod,mueller2022instant}, as a point cloud \cite{zhang20223dilg,petrov2024gem3d}, as a set of local grids \cite{Yariv_2024_CVPR,yang2024galageometryawarelocaladaptive}, as a set of 3D Gaussians \cite{SPAGHETTI:2022}, or as a general set \cite{3DShape2VecSetzhang2023} where the weight of the latent code is given via an attention mechanism, rather than spatial proximity. Other methods focus on learning parts of the shape with different frequencies via positional encoding schemes \cite{sitzmann2019siren,tancik2020fourfeat,SAPE:2021} or by learning a high-frequency displacement map on top of a base implicit shape \cite{yifan2022geometryconsistent}. 
Despite these improvements, the resulting representations still need to be converted to meshes via isosurface mesh extraction techniques for use in downstream applications or in optimization pipelines operating directly on meshes.\\

\begin{table}[b]
    \centering
    \scriptsize
\setlength{\tabcolsep}{2pt}
    \caption{Comparison of isosurface mesh extraction methods in three categories: classic isosurfacing methods are typically used on top of a sign distance field, neural methods uses a neural network to estimate the parameters of an isosurface extraction technique, while grid adaptive methods allow for joint optimization of the grid structure and the mesh extraction's parameters. Criteria are explained in \secref{sec:related_work}. 
    }
    \begin{tabular}{c@{\hspace{0pt}}cl@{\hspace{0pt}}cccccc}
        \toprule
        & & & \multirow{2}{*}{Grad} & \multirow{2}{*}{Sharp} & \multirow{2}{*}{Fair} & Intersection &  \multirow{2}{*}{2-manifold} & Scalable\\
        & & &  &  &  & Free & & Resolution \\  \midrule
        
        \multicolumn{2}{c}{\multirow{4}{*}{\rotatebox{90}{classic}}} & MC \cite{marchingcube} & \cmarkcol & \xmarkcol & \xmarkcol & \cmarkcol & \xmarkcol & \xmarkcol  \\
        & & Dual Contouring \cite{ju_dualcontouring_2002} & \xmarkcol & \cmarkcol & \xmarkcol & \xmarkcol & \xmarkcol & \xmarkcol \\
        & &  DMC—centroid \cite{nielson_dualmarchingcube_2004} 
 & \cmarkcol & \xmarkcol & \cmarkcol & \cmarkcol & \cmarkcol & \xmarkcol  \\
        & &  DMC—QEF \cite{schaefer2007manifolfdualcontouring} & \xmarkcol & \cmarkcol & \cmarkcol & \cmarkcol & \cmarkcol & \xmarkcol \\
        \midrule

        \multicolumn{2}{c}{\multirow{3}{*}{\rotatebox{90}{\scriptsize neural}}} & NMC \cite{chen2021nmc} & \cmarkcol & \cmarkcol & \xmarkcol 
        & \xmarkcol & \cmarkcol & \xmarkcol \\
        & & NDC \cite{chen2022ndc} & \xmarkcol & \cmarkcol & \cmarkcol & \cmarkcol & \xmarkcol & \xmarkcol  \\
        & & Voromesh \cite{Maruani_2023_ICCV} & \cmarkcol & \cmarkcol & \xmarkcol & \cmarkcol & \cmarkcol & \cmarkcol \\
        \midrule
        
        \multirow{3}{*}{\rotatebox{90}{\scriptsize grid}} & \multirow{3}{*}{\rotatebox{90}{\scriptsize adaptive}} & DMTet \cite{shen2021dmtet} & \cmarkcol & \cmarkcol & \xmarkcol & \cmarkcol & \cmarkcol & \xmarkcol \\
       & & FlexiCubes \cite{shen2023flexicubes} & \cmarkcol & \cmarkcol & \cmarkcol & \xmarkcol & \cmarkcol & \xmarkcol \\
        & & \ourmethod{} \textbf{(Ours)} & \cmarkcol & \cmarkcol & \cmarkcol & \cmarkcol & \cmarkcol & \cmarkcol \\
        \bottomrule
    \end{tabular}
    \label{tab:methodCriteria}
\end{table}

\textit{Isosurface mesh extraction }
is the process of generating a surface mesh from an implicit function, typically by evaluating an SDF on a regular grid \cite{marchingcube} or a tetrahedral grid \cite{doi_marchingtet}. Vertices of the mesh are placed on the edges of the background grid and their connectivity is based on a lookup table. While widely used, this approach fails to reconstruct sharp features and produces unfair tessellations. 
Dual Contouring \cite{ju_dualcontouring_2002} leverages the dual of an octree grid and optimizes vertices inside cubes based on normals to recover sharp features better, but produces non-manifold, self-intersecting meshes. \citet{schaefer2007manifolfdualcontouring} also use an optimization setting within a Dual Marching Cube formulation by optimizing quadratic error functions (QEF). Placing vertices at the face centroids ensures the differentiability of the isosurface extraction but comes at the expense of flexibility, which is essential for accurately reconstructing sharp features \cite{nielson_dualmarchingcube_2004}. Marching Triangles \cite{marchingtriangles} iteratively expands a surface mesh from a starting vertex by enforcing a Delaunay property at each newly added triangle. DelIso \cite{DelaunayIsosurfacingLevine2007} proposes a two-stage strategy that extracts a coarse surface from an initial 3D Delaunay triangulation and refines it with a surface-restricted Delaunay triangulation. However, the method is prone to artifacts when the isosurface exhibits sharp edges. 
More recently, \citet{Sellan2023RFTS,Sellan2024RFTA} introduce techniques that improve reconstruction by leveraging tangency information derived from signed distance values to extract additional insights from point clouds.

Some methods leverage the use of a neural network to predict the parameters of a regular structure from which they can generate a mesh. DefTet \cite{gao2020deftet} predicts the deformation and the occupancy value of a regular tetrahedral grid, Deep Marching Cubes \cite{Liao2018CVPR} and Neural Marching Cubes \cite{chen2021nmc} learn the vertex positions and mesh topologies in a regular grid, Neural Dual Contouring (NDC) \cite{chen2022ndc} predicts the edge crossing and the vertex locations, then extracts a mesh via dual contouring, but can produce non-manifold surfaces. Voromesh \cite{Maruani_2023_ICCV} learns the position of Voronoi cell generators and then extracts the mesh from the boundary of the occupied cells, but produces many small facets.

\begin{figure}[t]
    \center
    \includegraphics[width=0.9\linewidth]{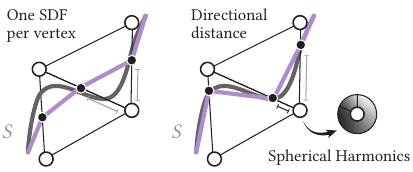}
    \caption{Storing a single SDF value at each point can lead to inaccurate surface reconstruction (left), where a directional distance encoded as spherical harmonics allows our method to place mesh vertices differently on each tetrahedral edge, resulting in a more accurate shape reconstruction (right).}
    \label{fig:directional_sdf}
    \Description{TODO}
\end{figure}

{Methods for adaptive grid generation in isosurface extraction are well-established. While most approaches combine grid subdivision with local refinement to preserve grid quality \cite{LiuAdaptive1995,Hui1999TetAdaptive,Kim2000PolygonizationON,Bey1995TetrahedralGR,Ju2024Adaptive}, their process follows purely geometric criteria. Because our method focuses on mesh optimization, we refine the grid using optimization-derived error signals, typically from 2D inputs, and use an energy-based approach to produce a fair tessellation. Delaunay triangulation has been used in the context of isosurface extraction techniques with adaptive grids. \citet{Zhao2021progressivedomains} progressively refine a 3D Delaunay triangulation to enable coarse-to-fine isosurface extraction. Their method aims at reconstructing a mesh from a point cloud, while \ourmethod{} is designed for gradient-based mesh optimization and reconstruction pipelines. As such, their refinement is driven by a pure geometric analysis based on local curvature estimation, surface smoothness and grid fairness, which can cause their reconstruction to struggle on fine details. In contrast, our error-based refinement strategy is more robust to this limitation and can generalize to different meshing targets.} McGrids \cite{mcgrids2024} (Monte-Carlo grids) generates a mesh by extracting it from the Delaunay triangulation of an iteratively grown point cloud. Although their approach starts with a similar concept to ours, their objectives and respective contributions differ significantly: McGrids start from a given SDF and develop contributions for sampling the SDF using a Monte-Carlo method. We jointly optimize the SDF and the background grid for an unknown shape. This presents unique challenges in optimization, as this setting is less constrained, but offers opportunities to influence the mesh quality and adaptivity through application-specific metrics.

To overcome the limitations of a fixed background grid {for gradient-based mesh optimization}, DMTet \cite{shen2021dmtet} and FlexiCubes \cite{shen2023flexicubes} jointly optimize the implicit representation and the spatial grid structure from which they extract the final mesh. While FlexiCubes can achieve adaptive meshing by leveraging an octree structure instead of a uniform grid, this approach requires specific constraints to reduce the occurrence of non-manifold edges--without guaranteeing their complete removal. Furthermore, FlexiCubes does not provide a principled method to determine where the voxel grid should be refined. These \emph{grid-adaptive isosurface representations} are closest to our work. We focus on removing some of the limitations of these methods, such as high memory consumption due to poor scaling properties and over-parameterized flexibility, susceptibility to self-intersections {(see \figref{fig:self_intersections})} and challenges with adaptive meshing. 

\section{Shape representation}
\label{sec:shape_representation}

{We start with a description of the differentiable shape representation of \ourmethod. In \autoref{sec:reconstruction}, we show how this representation is used in an optimization pipeline.} \ourmethod's shape representation is strikingly simple: it consists of a point cloud $P = \{p_1, p_2, \ldots, p_n\}$, with $p_i \in \mathds{R}^3$, and each point is associated with a base signed distance value $s_i \in \mathds{R}$ and a feature vector $\bm{c}_i \in \mathds{R}^q$. From this representation, our algorithm constructs a surface mesh in three steps, illustrated in \autoref{fig:pipeline_figure}:
\begin{enumerate}
    \item From $P$, compute a tetrahedral grid $(P, T)$ using Delaunay triangulation.
    \item For the endpoints of edges whose base signed distances have opposite signs, compute the corresponding directional signed distance $\hat{s}_i(e)$ from $s_i$ and $\bm{c}_i$. We call such edges \emph{active edges}.
    \item Use Marching Tetrahedra to extract the surface mesh $(V, F)$.
\end{enumerate}

\begin{figure}[t]
    \centering
    \footnotesize
    \setlength{\tabcolsep}{0pt}
    \begin{tabular}{cccc}
    \includegraphics[width=0.35\linewidth]{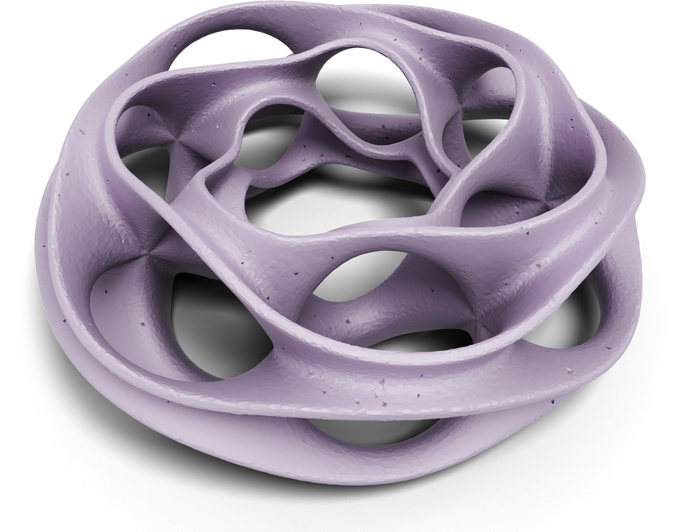}&   
    \includegraphics[width=0.12\linewidth]{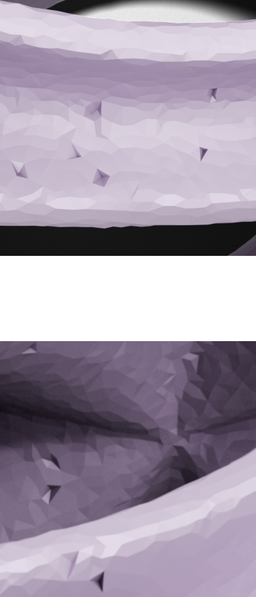} & 
    \includegraphics[width=0.35\linewidth]{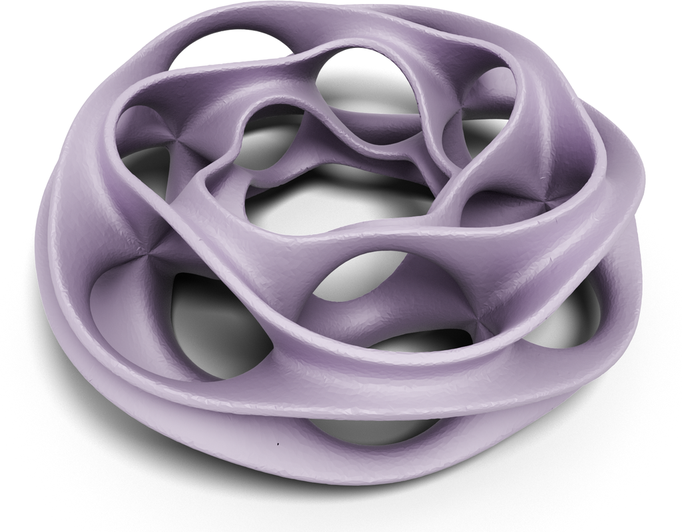}  &   
    \includegraphics[width=0.12\linewidth]{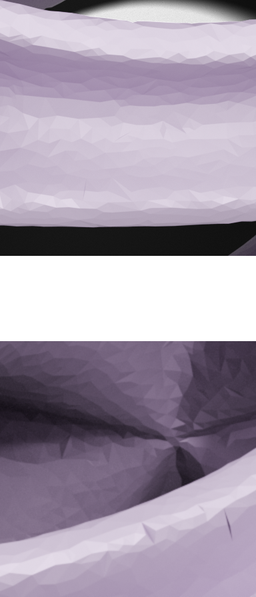} \\
    \multicolumn{2}{c}{No Spherical Harmonics} &
    \multicolumn{2}{c}{Degree 2 SH}
    \end{tabular}
    \caption{{We compare the reconstructed meshes with and without the use of directional signed distance (spherical harmonics).} Using spherical harmonics enhances detail preservation, particularly for shapes with complex topology, where a single grid point can influence multiple parts of the shape.}
    \label{fig:sh_degree_comparison}
\end{figure}

\subsection{Directional signed distance}
\label{sec:directional_sdf}

While \ourmethod{} can generate high-quality meshes with only one signed distance per point, this representation can be suboptimal (see \autoref{fig:directional_sdf}). During the Marching Tetrahedra step, we require the distance along the edges of the tetrahedral grid to place the mesh vertices. These distances differ per edge and storing one distance results in a compromise between all active edges connected to a point. Therefore, we define a \emph{directional} signed distance function $\hat{s}$ to gain more flexibility in positioning mesh vertices. This approach enables our method to position vertices uniquely based on the specific edge (\figref{fig:directional_sdf}), better aligning the local normal with the target surface and effectively preventing artifacts when edges of the tetrahedral grid span separated regions of the mesh (\figref{fig:sh_degree_comparison}).

Our directional signed distance is encoded as spherical harmonics (see \citet{wieczorek2018shtools} for an accessible reference). Each point $p_i$ is associated with a base sign distance $s_i$ and spherical harmonic coefficients $\bm{c}_i \in \mathds{R}^q$ with $q = (d+1)^2$, where $d$ is the desired degree of the spherical harmonics representation. Given an active edge $e$ connecting points $p_i$ and $p_j$, we define the directional signed distance at point $p_i$ for edge $e$ as
\begin{equation}
\label{eq:directional_SDF}
\hat{s}_i(e) = \left( 1+\tanh \left(SH\left(\theta_{i \rightarrow j}, \phi_{i \rightarrow j}, \bm{c}_i \right) \right)\right) s_i,
\end{equation}
where $\theta_{i \rightarrow j}, \phi_{i \rightarrow j}$ are the polar and azimuthal angle of the vector $p_j - p_i$ and $SH(\theta, \phi, \bm{c})$ evaluates the spherical harmonics at the given polar angles for coefficients $\bm{c}$. Note that our formulation enforces that $\hat{s}_i(e)$ has the same sign as $s_i$ for every edge $e$ and takes values in $\left( 0, 2s_i \right)$. These two properties align our directional signed distance with Marching Tetrahedra while allowing for more flexibility.

{We choose spherical harmonics (SH) over alternatives, such as gradient vectors at grid points, for the following reasons: SH are evaluated as a linear combination of basis functions, which is simple to differentiate; by optimizing only low-frequency spherical harmonics, we can enforce a smooth directional function; and it is straightforward to expand the degrees of freedom by adding more coefficients. This can be useful to handle cusps in the SDF (e.g., points between the dragon’s neck and body in \figref{fig:pipeline_figure}).}

\subsection{Mesh extraction}
\label{sec:mesh_extraction}

\ourmethod{} extracts the final mesh using a variant of the Marching Tetrahedra algorithm.
We call \emph{active points} the points of $(P, T)$ that belong to at least one active edge. Each active edge contains exactly one point of the extracted mesh $(V, F)$: given two points $(p_1, p_2)$ from an active edge $e$, the corresponding vertex is given by 
\begin{equation}
\label{eq:directionalMTet}
    v_e = \frac{\hat{s}_2(e) p_1 - \hat{s}_1(e)p_2}{\hat{s}_2(e) - \hat{s}_1(e)}.
\end{equation}

The surface connectivity $F$, is retrieved from a lookup table, based on the sign of the edges, which is the same lookup table as for regular Marching Tetrahedra. {Configurations are displayed in \figref{fig:marching_tets_configurations}.}

\section{Optimization Pipeline}
\label{sec:reconstruction}

{A typical optimization task for \ourmethod{} is multi-view 3D reconstruction, which we use to illustrate the optimization components of our method. Most of the components described in this section could operate in any optimization pipeline where gradients are required. We make clear along the way when a component is specific to multi-view 3D reconstruction.}

\subsection{Multi-view 3D Reconstruction Objective}
\label{sec:multiview_objective}
{
For multi-view 3D reconstruction, we are given a set of input views and corresponding camera intrinsic and extrinsics. We aim to recover a mesh $(V, F)$, representing the surface geometry of the captured object. The set of input views can be photographs captured in the wild (\autoref{sec:photogrammetry}) or, in the case of our validation experiments in \autoref{sec:results}, rendered masks $M_{\text{gt}}$, depth maps $D_{\text{gt}}$, and normal maps $N_{\text{gt}}$ of a given ground-truth mesh $(V_\text{gt}, F_\text{gt})$. We render the mesh resulting from \ourmethod{} with a differentiable rasterizer~\cite{Laine2020diffrast} and compute a loss in image space. We then use autograd to compute the gradients of the loss function w.r.t. the parameters in \ourmethod{} (point positions $p_i$ and their coefficients $s_i$ and $\mathbf{c}_i$) and use gradient descent to minimize the loss function.

For our multi-view 3D reconstruction experiments in \autoref{sec:results}, we use the following loss function (the weights $\lambda_M, \lambda_D, \lambda_N$ are detailed in \autoref{sec:implementation_details})
\begin{equation}
\label{eq:reconstruction}
\begin{split}
\mathcal{L}_{\text{recons}} =\ &\lambda_{\text{M}} \Vert M - M_{\text{gt}} \Vert + \lambda_{\text{D}} \Vert M_{\text{gt}} (D - D_{\text{gt})} \Vert^2 \\ &+ \lambda_{\text{N}} \Vert M_{\text{gt}} (N - N_{\text{gt})} \Vert^2.
\end{split}
\end{equation}
}

\subsection{Regularization}
\label{sec:regularization}
{
We propose to use several regularizers that enhance the quality of the mesh output. These regularizers can be used for any application.
}
Because the grid topology in TetWeave is not fixed, our representation allows full flexibility over the position of the grid points while ensuring that the grid is embedded (i.e., there are no self-intersections). This flexibility allows us to use regularizers that improve the quality of the resulting mesh.
Marching Tetrahedra generally produces meshes of low quality, containing many sliver triangles. This is particularly visible with DMTet. 
We propose two regularization terms in our gradient-based mesh optimization pipelines to encourage fair triangulations.

\subsubsection*{Optimal Delaunay triangulations} {We would like to encourage tetrahedra  in the background grid to be uniform and well-conditioned, which should lead to better elements in the output triangular mesh. Mesh-quality metrics have been extensively studied in the FEM community \cite{shewchuk:hal-04614934}. Our tetrahedral regularizer adopts} the Optimal Delaunay Triangulation (ODT) energy from this literature, which minimizes interpolation error across all possible triangulations for a fixed vertex set \cite{ODTChen2004}. {This is well-suited to our approach because 
\ourmethod{} uses Delaunay triangulation to construct its grid on-the-fly. Since ODT identifies the Delaunay connectivity as optimal for a given point set, it ensures consistency with our pipeline. Furthermore,} \citet{alliez2005variationaltetmeshing} {show that ODT generates well-shaped tetrahedra and} derive a formulation that can be used in an optimization setting for vertex positions, {aligning with our goal of producing fair, adaptive meshes without additional connectivity overhead.}
The ODT energy for a tetrahedron $T_i$ is expressed as $ E_{\text{ODT}}(T_i) = \vert M_{S_{T_i}} - M_{T_i} \vert $, where $ M_{T_i} $ represents the sum of the principal moments of $T_i$ relative to its circumcenter. Similarly, $ M_{S_{T_i}} $ denotes the moment of inertia of $S_{T_i}$, defined as the spherical shell matching the circumsphere of $T_i$ and having an equivalent mass. We define an ODT loss for the tetrahedral grid $(P, T)$ as
\[
\textstyle \mathcal{L}_{\text{ODT}} = \sum_{T_i \in T}{\vert M_{S_{T_i}} - M_{T_i} \vert}.
\]
Minimizing this energy reduces distortions in tetrahedral shapes, leading to more uniform and well-conditioned triangulations. We provide more information regarding implementation and show that this energy vanishes for a regular tetrahedron in {\appref{sec:odt_proof}}.

\begin{figure}
    \centering
    \footnotesize
    \setlength{\tabcolsep}{0pt}
    \begin{tabular}{cccc}
    \includegraphics[width=0.35\linewidth]{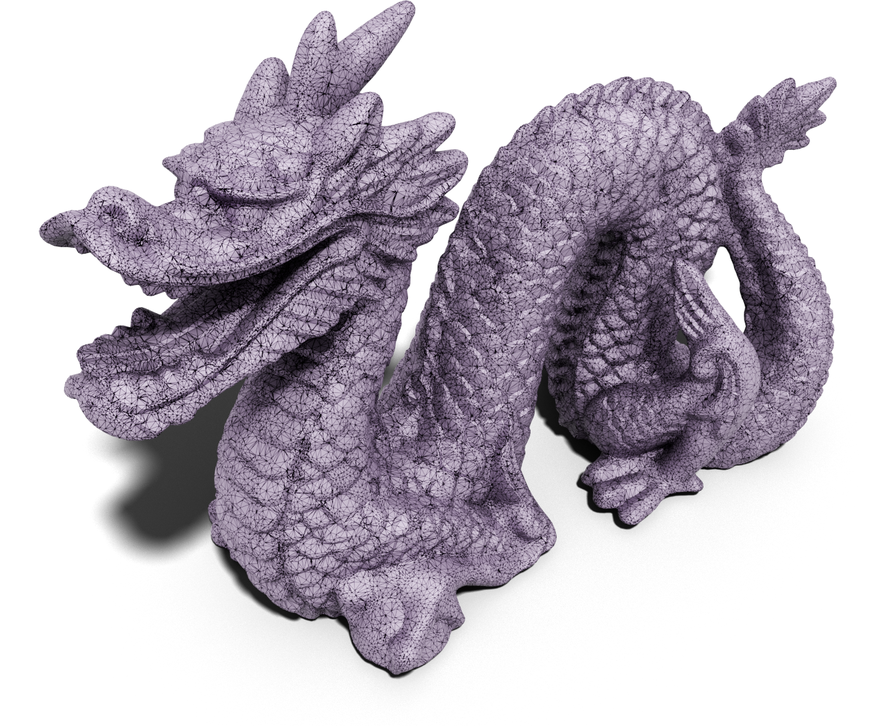} &   
    \includegraphics[width=0.12\linewidth]{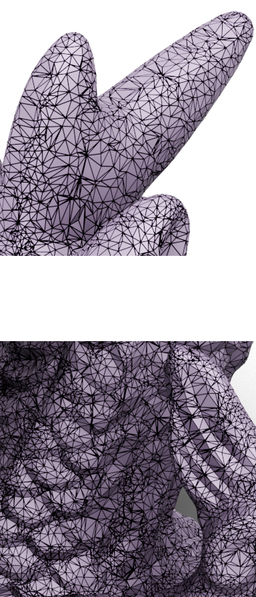} & 
    \includegraphics[width=0.35\linewidth]{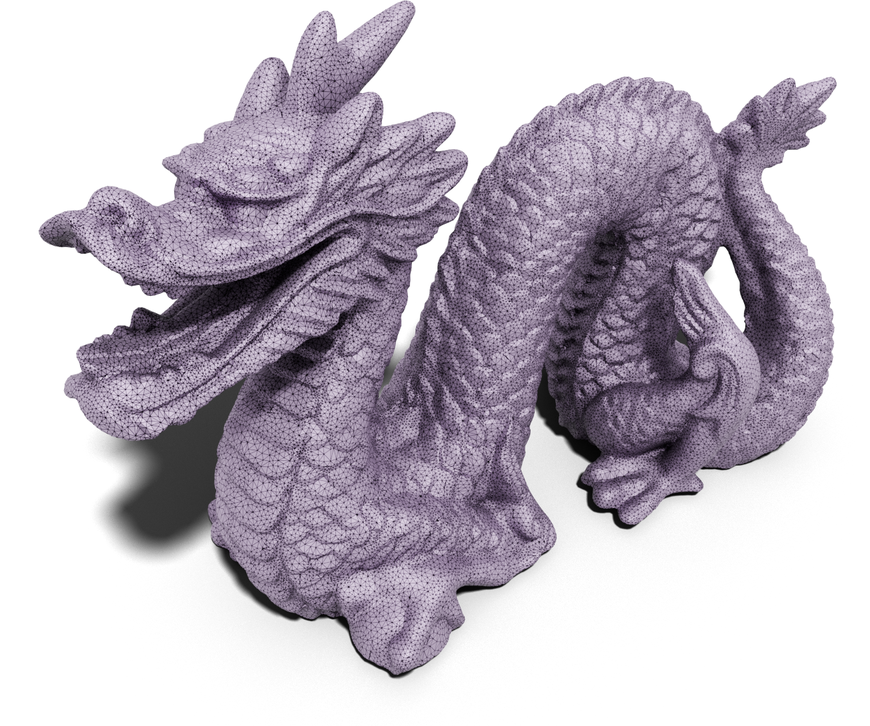}  &   
    \includegraphics[width=0.12\linewidth]{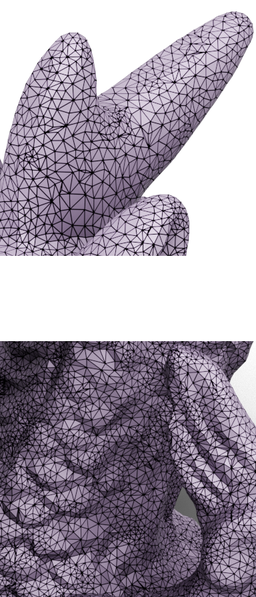} \\
    \multicolumn{2}{c}{No fairness loss} &
    \multicolumn{2}{c}{Fairness loss}
    \end{tabular}
    \caption{{Comparison of reconstructed meshes with and without the use of the fairness loss.} We show that incorporating a simple fairness loss improves tessellation quality without compromising shape fidelity.}
    \label{fig:fairness_loss_comparison}
\end{figure}

\subsubsection*{Triangle fairness loss} We want to encourage the formation of uniform triangles on the extracted mesh. Therefore, we encourage equilateral triangles by penalizing angles that deviate from $\frac{\pi}{3}$:
\[
\textstyle \mathcal{L}_{\text{fairness}} = \sum_{f \in F}{\frac{1}{3}\sum_{i=1}^3{\left( \theta_i - \tfrac{\pi}{3} \right)^2}}.
\]
As shown in \figref{fig:fairness_loss_comparison}, this effectively encourages a fair tessellation.

\subsubsection*{Sign change regularizer} Similar to FlexiCubes \cite{shen2023flexicubes}, our method is prone to spurious geometry in unsupervised parts of the space. We adopt the same regularizer and penalize
sign changes of the base sign distance value for every active edge:
\[
\textstyle \mathcal{L}_{\text{sign}} = \sum_{(a, b) \in E_A}{H\left( \sigma(s_a), \text{sgn}(s_b) \right)},
\]
where $E_A$ designates the set of active edges of the tetrahedral grid, $H$ is the cross-entropy function, and $\sigma$ the sigmoid function.

\subsection{Point cloud refinement}
\label{sec:resampling}

\begin{figure}[t]
    \center
    \includegraphics[width=0.8\linewidth]{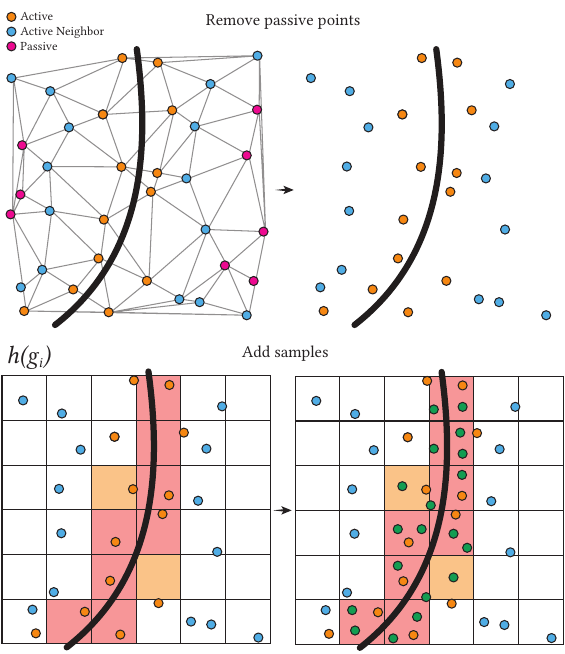}
    \caption{Point cloud refinement is done in two steps. First, we identify and remove points that do not neighbor an active point (passive points). Next, we voxelize the space around the mesh and estimate $h$. Then we sample points within the voxels according to $h$. Finally, the sdf is interpolated to the new points based on a barycentric average of the containing tetrahedron.}
    \label{fig:refinement}
    \Description{TODO}
\end{figure}

{Since \ourmethod{} does not rely on a fixed grid, we can adaptively place points where they are necessary.}
Our approach for sampling focuses on resampling points that are not connected to any active points, referred to as \emph{passive points} (see \figref{fig:refinement}, top row), as they do not contribute to the optimization. In addition to resampling passive points, we incrementally introduce new points where they matter most during the optimization.
We devise a sampling strategy illustrated in \figref{fig:refinement}, bottom row. Given the current mesh $(V, F)$, we compute a bounding box of the mesh, which is subdivided into a voxel grid $\mathcal{G}$. Each voxel $g_i \in \mathcal{G}$ is assigned an importance value $h(g_i)$, which we normalize over the voxel grid to get the probability distribution $\rho(g_i)$.
Hereby, $\rho(\mathcal{G}) = \{ \rho(g_i) \vert g_i \in \mathcal{G}\}$ defines a probability distribution over the grid $\mathcal{G}$.

Given $K$ points to sample, we sample $k_i$, the number of points to sample in voxel $g_i$, from a multinomial distribution parameterized by $\rho(\mathcal{G})$ with $K$ trials. We then randomly sample $k_i$ points in the voxel $g_i$. {For each sampled point, we determine its barycentric coordinates within the containing tetrahedron $T$ and initialize its signed distance and spherical harmonics coefficients via barycentric interpolation of $T$’s vertex parameters.}
{As exemplified in \figref{fig:scaling_ablation}}, we can define $h$ according to specific requirements. For example, uniform sampling is achieved by setting $h(g_i) = 1$ for voxels intersecting the mesh and $h(g_i) = 0$ otherwise. In the following section, we introduce a rendering-based method to estimate $h$. This enables our approach to dynamically adapt the mesh resolution.

Methods that use a predefined grid often use a constant level of detail for every region in space. This leads to significant memory inefficiency when reconstructing surfaces, which sparsely occupy 3D space. With a dense grid representation, much of the space remains unused, and the scaling of output vertices is suboptimal.
For instance, in the case of FlexiCubes, doubling the grid resolution results in a cubic increase of the number of parameters, but only a quadratic increase in the number of output vertices. As a result, FlexiCubes scales poorly and requires using an octree-based datastructure to recover high-frequency details on meshes. DMTet can address this issue by subdividing tetrahedra close to the current mesh but lacks a systematic approach for doing so.

\subsection{Adaptive meshing}
\label{sec:adaptive}

We propose an importance function $h$, suitable for inverse rendering applications, {such as multi-view 3D reconstruction,} showcased in \figref{fig:adaptive_meshing}. The underlying strategy -- to use error as a guide for sampling -- can be applied outside of inverse rendering. We suppose that we have a current mesh $(V, F)$ produced by \ourmethod{}. Let $\mathcal{I}$ be a rasterizer that takes as input camera parameters $\theta_k$ for view $k$. We assume that target images $\tilde{I}_k$ are given, e.g., photographs or renders of an object we aim to reconstruct. For each pixel $(u, v)$, we compute the error $E_k(u, v) = \vert \tilde{I}_k(u, v) - \mathcal{I}(V, F, \theta_k)(u, v) \vert$. Next, each pixel is projected onto the current mesh in $\mathbb{R}^3$ using rasterization. The error values for all pixels that land in voxel grid cell $g_i$ are accumulated over all views and then normalized to compute $h(g_i)$
\begin{equation}
\label{eq:importance}
    h(g_i) = \frac{\sum_k\sum_{(u, v) \in \mathcal{N}_{g_i}} E_k(u, v)}
    {|\mathcal{N}_{g_i}|},
\end{equation}
where $\mathcal{N}_{g_i}$ is the set of all pixels $(u, v)$ that are contained in voxel cell $g_i$. Pixels that land outside the mesh are ignored.
Next to resampling passive points, we progressively add points to our shape representation. Therefore, the resulting meshes adapt to the importance function $h$. \figref{fig:adaptive_meshing_comparison} illustrates an example that demonstrates that denser tessellation occurs in high-frequency areas, enhancing reconstruction in those regions.

\begin{figure}[t]
    \center
    \includegraphics[width=1.0\linewidth,trim={15 65 0 35
    },clip]{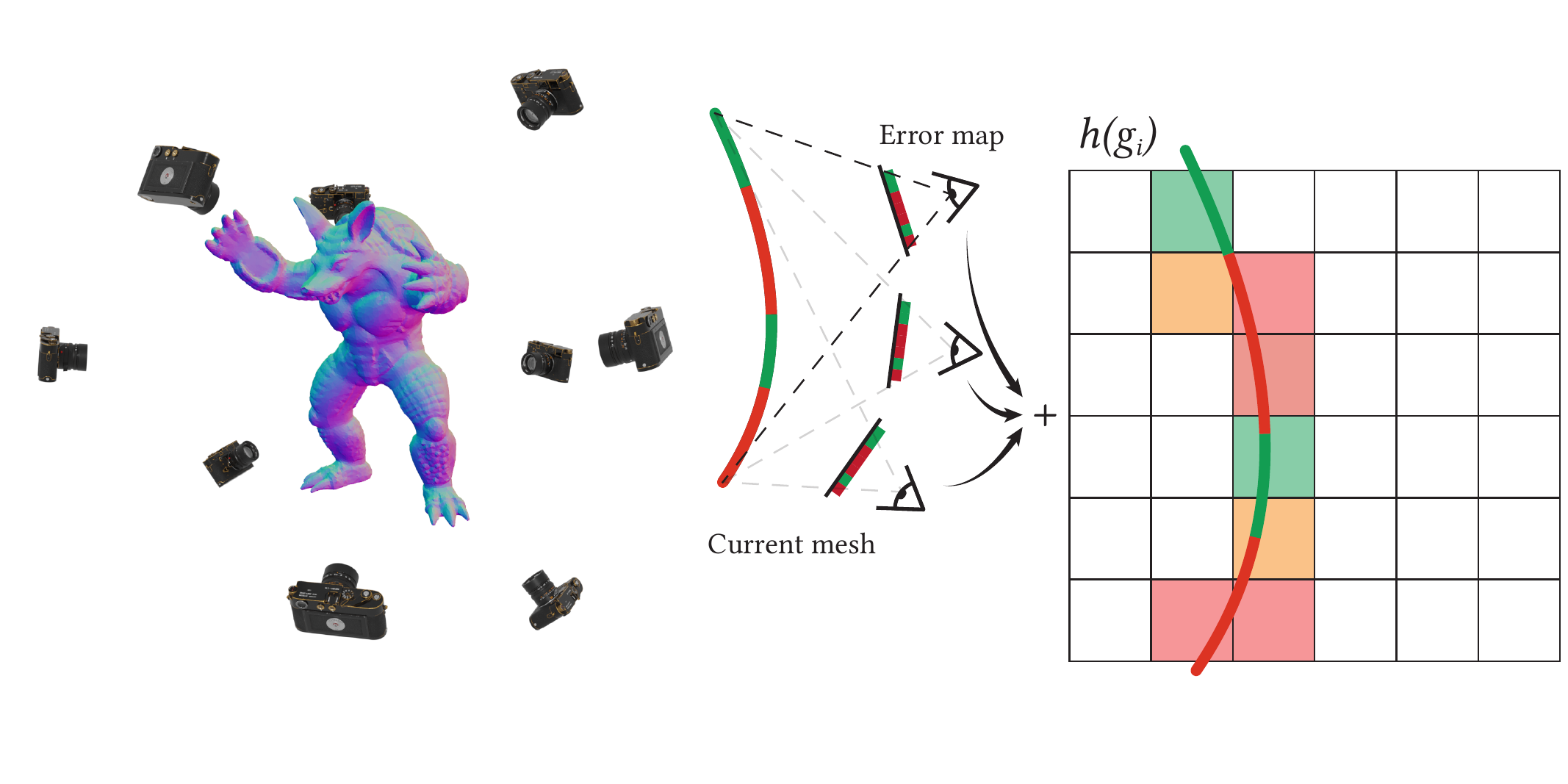}
    \caption{Our refinement strategy can be applied to adaptive meshing. By rendering the shape from multiple viewpoints, we compute the error for each pixel. Since each visible pixel corresponds to a point on the shape's surface, we accumulate these errors within the voxel that contains the corresponding point. We normalize the accumulated errors across all voxels to define the importance value function $h$.}
    \label{fig:adaptive_meshing}
\end{figure}

\subsection{Multi-stage optimization}
\label{sec:stages}

\begin{figure}
    \centering
    \footnotesize
    \setlength{\tabcolsep}{0pt}
    \begin{tabular}{cccc}
    \includegraphics[width=0.35\linewidth]{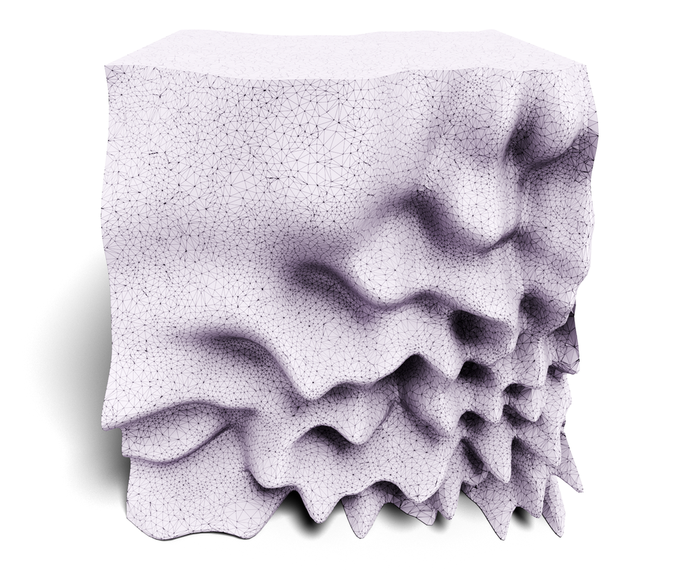} &   
    \includegraphics[width=0.12\linewidth]{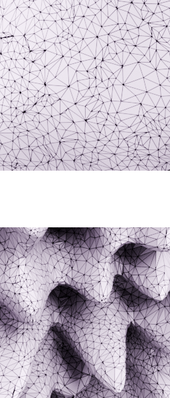} & 
    \includegraphics[width=0.35\linewidth]{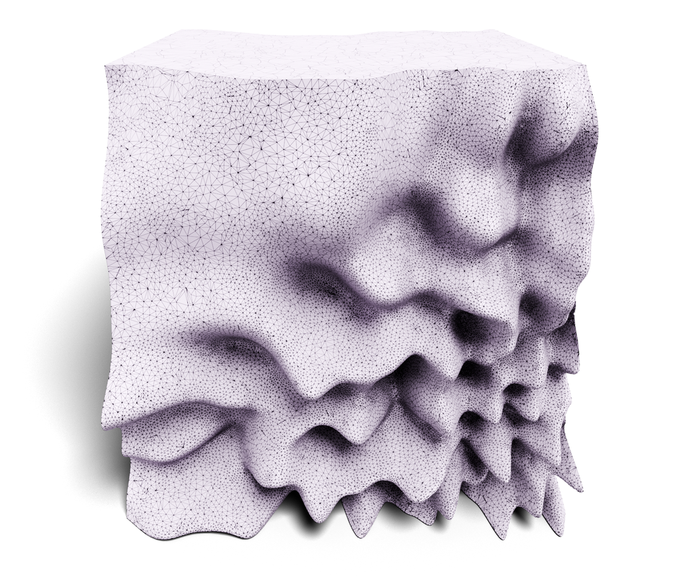}  &   
    \includegraphics[width=0.12\linewidth]{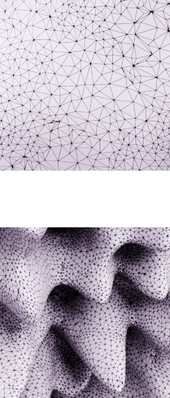} \\
    \multicolumn{2}{c}{Uniform meshing} &
    \multicolumn{2}{c}{Adaptive meshing}
    \end{tabular}
    \caption{{Comparison between uniform and adaptive meshing.} Resampling based on normal map error enables curvature-adaptive sampling, concentrating more triangles in high-curvature regions, which enhances reconstruction quality.}
    \label{fig:adaptive_meshing_comparison}
\end{figure}

We propose a multi-stage optimization for \ourmethod. The increased flexibility of our method and the higher grid resolution around the reconstructed surface results in more degrees of freedom. As reported by the authors of NVDiffRec (\citet{hasselgren2022nvdiffrecmc}, section 8.5) and observed in our own experiments, these degrees of freedom could lead to noisy geometry when optimized naively. Another motivation to implement a multi-stage approach is to limit the number of Delaunay triangulation calls, {which improves the overall speed}.

During the \emph{main stage} {(5000 iterations)}, we update both grid point positions and SDF values, and apply all regularizers. The number of grid points is incrementally increased via resampling until a target point number is reached. As the number of points grows, recomputing the Delaunay Triangulation becomes computationally expensive. To mitigate this, we only recompute the Delaunay Triangulation {every $m$ iterations by default, and keep point positions fixed between updates. We sum point positions' gradients over these $m$ iterations before updating them, which ensures that the grid remains non-degenerate while incorporating information from multiple viewpoints, akin to gradient accumulation in machine learning pipelines.}

We also set up a \emph{late stage} {(2000 iterations)} which acts as a fine-tuning stage to ensure higher-fidelity results. Point positions are fixed, and the Delaunay triangulation is no longer recomputed. The optimization focuses solely on refining the signed distance values and spherical harmonics coefficients. We also do not use the ODT and triangle fairness regularizers. As shown in \figref{fig:multistage_comparison}, the late stage incurs a significant boost in reconstruction quality.

\begin{figure}
    \centering
    \footnotesize
    \setlength{\tabcolsep}{0pt}
    \begin{tabular}{cccc}
    \includegraphics[width=0.35\linewidth]{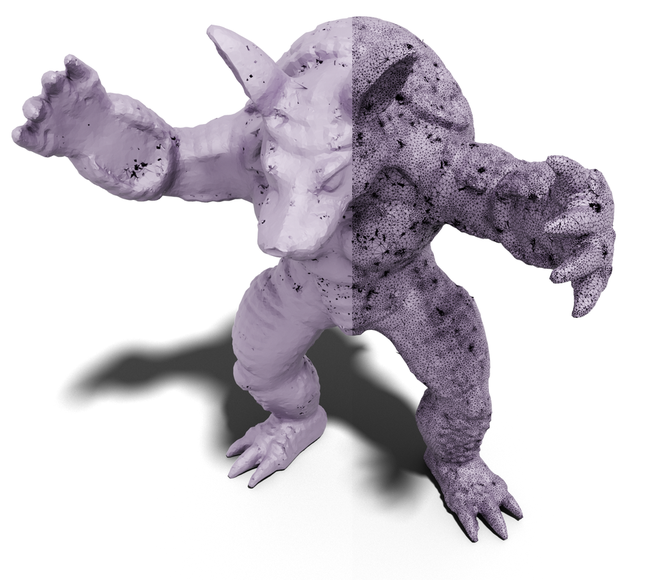} &  
    \includegraphics[width=0.12\linewidth]{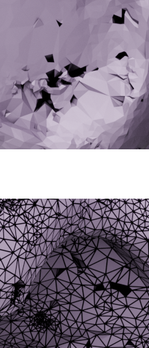} & 
    \includegraphics[width=0.35\linewidth]{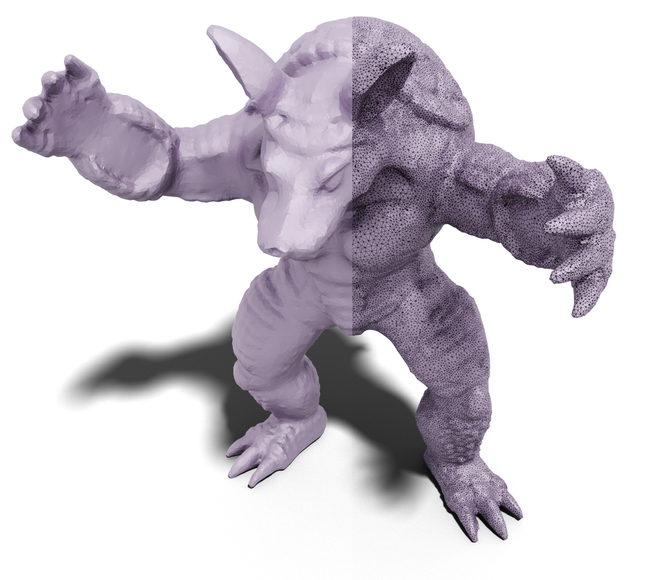}  &   
    \includegraphics[width=0.12\linewidth]{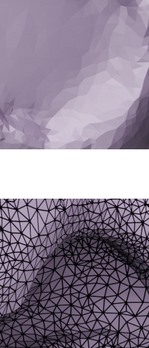} \\
    \multicolumn{2}{c}{No multi-stage training} &
    \multicolumn{2}{c}{Multi-stage training}
    \end{tabular}
    \caption{{Comparison between the resulting meshes with and without a multi-stage pipeline during optimization.} Recomputing the Delaunay triangulation only periodically and fixing the grid to focus on SDF optimization in a second stage is necessary to avoid artifacts.}
    \label{fig:multistage_comparison}
\end{figure}

\begin{figure*}[ht]
{
\centering
\footnotesize
\setlength{\tabcolsep}{0pt}
\begin{tabular}{cccccc}
DMTet \cite{shen2021dmtet} & FlexiCubes \cite{shen2023flexicubes} & \ourmethod{} & \ourmethod{} & \ourmethod{} & \multirow{2}{*}{Reference} \\
$128^3$ grid & $128^3$ grid & 16K points & 64K points & 128K points\\
  \includegraphics[width=0.16\linewidth]{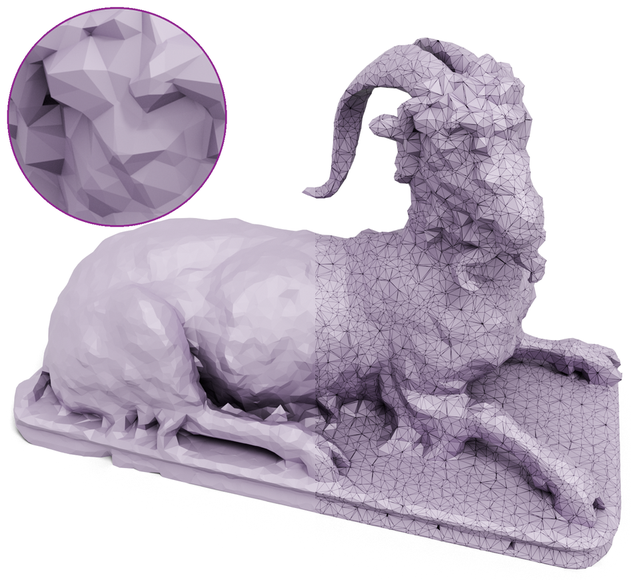}
  & \includegraphics[width=0.16\linewidth]{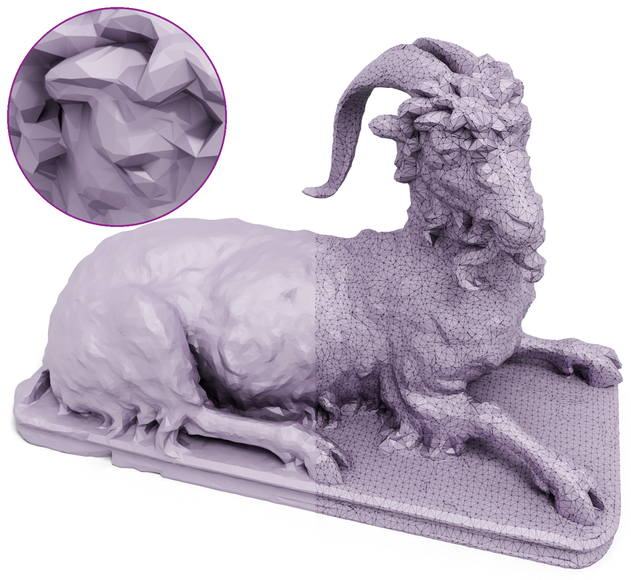}
  & \includegraphics[width=0.16\linewidth]{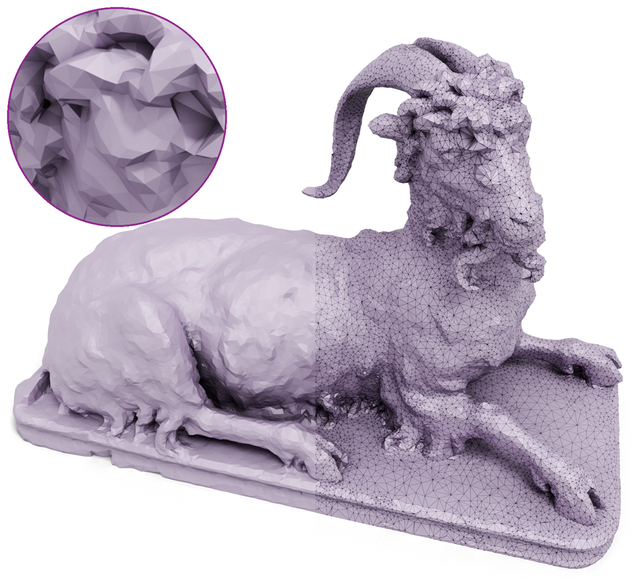}
  & \includegraphics[width=0.16\linewidth]{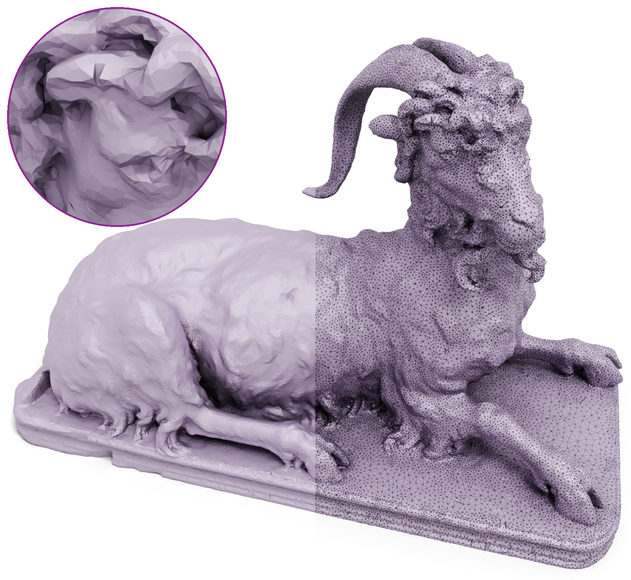}
  & \includegraphics[width=0.16\linewidth]{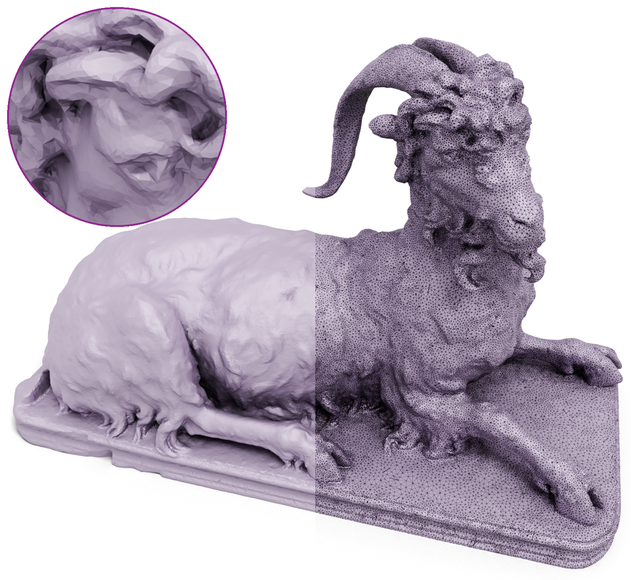}
  & \includegraphics[width=0.16\linewidth]{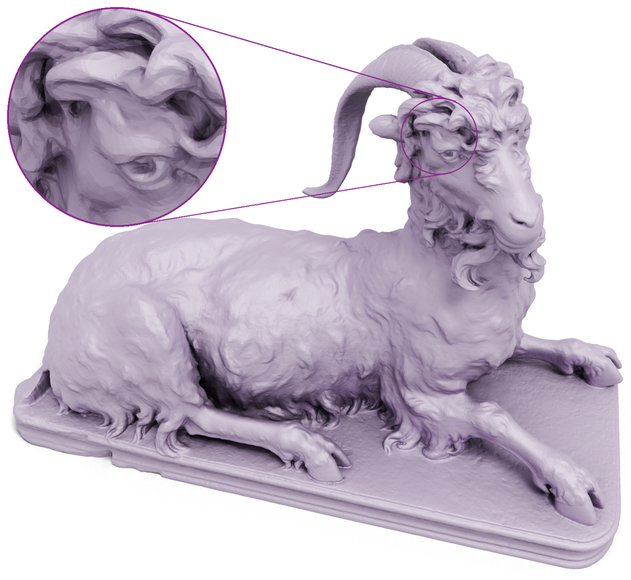} \\

  \includegraphics[width=0.16\linewidth]{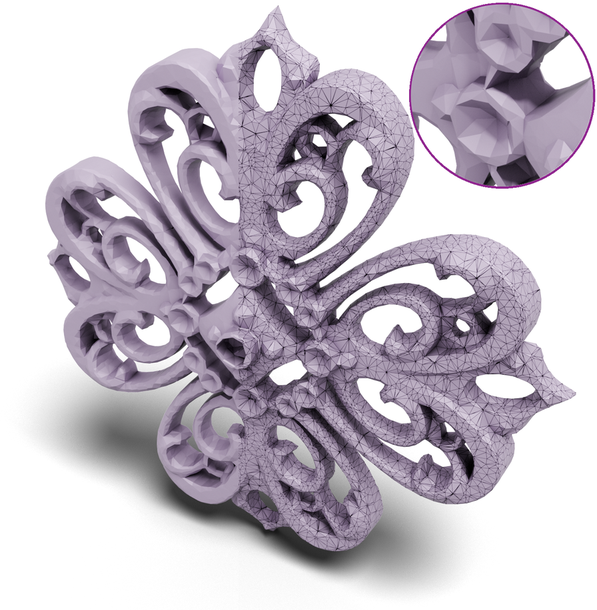} & \includegraphics[width=0.16\linewidth]{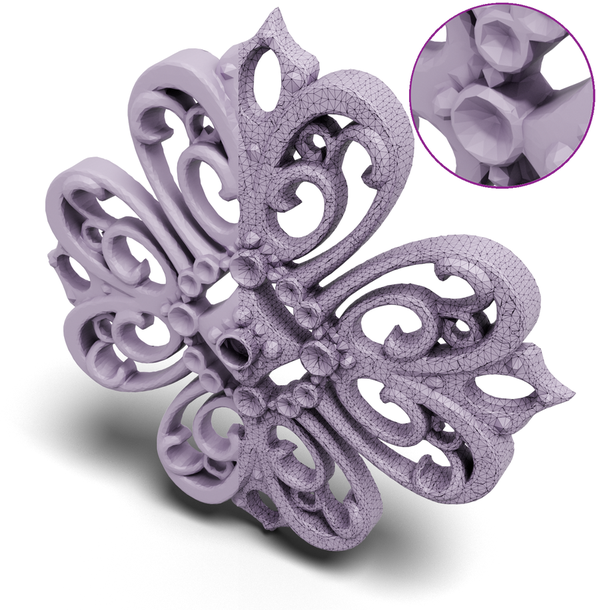} & \includegraphics[width=0.16\linewidth]{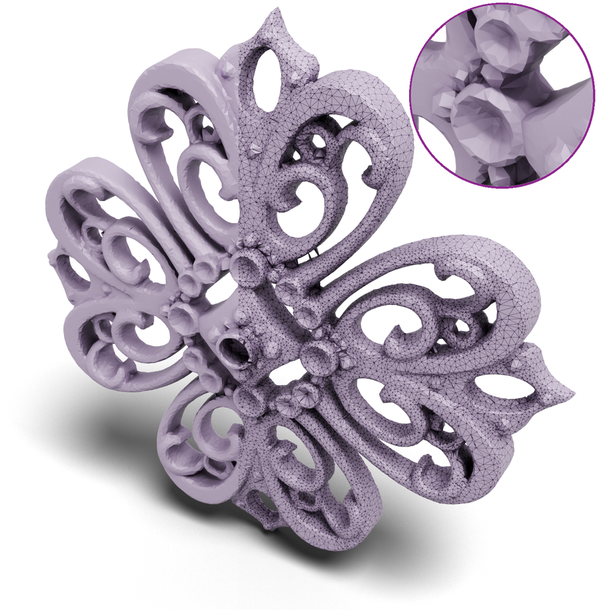} & \includegraphics[width=0.16\linewidth]{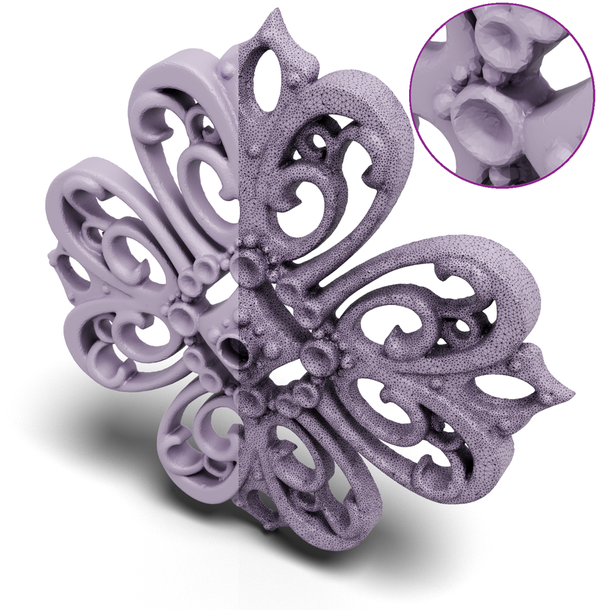} & \includegraphics[width=0.16\linewidth]{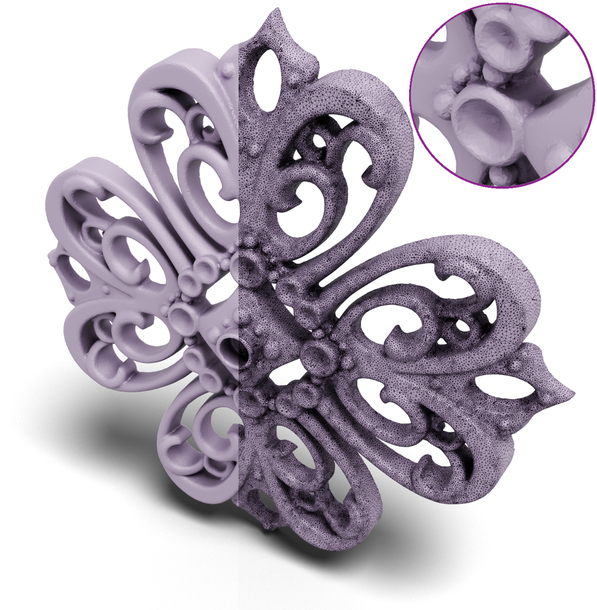} & \includegraphics[width=0.16\linewidth]{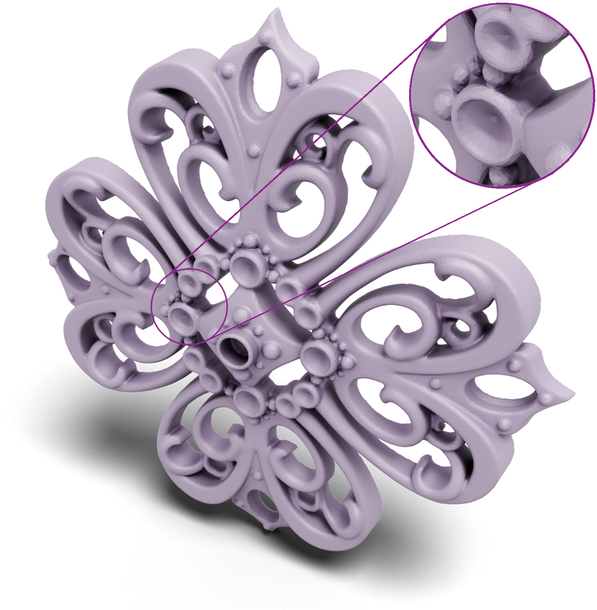}\\

  \includegraphics[width=0.16\linewidth]{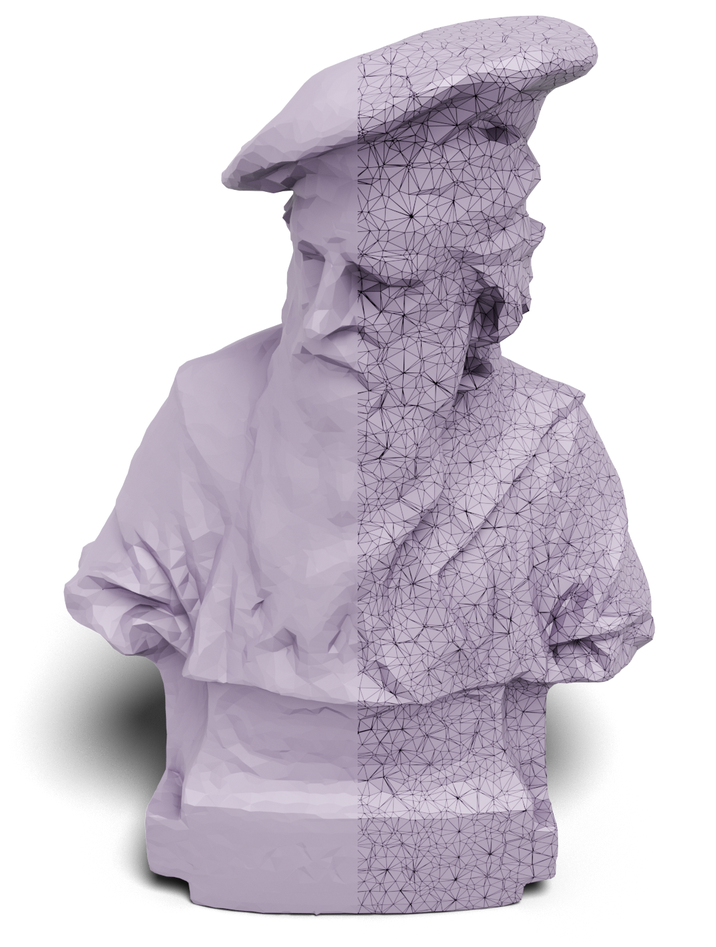} & \includegraphics[width=0.16\linewidth]{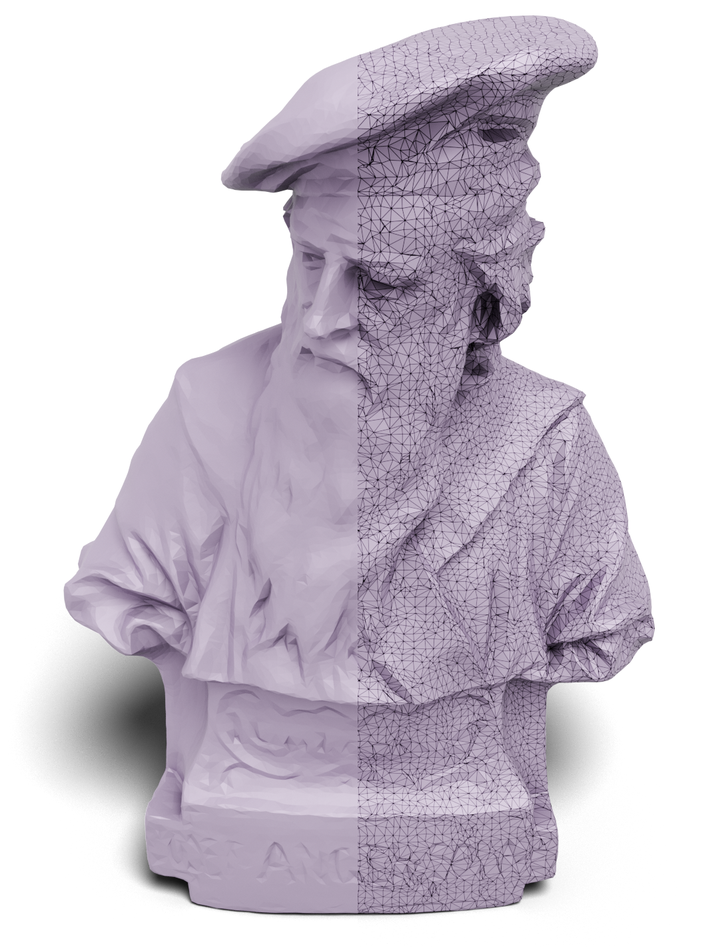} & \includegraphics[width=0.16\linewidth]{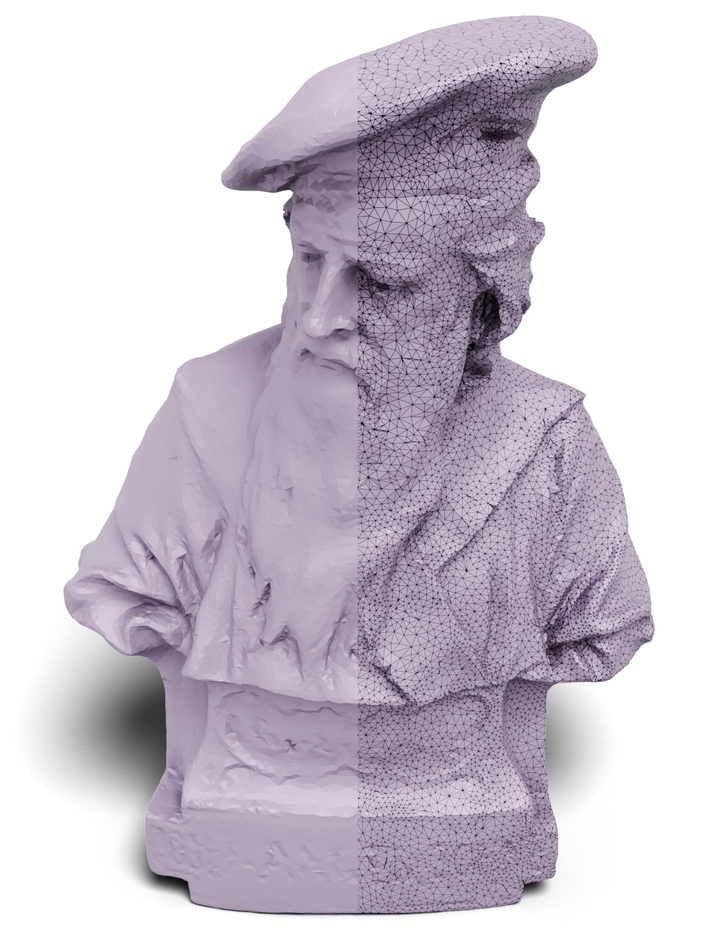} & \includegraphics[width=0.16\linewidth]{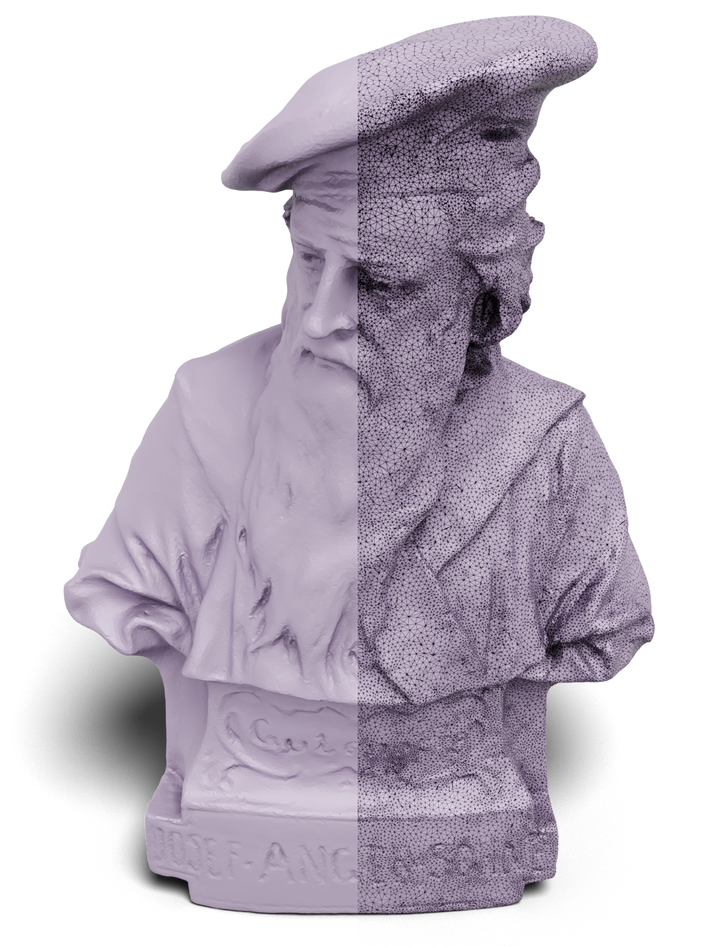} & \includegraphics[width=0.16\linewidth]{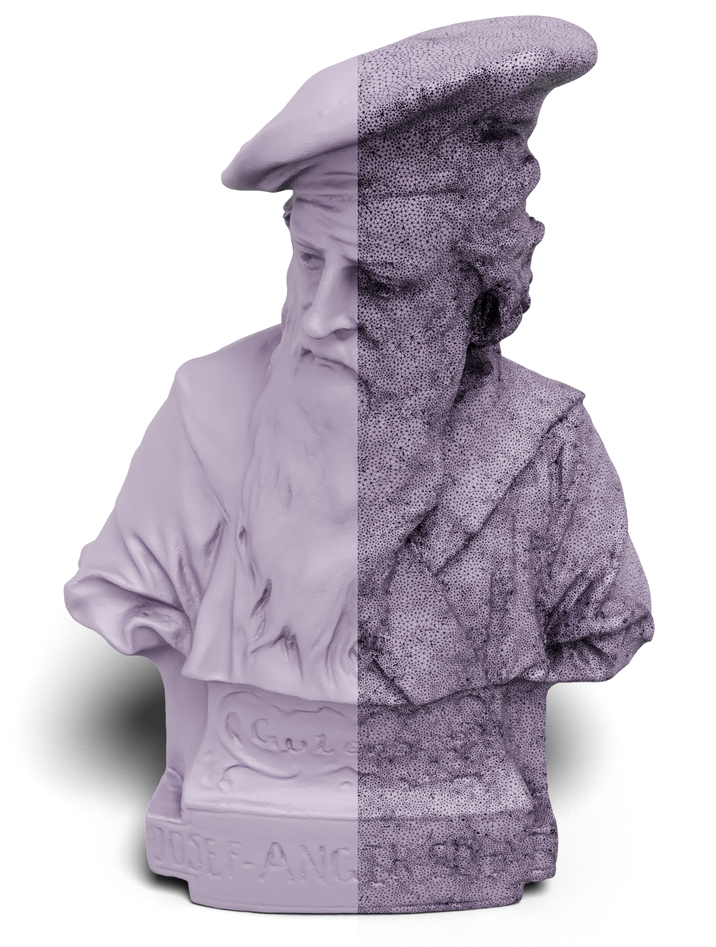} & \includegraphics[width=0.16\linewidth]{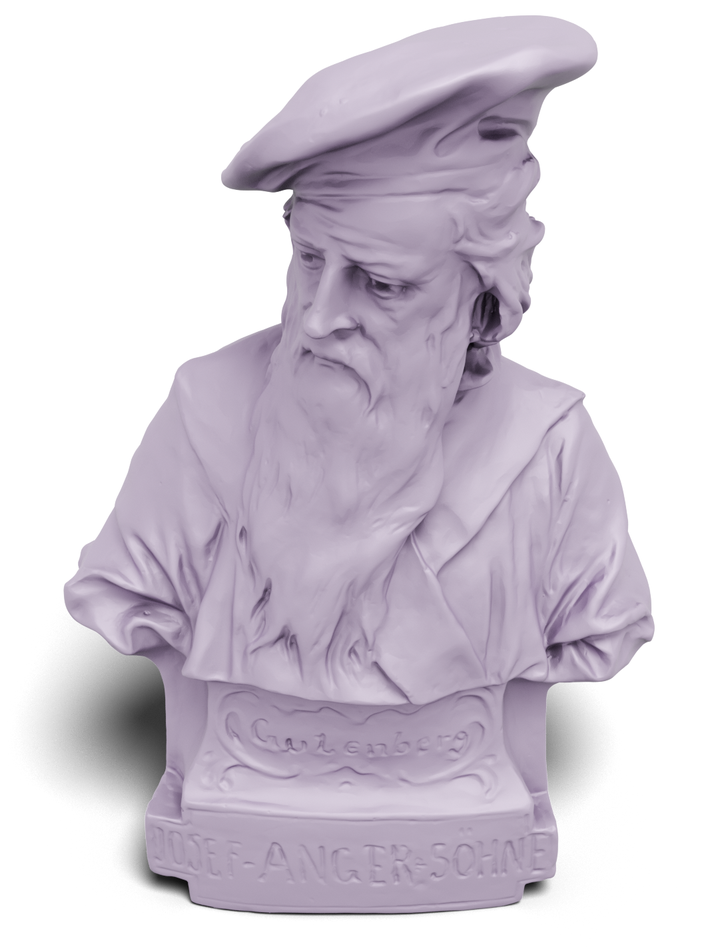}\\

  \includegraphics[width=0.16\linewidth]{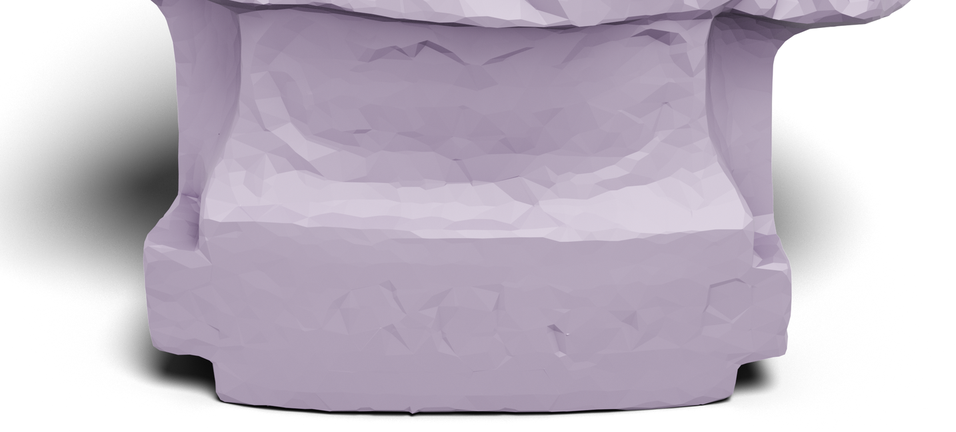} & \includegraphics[width=0.16\linewidth]{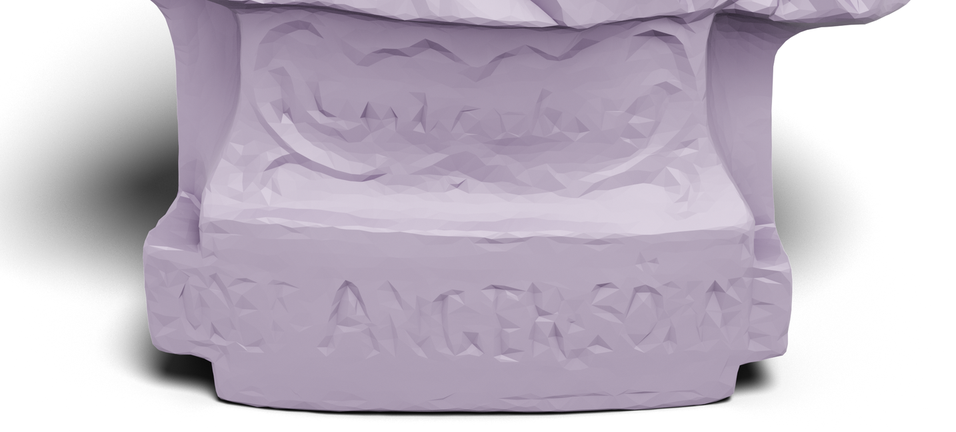} & \includegraphics[width=0.16\linewidth]{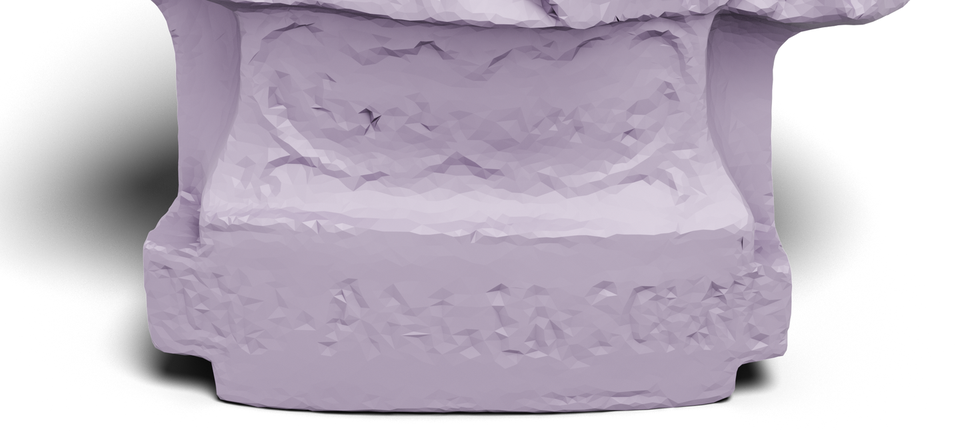} & \includegraphics[width=0.16\linewidth]{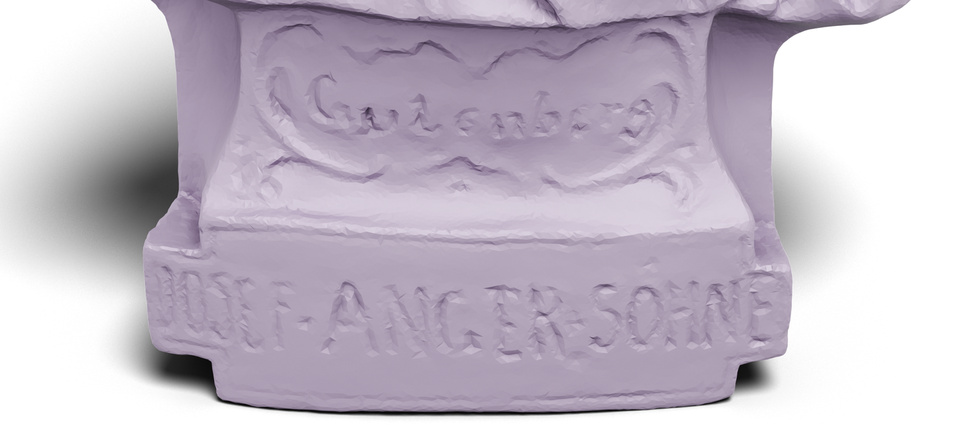} & \includegraphics[width=0.16\linewidth]{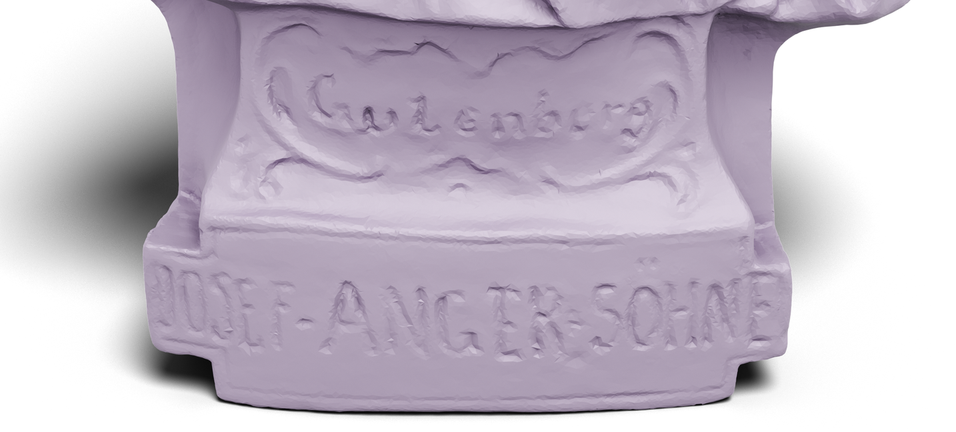} & \includegraphics[width=0.16\linewidth]{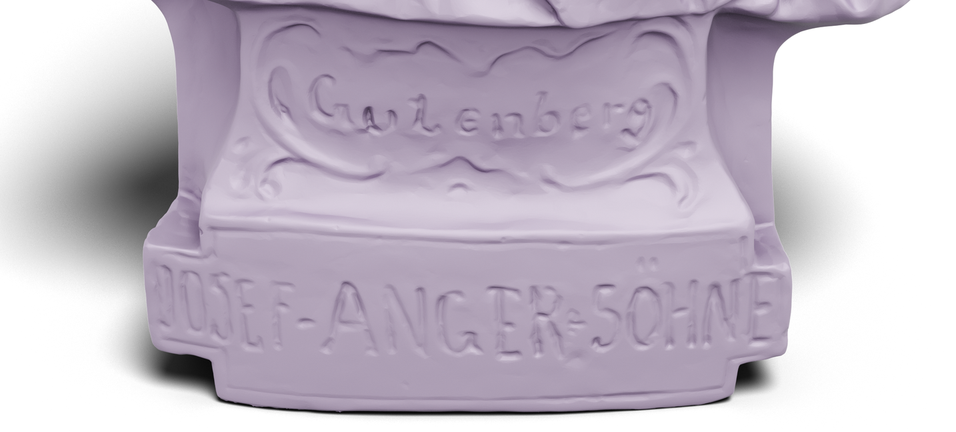}
\end{tabular}
\caption{
Visual comparison of different grid-adaptive isosurface mesh extraction methods, following the experimental setup described in \secref{sec:results}: DMTet, FlexiCubes and \ourmethod{} evaluated at low, mid and high resolutions. Our method achieves high-quality reconstruction across a wide variety of shapes, including highly detailed statues and objects with complex topologies, and excels in capturing intricate high-frequency details, such as the writing on the Gutenberg statue's socle. Furthermore, our approach provides fairer tessellation. We recommend zooming in on a digital display.
}
\label{fig:visual_comparisons}
\Description{Visual comparisons figure.}
}
\end{figure*}

\begin{table*}[t!]
\centering
\scriptsize
\setlength{\tabcolsep}{3pt}
\caption{Quantitative evaluation on the mesh reconstruction task. We sample 1 million points per shape in our dataset. We report the chamfer distance (CD, 1e-5), the F1 score, the edge chamfer distance (ECD, 1e-2), its F1 score (EF1), the normal consistency (NC), the percentage of inaccurate normals (IN$>5^{\circ}$), the percentage of triangles with aspect ratio (AR) and radius ratio (RR) above $4$, the percentage of small angles (SA$<10^{\circ}$) and the percentage of self-intersecting faces (SI). After the generation, only the largest connected component is kept. An expanded table is provided in the appendix.}
\resizebox{1.0\linewidth}{!}{
\begin{tabular}{l c c c c c c c c c c c c}
\toprule
Method & CD $\downarrow$ & F1 $\uparrow$ & ECD $\downarrow$ & EF1 $\uparrow$ & NC $\uparrow$ & IN$>5^{\circ}$(\%) $\downarrow$ & AR$>4$(\%) $\downarrow$ & RR$>4$(\%) $\downarrow$ & SA$<10^{\circ}$(\%) $\downarrow$ & SI(\%) $\downarrow$ & \#V & \#F \\
\midrule

DMTet ($128^3$) & 1.043 & 0.339 & 1.681 & 0.272 & 0.965 & 48.393 & 12.026 & 11.826 & 12.351 & \textbf{0.000} & 20677 & 41364 \\
FlexiCubes ($128^3$)  & 0.752 & 0.416 & 1.254 & 0.393 & 0.979 & 36.911 & 5.418 & 6.701 & 4.588 & 0.203 & 28430 & 56873 \\ 
\ourmethod{} (16K) & 0.517 & 0.409 & 1.475 & 0.353 & 0.974 & 43.380 & 2.350 & 3.344 & 1.743 & \textbf{0.000} & 26484 & 53015 \\
\ourmethod{} (64K) &  0.419 & 0.446 & 0.962 & 0.518 & 0.984 & 33.700 & \textbf{2.251} & \textbf{3.252} & \textbf{1.616} & \textbf{0.000} & 81027 & 162102\\
\ourmethod{} (128K) & \textbf{0.393} & \textbf{0.455} & \textbf{0.708} & \textbf{0.588} & \textbf{0.987} & \textbf{29.361} & 2.507 & 3.556 & 1.829 &\textbf{0.000} & 146514 & 293074 \\
\bottomrule
\end{tabular}
}
\label{fig:metrics}
\end{table*}

\section{Results}
\label{sec:results}

{This section assesses the ability of our shape representation to reconstruct a mesh from unambiguous image input, as discussed in \secref{sec:multiview_objective}.} In our experiments, the ground truth signed distance function is unavailable; the reconstruction is entirely inferred through an inverse rendering pipeline. {We conclude this section with a detailed ablation over several key components of our optimization pipeline.} Practical applications under real-world conditions are explored in \secref{sec:applications}.

\begin{figure*}[ht!]
    \centering
    \small
    \setlength{\tabcolsep}{0pt}
    \begin{tabular}{c c c}
    \includegraphics[width=.33\linewidth]{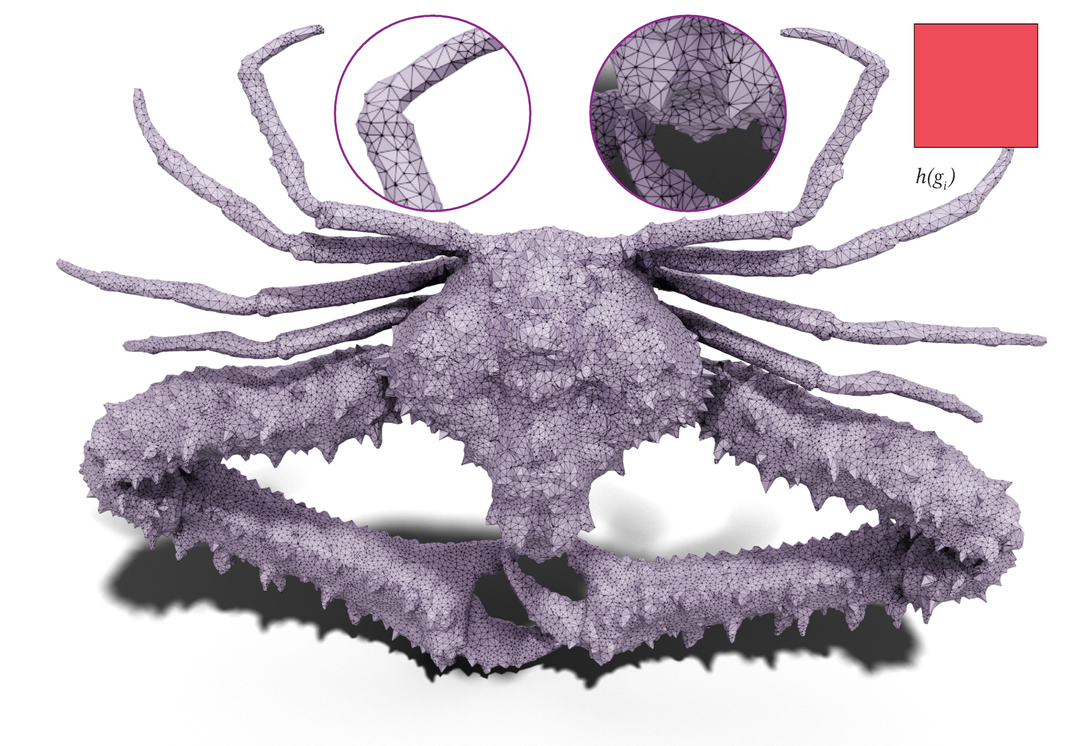} & 
    \includegraphics[width=.33\linewidth]{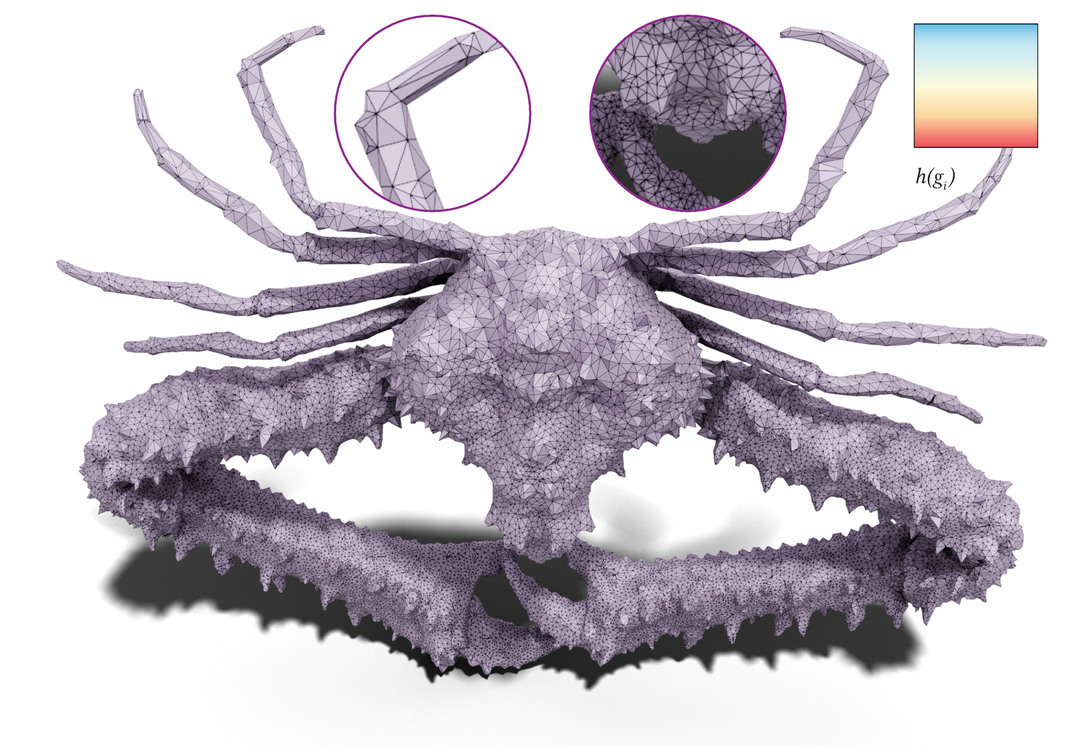} &
    \includegraphics[width=.33\linewidth]{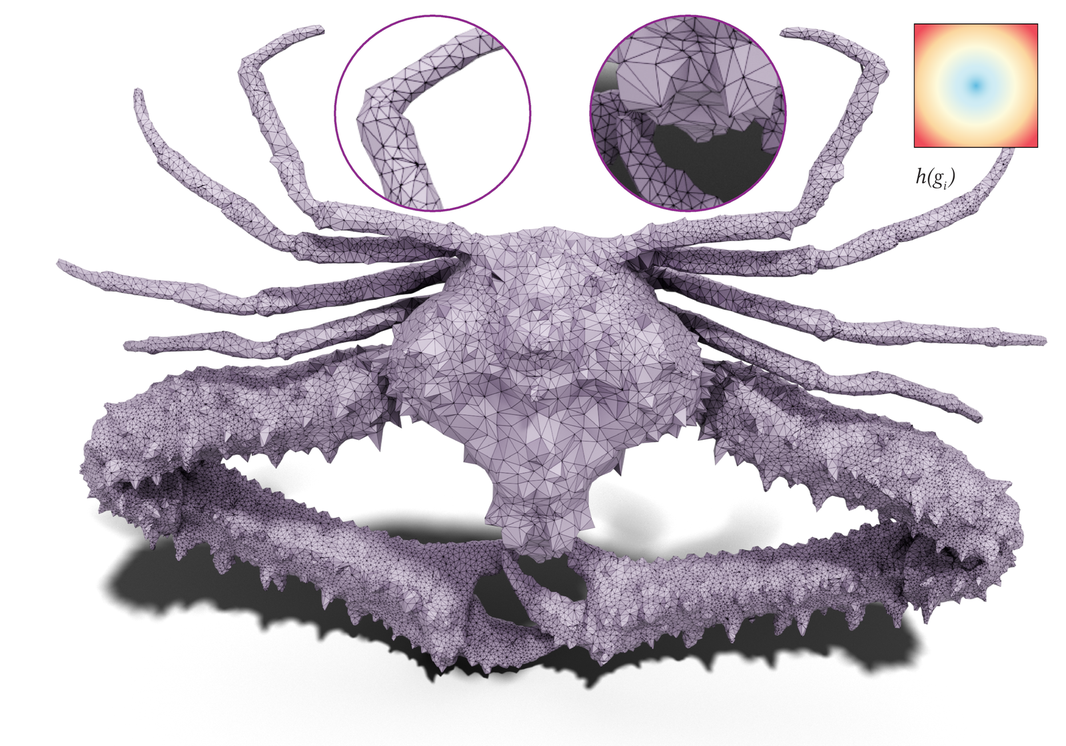} \\
    $h: \mathbf{p} \rightarrow 1$ & $h: \mathbf{p} \rightarrow \vert\mathbf{p}_y-1\vert^3$ & $h: \mathbf{p} \rightarrow \Vert \mathbf{p}\Vert^2$ \\
    uniform & axis-wise increase & radial
    \end{tabular}
    \caption{Our resampling is guided by the importance value function $h$. In our experiments, we compute $h$ based on normal map errors to enable adaptive sampling, but we can use any custom formulation. Here, we present three variations of $h$. Uniform tessellation is achieved by defining $h$ as a constant value. By setting $h$ as a cubic function along the $y$-axis, tessellation density increases progressively from top to bottom. Alternatively, using a radial function results in a higher triangle density in regions farther from the center of the mesh.}
    \label{fig:scaling_ablation}
    \Description{TODO}
\end{figure*}

\subsection{Experimental setting}

To conduct our experiments, we render a mask, depth map, and normal map of the target shape and the mesh generated by \ourmethod{} using Nvdiffrast \cite{Laine2020diffrast}. We employ the regularizers discussed in \secref{sec:regularization}, the refinement from \secref{sec:resampling}, use the $L_1$ loss over 
 the normal maps as a rendering target for adaptive meshing from \secref{sec:adaptive}, and the multistage training method from \secref{sec:stages}.
Additional details, including implementation specifics and parameter settings, are provided in {\appref{sec:implementation_details}}.

Our dataset comprises shapes from the ThreeDScan repository \cite{threedscans}, which offers high-quality meshes of up to more than 2 million vertices. This degree of detail demonstrates \ourmethod{}'s ability to accurately reconstruct high-frequency details. Additionally, the dataset provides a testbed to highlight the potential of our approach for mesh compression applications, as detailed in \secref{sec:compression}.
We preprocess the dataset by retaining only the largest connected component of each shape and by excluding non-watertight models. Each mesh is then normalized. Additionally, we remove redundant shapes, such as multiple versions of the same object.
After processing, the dataset consists of 75 shapes. 


\subsection{Comparisons}

We compare against the two most closely related methods, regarded as the state-of-the-art, DMTet and FlexiCubes. In our comparisons, we only extract the largest connected component from the resulting mesh for each method. This filters out internal geometry, present in each compared method. In the experiments by \citet{shen2023flexicubes}, a different strategy is employed to solve the problem of internal geometry, however, the ground truth mesh itself is used as input. We found our filtering method to work comparably well, while being more widely applicable in realistic applications.

\textit{Quantitative evaluation.\ }
We follow the evaluation method from FlexiCubes, inspired by NDC \cite{chen2022ndc}, and report the average value of the following metrics: the chamfer distance (CD), its F1 score, the edge chamfer distance (ECD), its F1 score (EF1), the normal consistency (NC), the percentage of inaccurate normals (IN$>5^{\circ}$(\%)), the percentage of triangles whose aspect ratio (AR) and radius ratio (RR) is greater than $4$, the percentage of angles below a threshold of $<10^{\circ}$ (SA$<10^{\circ}$), the percentage of self-intersecting triangles (SI), and the number of vertices and faces. The metrics are explained in more detail in {\appref{sec:metric_def}}.

As shown in \tableref{fig:metrics}, our method consistently yields lower chamfer distances, higher normal consistency, and fewer degenerate triangles compared to both DMTet and FlexiCubes at comparable complexity levels. We also exhibit a near-monotonic improvement across all geometry metrics as the number of points increases from 16K to 128K, while producing no self-intersections at any resolution by design, as shown in \figref{fig:self_intersections}. In particular, the lower percentages of angles under $10^\circ$ and triangles with high aspect or radius ratios demonstrate the high geometric quality of our tessellation. These results indicate that \ourmethod{} not only outperforms existing approaches in reconstructing high-frequency details, but also scales more effectively to larger resolutions.

\textit{Visual comparison.\ } Visual comparisons in \figref{fig:visual_comparisons} feature DMTet and FlexiCubes at a $128^3$ grid resolution alongside our method, shown at 16K, 64K, and 128K points in the underlying tetrahedral grid. Notably, FlexiCubes can hardly scale beyond $128^3$ due to GPU memory constraints, while \ourmethod{} can handle far higher resolutions. This difference proves critical for capturing fine details—such as the inscription on the Gutenberg statue’s base, which FlexiCubes cannot reconstruct at its maximum grid size. Additionally, our meshing is more adaptive, using fewer, larger triangles on flat surfaces, and our overall tessellation tends to be fairer. However, FlexiCubes does achieve slightly sharper edges compared to ours at 16K grid points, consistent with its stronger ECD and EF$1$ scores at that resolution.

\begin{figure*}[t]
    \centering
    \small
    \setlength{\tabcolsep}{1pt}
    \begin{tabular}{cccc @{\hskip 0.15cm} cccc}
    \multicolumn{4}{c}{\includegraphics[width=.485\linewidth]{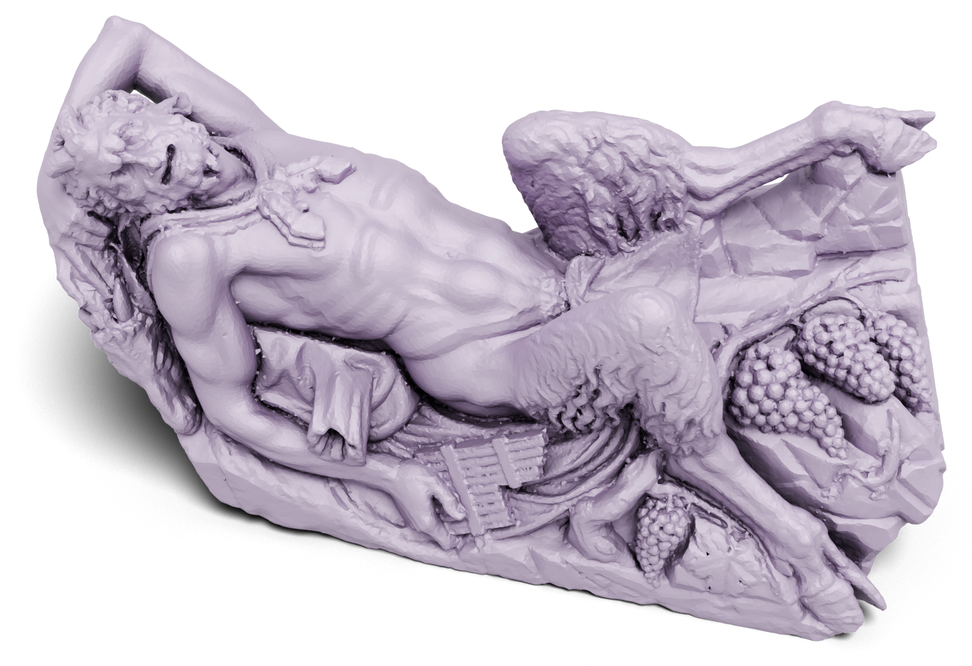}} &
    \multicolumn{4}{c}{\includegraphics[width=.485\linewidth]{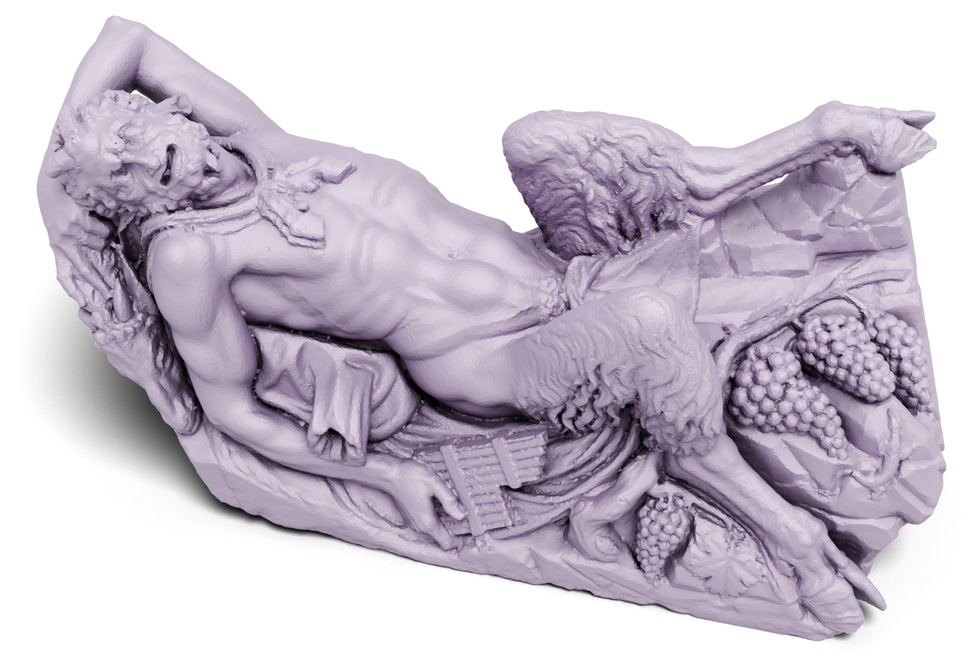}}\\ 
    \includegraphics[width=0.12\linewidth]{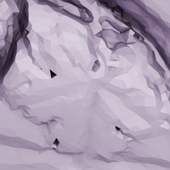} &
    \includegraphics[width=0.12\linewidth]{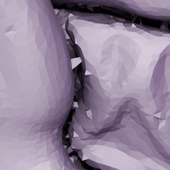} &
    \includegraphics[width=0.12\linewidth]{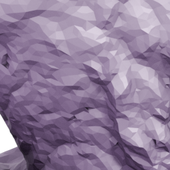} &
    \includegraphics[width=0.12\linewidth]{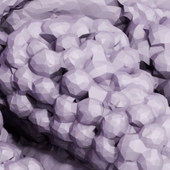} &
    
    \includegraphics[width=0.12\linewidth]{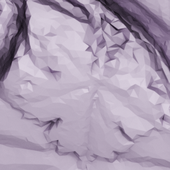} &
    \includegraphics[width=0.12\linewidth]{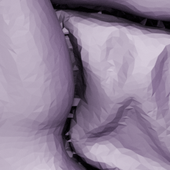} &
    \includegraphics[width=0.12\linewidth]{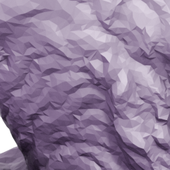} &
    \includegraphics[width=0.12\linewidth]{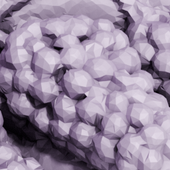}
    \\
    
    \multicolumn{4}{c}{No Spherical Harmonics} & \multicolumn{4}{c}{Spherical Harmonics - degree 1} \\
    \multicolumn{4}{c}{$568$kB -- EF$1$: $0.568$} & \multicolumn{4}{c}{$1.044$MB -- EF$1$: $0.617$}
    \end{tabular}
    \caption{Compressed model obtained using a grid with $128$K points. Incorporating spherical harmonics provides slight improvements at the edges as evidenced by the reported EF$1$ metric. However, it also doubles the memory requirements of the representation.}
    \label{fig:compression_SH}
    \Description{TODO}
\end{figure*}

\begin{figure}[t]
    \centering
    \small
    \setlength{\tabcolsep}{2pt}
    \begin{tabular}{c c}
    \includegraphics[width=.49\linewidth]{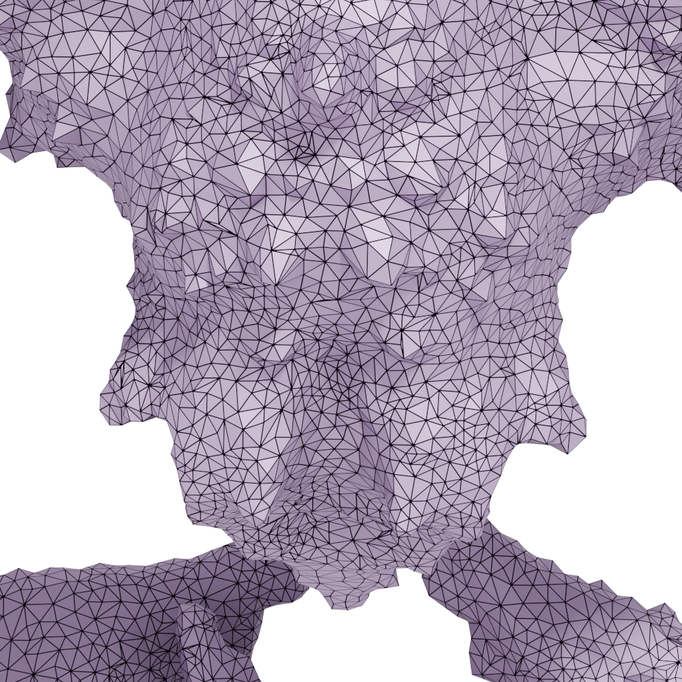} & 
    \includegraphics[width=.49\linewidth]{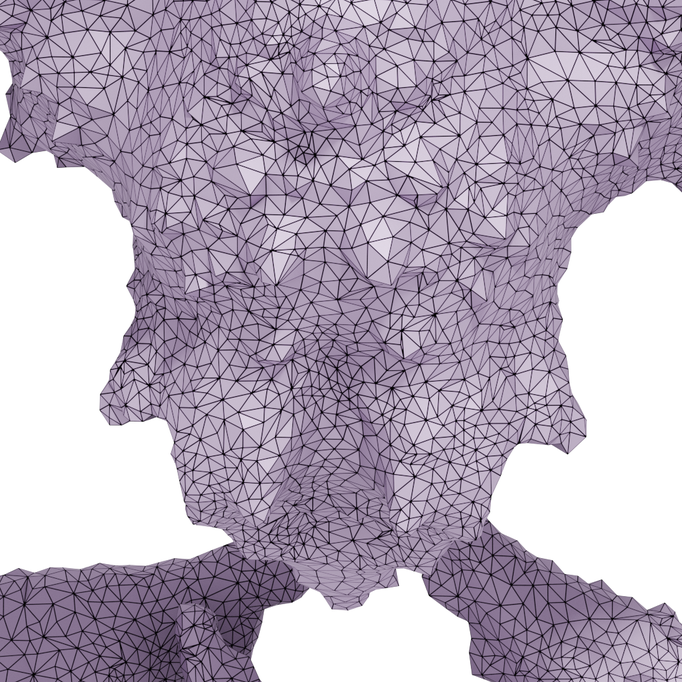} \\
    full point cloud - double precision & trimmed point cloud - single precision \\
    $2.1$MB & $587.3$kB
    \end{tabular}
    \caption{Our shape representation with 64K points and spherical harmonics of degree 1 occupies $2.1$MB when stored in double precision. By discarding non-active points and using single precision storage, the memory footprint can be reduced by nearly a factor of four, with minimal changes that remain imperceptible, as shown in this close-up.}
    \label{fig:compression_simplification}
    \Description{TODO}
\end{figure}

\subsection{Ablation}
\label{sec:ablation}

In this section, we conduct ablation studies on isolated components of our technique to highlight their significance within the pipeline.

\subsubsection{Directional signed distance}

We examine the impact of incorporating spherical harmonics for computing directional signed distances in \tableref{tab:metrics_extended}. While the use of spherical harmonics does not significantly affect the chamfer distance or F1-score, it consistently enhances the EF1-score and reduces the percentage of inaccurate normals. This improvement aligns with our observations: spherical harmonics enable directional adjustments of the signed distance, effectively aligning normals at a highly localized level. While this has minimal influence on chamfer distance, it proves beneficial and is visible in worst-case scenarios, {as shown in \figref{fig:sh_degree_comparison}}.

\subsubsection{Regularization}

We evaluate the impact of regularizers in our method in the last three rows of \tableref{tab:metrics_extended}. Introducing a fairness term substantially improves reconstruction, particularly in triangle quality metrics such as aspect ratio, radius ratio, and the reduction of small-angle percentages. This improvement explains the $36\%$ reduction in vertex count, as the fairness term eliminates triangles with extremely small areas while preserving reconstruction quality. However, this comes at the cost of a slight increase in the percentage of inaccurate normals. Additionally, incorporating an ODT energy improves the edge chamfer distance by over $10\%$.

\subsubsection{Adaptive meshing}

We conducted experiments with \ourmethod{} using our resampling strategy. {As described in \secref{sec:resampling}}, our resampling relies on an importance value function $h$, which operates over a voxel grid decomposition of the current reconstructed shape. In the adaptive setting, $h$ is computed based on rendering errors of the normal map. This approach concentrates points in regions where normals misalign, typically areas of high curvature like highly detailed regions.

In this ablation, we test alternatives of $h$. By setting $h$ to a constant value and $0$ in voxels that do not intersect the shape, we uniformly sample points within each voxel intersecting the mesh. This configuration is referred to as ``Uniform'' in \tableref{tab:metrics_extended}. Compared to adaptive sampling, uniform sampling achieves similar chamfer distance (CD) and F1-score, demonstrating its ability to produce meshes of comparable quality. However, for sharp features, uniform sampling underperforms, as evidenced by significantly worse ECD, EF1, and percentages of inaccurate normals. This highlights the advantage of adaptive sampling in targeting high-frequency details. On the other hand, uniform sampling marginally improves triangle quality, with less than $1\%$ of triangles having an angle smaller than $10^\circ$.

To demonstrate the flexibility of our importance value function $h$, we illustrate three different configurations for $h$ in \figref{fig:scaling_ablation}: $h: \mathbf{p} \rightarrow 1$ corresponds to a uniform meshing strategy, where all regions are sampled equally; $h: \mathbf{p} \rightarrow |\mathbf{p}_y - 1|^3$ introduces an axis-wise variation which increases the density of triangles progressively along the $y$-axis from top to bottom; and $h: \mathbf{p} \rightarrow |\mathbf{p}|^2$ concentrates triangles in regions farther from the center. These examples showcase how $h$ can be tailored to prioritize specific regions of interest, resulting in meshes that emphasize different geometric features of the shape.


\begin{figure*}[htbp!]
	\centering
	\small
	\setlength{\tabcolsep}{1pt}
	\begin{tabular}{cccccccc}
\multicolumn{2}{c}{\includegraphics[width=0.24\linewidth]{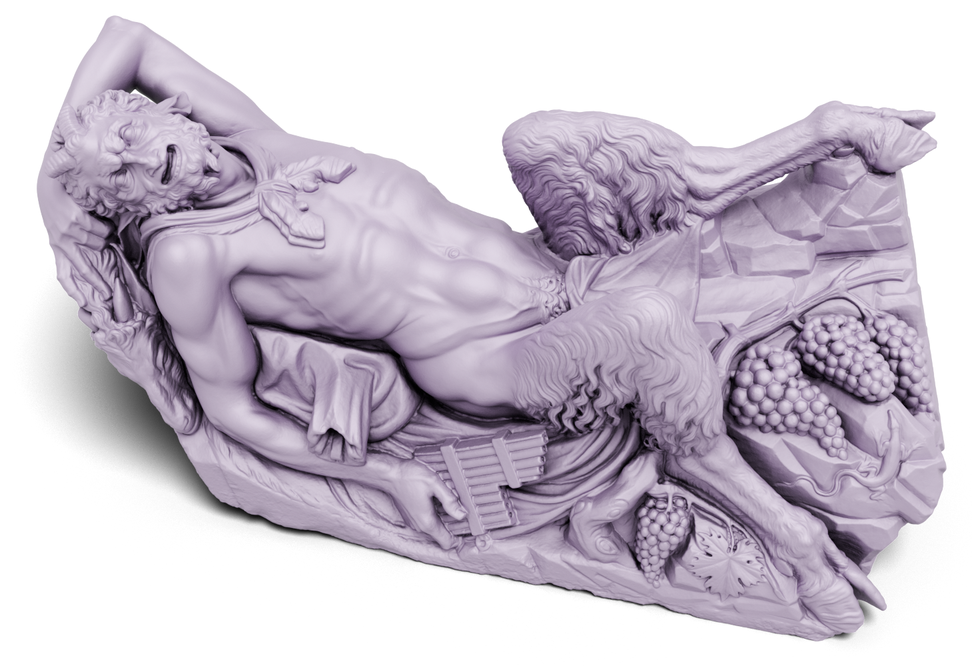}} &
\multicolumn{2}{c}{\includegraphics[width=0.24\linewidth]{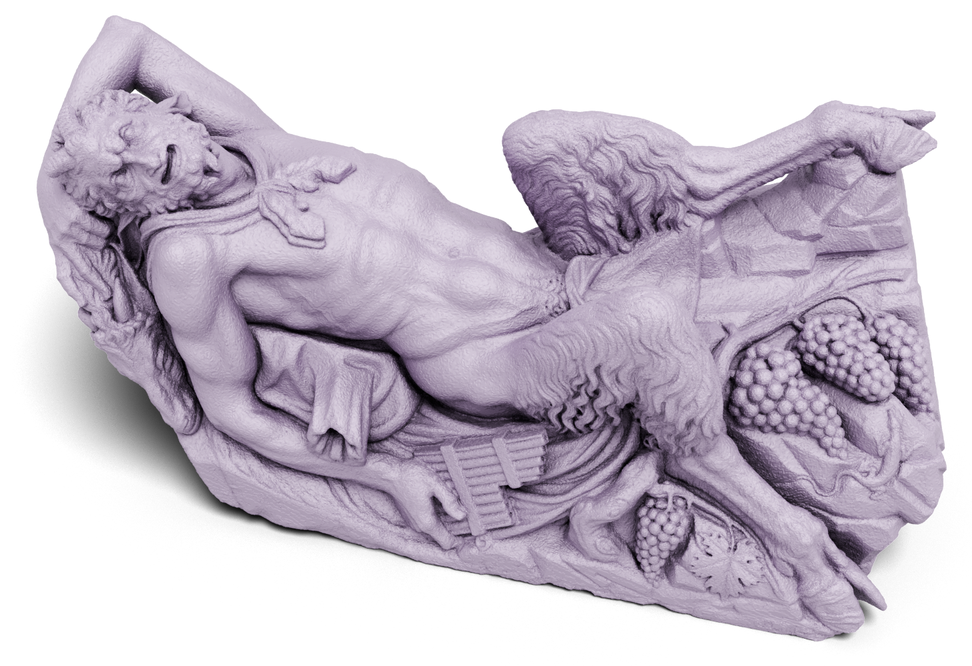}} &
\multicolumn{2}{c}{\includegraphics[width=0.24\linewidth]{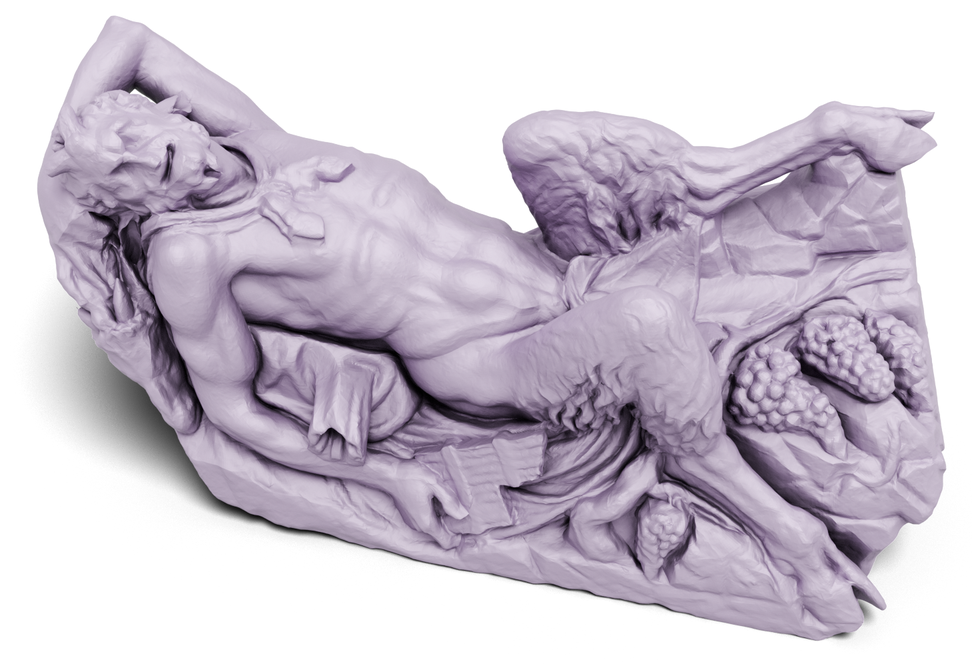}} &
\multicolumn{2}{c}{\includegraphics[width=0.24\linewidth]{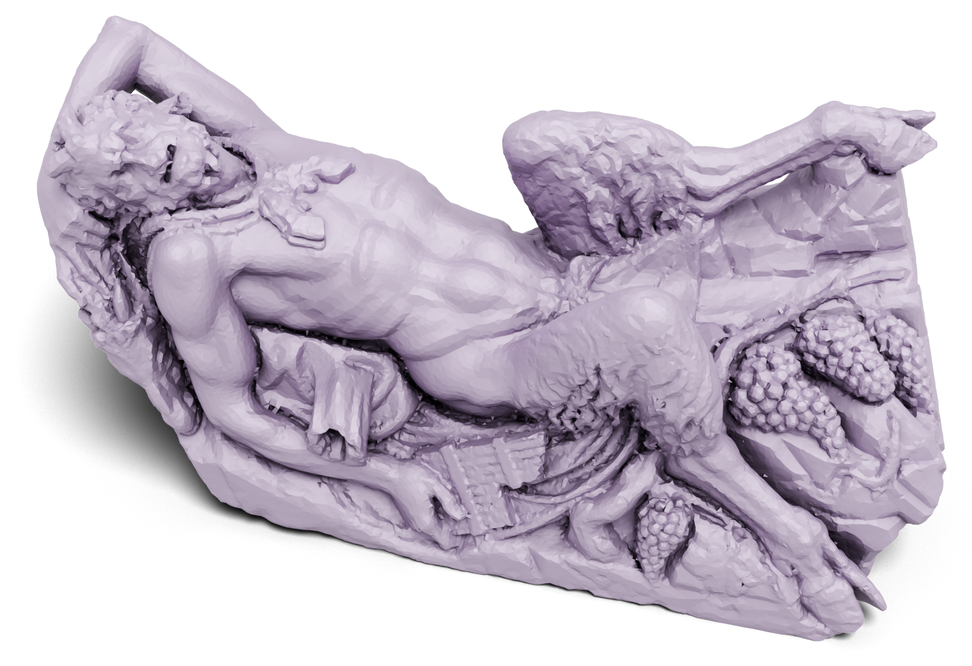}} \\

\includegraphics[width=0.12\linewidth]{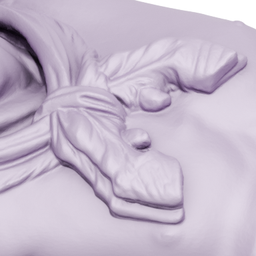} & 
\includegraphics[width=0.12\linewidth]{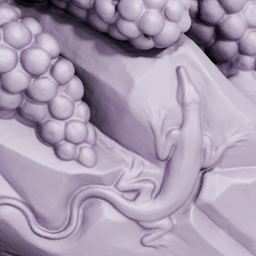} & 
\includegraphics[width=0.12\linewidth]{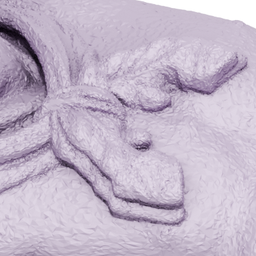} & 
\includegraphics[width=0.12\linewidth]{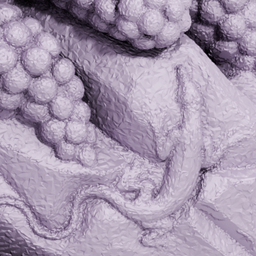} & 
\includegraphics[width=0.12\linewidth]{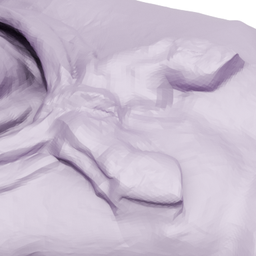} & 
\includegraphics[width=0.12\linewidth]{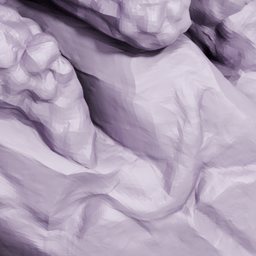} & 
\includegraphics[width=0.12\linewidth]{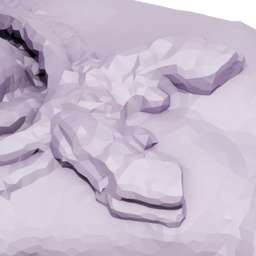} & 
\includegraphics[width=0.12\linewidth]{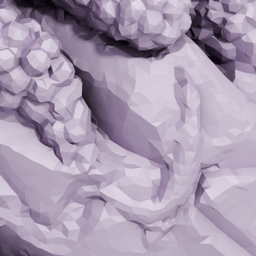} \\

\multicolumn{2}{c}{Reference} &
\multicolumn{2}{c}{Draco \cite{draco2017}} &
\multicolumn{2}{c}{NGF \cite{vs2024ngfs}} &
\multicolumn{2}{c}{\ourmethod{} - 64K, no SH} \\

\multicolumn{2}{c}{37.4 MB} &
\multicolumn{2}{c}{2.3 MB} &
\multicolumn{2}{c}{267.3 kB} &
\multicolumn{2}{c}{319.8 kB}
    \end{tabular}
	\caption{Comparison of our representation against two mesh-compression techniques, Draco \cite{draco2017} and Neural Geometry Fields (NGF) \cite{vs2024ngfs}, on a high-resolution statue. The reference mesh occupies 37.4MB with single precision. Draco compresses it to 2.3MB by quantizing vertex connectivity, but suffers from high-frequency artifacts. NGF requires only 267.3kB but produces non-adaptive patches, can over-smoothen the result and self-intersect. Our approach achieves a 319.8kB file size by discarding non-active points, using 16-bit floats, and inferring connectivity from Delaunay triangulation.}
	\label{fig:mesh_compression}
    \Description{Compression figure}
\end{figure*}

\begin{figure*}[htbp!]
	\centering
	\small
	\setlength{\tabcolsep}{1pt}
	\begin{tabular}{cccccc}
\includegraphics[width=0.16\linewidth]{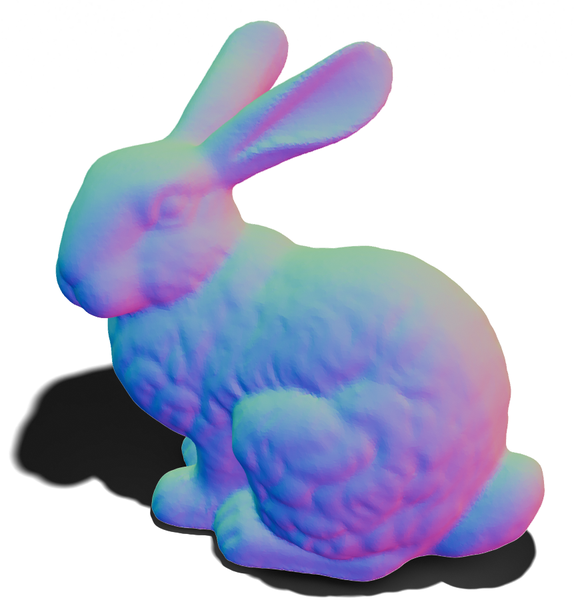} &
\includegraphics[width=0.16\linewidth]{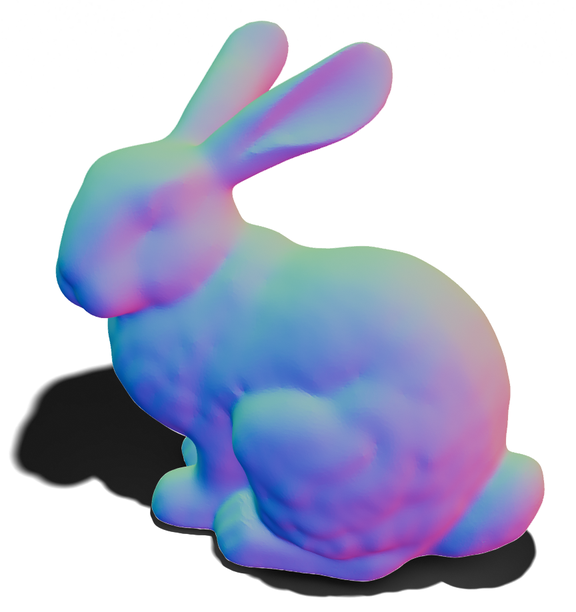} &
\includegraphics[width=0.16\linewidth]{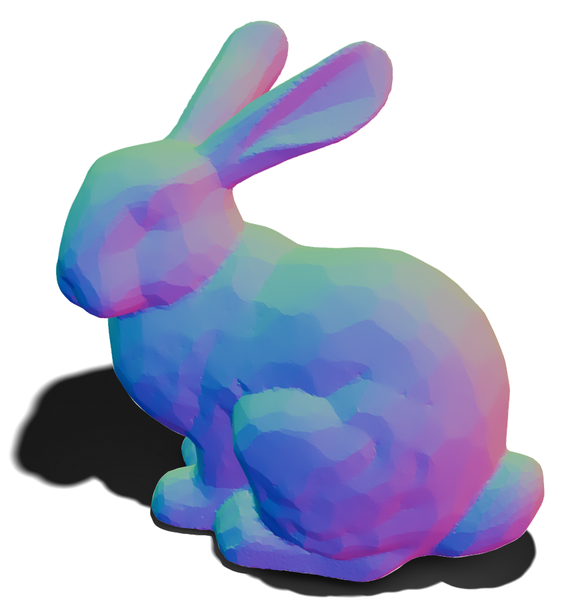} &
\includegraphics[width=0.16\linewidth]{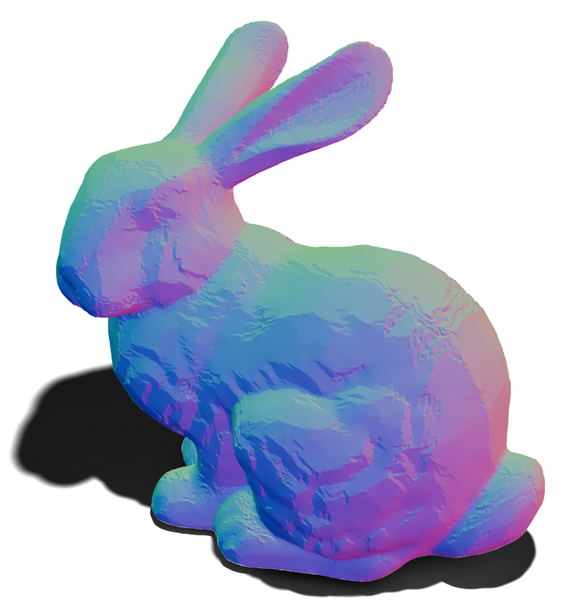} &
\includegraphics[width=0.16\linewidth]{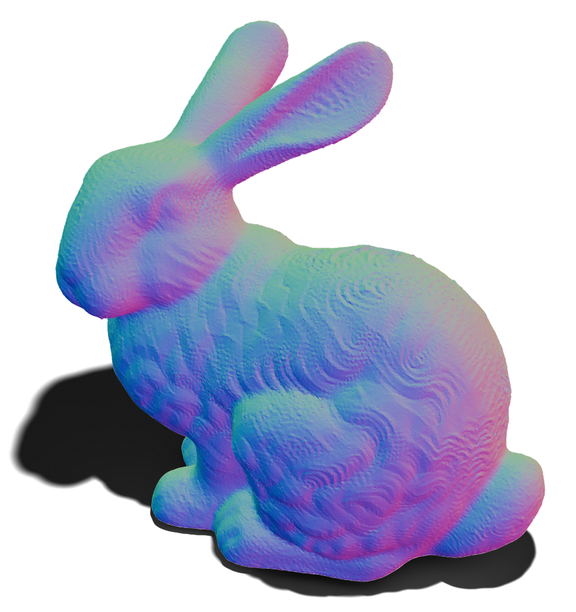} &
\includegraphics[width=0.16\linewidth]{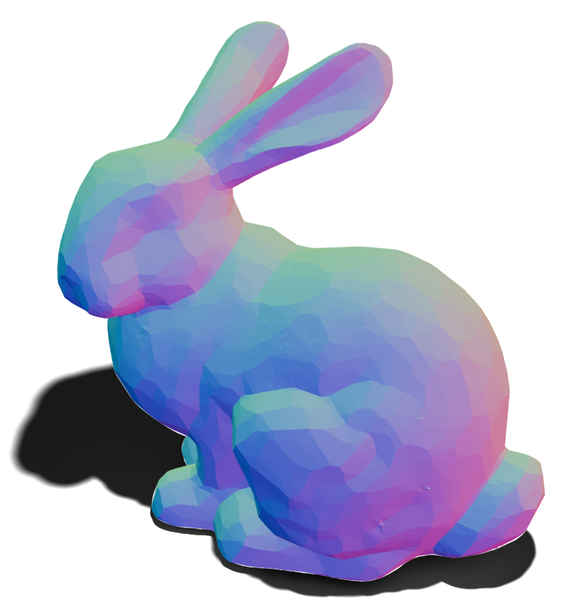}
\\

\ourmethod{} & Guided filter & $L_0$-Smoothing & Image quantization & CLIP Loss & Developability \\
initialization & \cite{GuidedFilterHe2013} & \cite{Xu2011Lzero} & K-means \cite{kmeans}& \cite{radford2021learning} & \cite{Stein:2018:DSF} 
    \end{tabular}
	\caption{We showcase how gradient‐based optimization allows for creating geometric textures by fine‐tuning high‐frequency details. The reference is obtained by optimizing our representation of the Stanford bunny.
    Using gradient-based optimization, we can create geometric textures by applying a style to high-frequency details of a mesh. Using Paparazzi's approach \cite{Liu:Paparazzi:2018}, we can use any image filter for stylization. We display the stylization using Guided filter, $L_0$-smoothing, and K-means color quantization. Our method also allows us to backpropagate gradients from any energy function.  We apply a CLIP loss on the rendered normal maps with the prompt "a wave-like style", and a developability energy \cite{Stein:2018:DSF} directly on the reconstructed mesh, yielding piecewise developable surfaces.}
	\label{fig:geometric_textures}
    \Description{Geometric textures figure.}
\end{figure*}

\section{Applications}
\label{sec:applications}

This section provides examples of gradient-based mesh optimization applications using our method, namely mesh compression, geometric texture generation and photogrammetry.


\subsection{Mesh compression}
\label{sec:compression}

One key advantage of our representation is its relatively low memory footprint{, because we} do not store connectivity. Instead, we infer it using Delaunay Triangulation and a single point can generate multiple vertices in the final mesh. 
{For $N$ points, the size of our representation amounts to $B(4+q)N$ bytes, where $q$ is the number of spherical harmonic coefficients and $B=4$ or $B=8$, depending on whether we use single or double precision. Given that spherical harmonics of degree 1 introduce three additional coefficients per point, opting not to use them can maximize compression efficiency with only minor visual degradation, as evidenced in the compression of \figref{fig:compression_SH}. 
After fitting our representation to a given shape, to maximize the compression rate, we discard the non-active vertices of the tetrahedral grid, \emph{i.e.}, vertices that are not connected to at least one vertex whose base SDF value has an opposite sign. While this operation is not strictly lossless, the difference is almost imperceptible, as shown in \figref{fig:compression_simplification}.}
We compare \ourmethod{} in \figref{fig:mesh_compression} with Neural Geometry Fields (NGF) \cite{vs2024ngfs}, an appearance-based compression method, and Draco \cite{draco2017}, an industry-standard scheme that reduces bit rates in vertex connectivity.
NGF compresses geometry more aggressively but exhibits notable drawbacks compared to our method: its patch-based tessellation can be unfair because patches vary in size, it is not adaptive, and it has no mechanism to prevent self-intersections.
Although Draco is fast, it only focuses on quantizing geometry and is prone to artifacts, so highly tessellated meshes can still consume high memory in applications that do not require such dense sampling.


\begin{figure*}[htbp!]
	\centering
	\small
	\setlength{\tabcolsep}{1pt}
	\begin{tabular}{cccccc}
\includegraphics[width=0.16\linewidth,trim={33 67 33 47},clip]{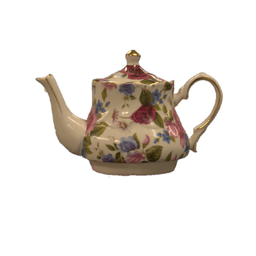} & \includegraphics[width=0.16\linewidth,trim={33 67 33 47},clip]{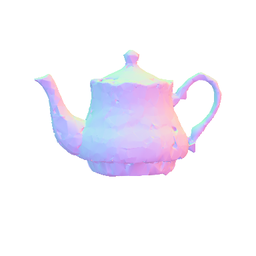}  & \includegraphics[width=0.16\linewidth,trim={33 67 33 47},clip]{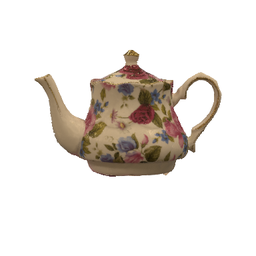} & \includegraphics[width=0.16\linewidth,trim={80 127 93 47},clip]{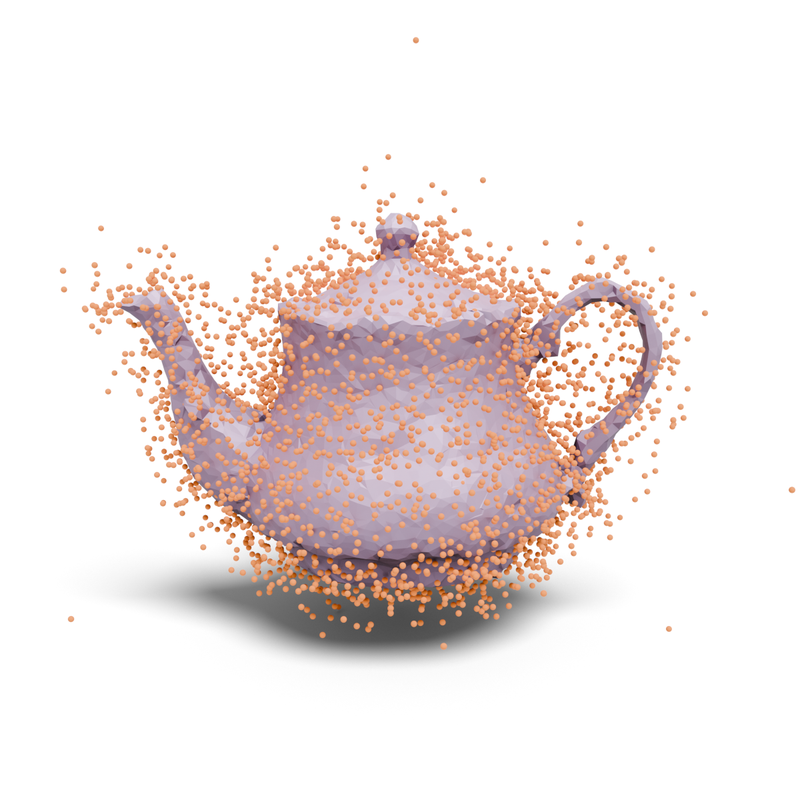} & \includegraphics[width=0.16\linewidth,trim={33 67 33 47},clip]{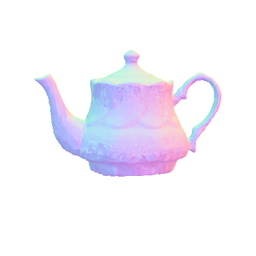} & \includegraphics[width=0.16\linewidth,trim={33 67 33 47},clip]{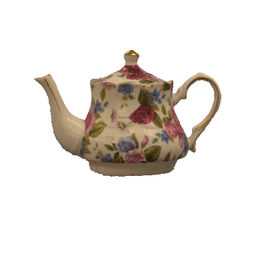} \\
Target & \multicolumn{2}{c}{Coarse reconstruction} & Point cloud adaptivity & \multicolumn{2}{c}{High-frequency refining} 
    \end{tabular}
	\caption{We demonstrate the applicability of our method to photogrammetry by integrating it into the NVDiffRec pipeline \cite{Munkberg_2022_CVPR}. NVDiffRec uses differentiable rasterization to jointly optimize PBR material properties, the environment map, and geometry to match a target rendered shape. In our approach, we employ our method as the geometry representation in a multi-stage process. We initialize a blue-noise point cloud and jointly optimize the point positions along with their SDF values, compute the tetrahedral grid on the fly, and reconstruct a coarse shape. Second, we fit our representation to the coarse mesh, resulting in a point cloud that closely matches the reconstructed shape. Finally, we refine high-frequency details by optimizing only the SDF values in the last stage.}
	\label{fig:photogrammetry}
    \Description{Photogrammetry figure}
\end{figure*}

\subsection{Geometric texture}

Gradient-based mesh optimization can be used for geometric texture generation. Adopting a similar pipeline to Paparazzi \cite{Liu:Paparazzi:2018}, we render normal maps of our generated mesh, $\tilde{I}$, apply an image filter $f$, and substitute the gradient with $f(\tilde{I}) - \tilde{I}$ to update the parameters of \ourmethod{}.
{In practice, we first fit our representation to a reference shape using spherical harmonics of degree $2$. Once the fit is complete, we fix the point positions and SDF values. Using a chosen filter, we render the shape’s normal map from a randomly sampled camera view, apply the filter, and compute gradients to backpropagate, updating only the spherical harmonic coefficients. Since geometric textures primarily affect high-frequency details, adjusting spherical harmonics is sufficient to tilt the normals into the desired orientation without deviating significantly from the reference shape. Additionally, we use a loss on the mask, depth map, and normal map, compared to the reference mesh, to ensure that the generated shape remains close to the reference and does not degenerate. To avoid being constrained too much by the initial reference, we periodically update the reference shape with the current mesh during optimization.
We demonstrate results using various image filters in \figref{fig:geometric_textures}: namely Guided filter \cite{GuidedFilterHe2013}, $L_0$-Smoothing \cite{Xu2011Lzero}, and K-means image quantization \cite{kmeans}.

\ourmethod{} can be adapted to any gradient-based mesh optimization pipelines. Therefore, \figref{fig:geometric_textures} also showcases examples where the gradients are given by a user-defined rendering or mesh energy. From this energy, we can use PyTorch's automatic differentiation to update \ourmethod{}’s parameters and match the target energy. For instance, we demonstrate how we can apply a CLIP loss \cite{radford2021learning} on the normal map to produce a wave-like effect.} Similar to FlexiCubes, our approach supports a developability energy \cite{Stein:2018:DSF} for crafting piecewise developable surfaces, though both methods tend to yield a lot of patches.

\begin{figure*}[t]
\small
\setlength{\tabcolsep}{0pt}
\begin{tabular}{l c c c c c c}
  & \includegraphics[width=0.155\linewidth,trim={53 87 80 117},clip]{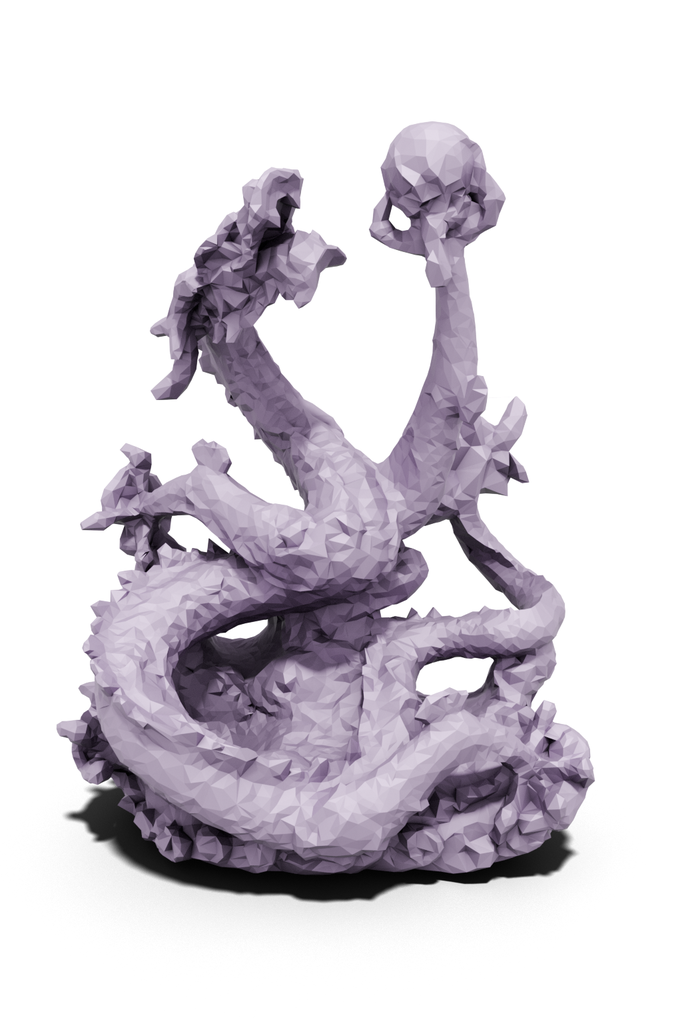} &
  \includegraphics[width=0.155\linewidth,trim={53 87 80 117},clip]{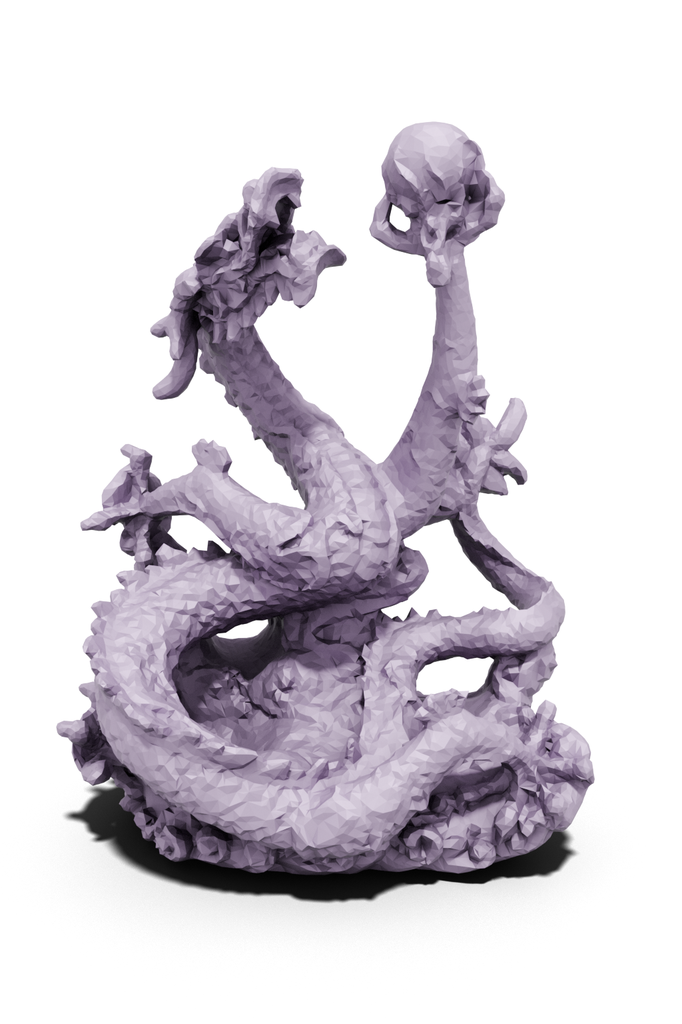} &
  \includegraphics[width=0.155\linewidth,trim={53 87 80 117},clip]{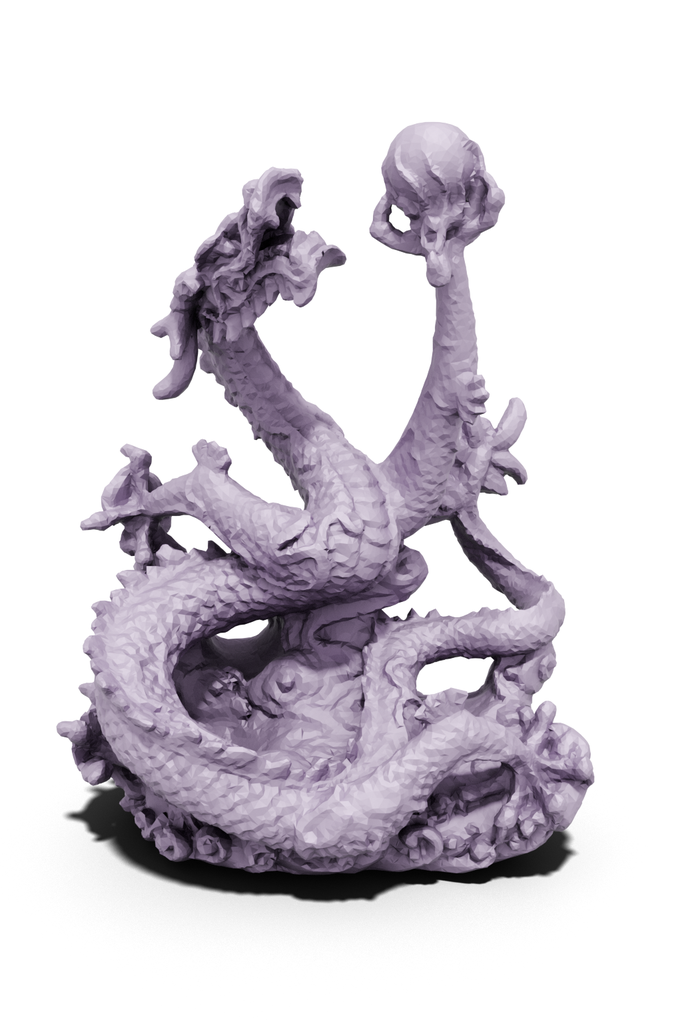} &
  \includegraphics[width=0.155\linewidth,trim={53 87 80 117},clip]{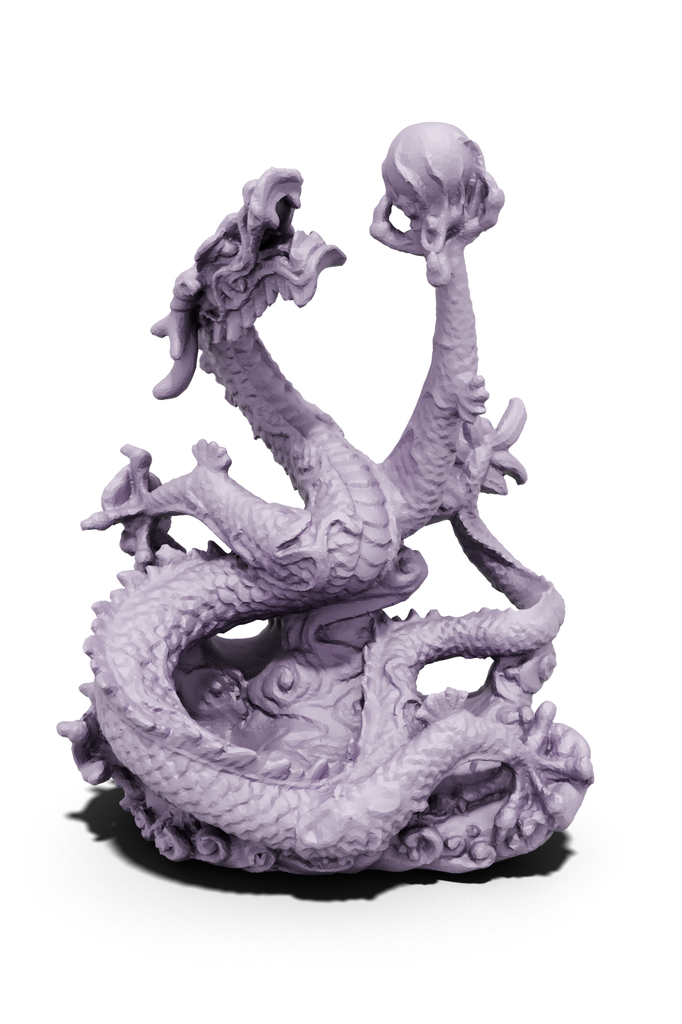} &
  \includegraphics[width=0.155\linewidth,trim={53 87 80 117},clip]{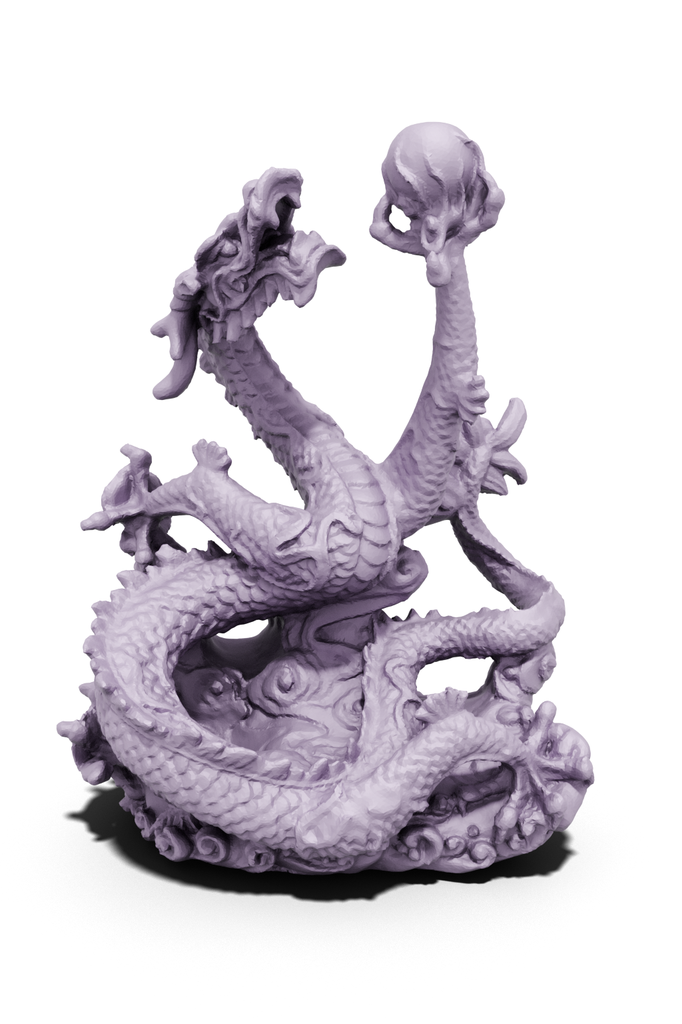} &
  \includegraphics[width=0.155\linewidth,trim={53 87 80 117},clip]{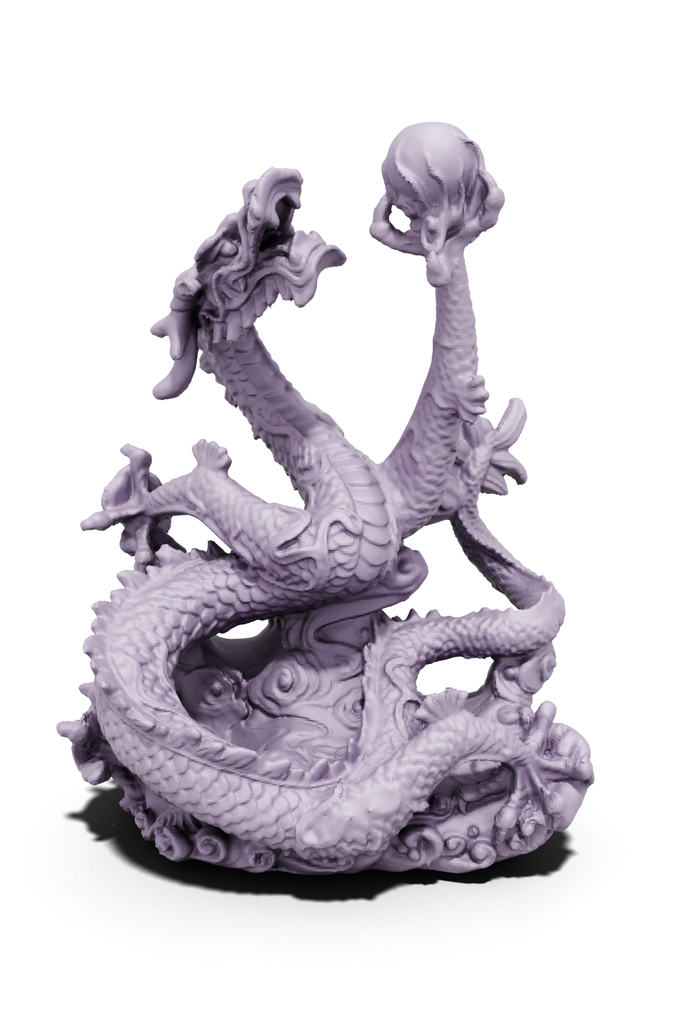} \\
  grid points & 8K / 6.6K & 16K / 13.2K & 32K / 24.4K & 64K / 43.0K & 128K / 76.6K & input \\
  vertices & $15$K & $29$K & $53$K & $91$K & $163$K & $450$K \\
   runtime & $137$s & $157$s & $181$s & $276$s & $456$s & \\
   memory & $108$kB & $213$kB & $393$kB & $690$kB & $1.2$MB & $15.5$ MB
\end{tabular}
  \Description{\ourmethod{} - shape used for performance evaluation}
  \caption{We present the shape utilized in our performance experiments at varying grid point densities. The table shows the target number of grid points used during the optimization and the final number of active points that remained. We also provide details of the final mesh resolution and the practical runtime required to generate each shape. Additionally, we provide the file size of the model, where only active points remained and are stored in float16 format. We used SH with degree 1. The shape showcased here is sourced from Thingi10K \cite{zhou2016thingi10kdataset100003dprinting}.}
  \label{fig:teaser_supmat}
\end{figure*}

\begin{figure}[t]
    \centering
    \includegraphics[width=0.8\linewidth]{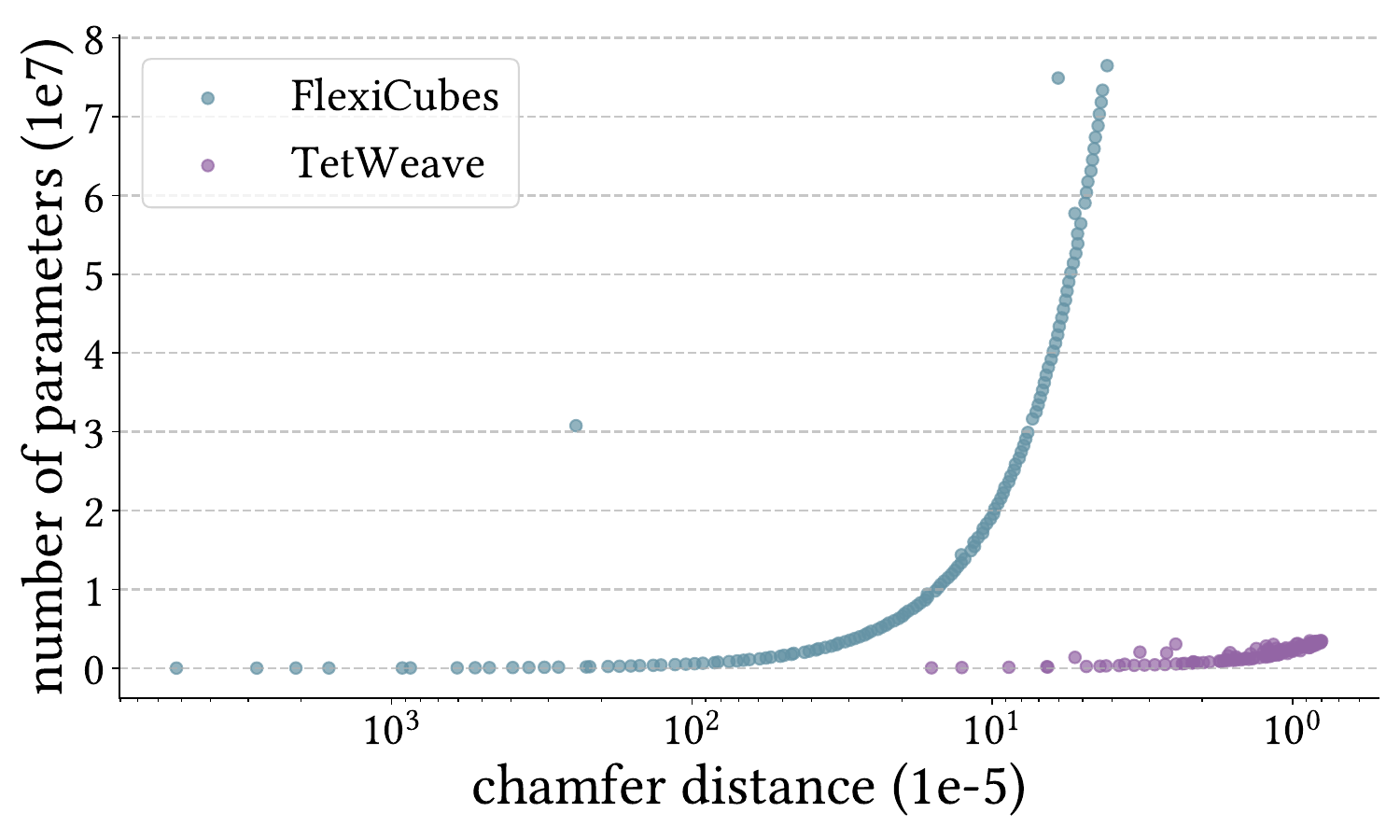}    
    \caption{Relationship between the achieved chamfer distance and the number of parameters required for FlexiCubes and \ourmethod{}. The graph demonstrates that FlexiCubes does not scale as effectively as \ourmethod{}. The maximum grid size for FlexiCubes represented in this figure is $145^3$, while \ourmethod{} is shown utilizing up to 500K points. These experiments were conducted on the shape shown in \figref{fig:teaser_supmat}.}
    \label{fig:scaling}
    \Description{TODO}
\end{figure}
\subsection{Photogrammetry}
\label{sec:photogrammetry}

Photogrammetry involves recovering scene parameters from a set of photographs. NVDiffRec \cite{Munkberg_2022_CVPR} introduces a differentiable rendering framework that jointly reconstructs PBR materials, environment lighting, and geometry. The original implementation employs DMTet for its reconstruction pipeline and the authors provide an implementation of the framework with FlexiCubes. We evaluate each method on the recent benchmark dataset, Stanford ORB~\cite{kuang2023stanfordorb}.

As noted in \secref{sec:stages}, NVDiffRec can output noisy geometry when the reconstruction problem is underconstrained. {\citet{Munkberg_2022_CVPR} addressed similar issues in DMTet by regularizing SDF values through an MLP, by truncating the positional encoding to lower the frequency.}
\ourmethod{} has more degrees of freedom and optimizes the geometry and the background grid concurrently. This is a strength when it comes to expressivity, but can be challenging when optimizing a highly ambiguous inversion problem, where lighting, geometry and materials are unknown. {Thus, rather than using our method as a direct replacement for DMTet \cite{shen2021dmtet}, we employ a multi-stage approach. We initialize TetWeave's point cloud using a precomputed blue noise sampling and construct a tetrahedral grid via Delaunay triangulation. This serves as the foundation for our method during the main stage, where we update point positions and SDF values to extract a coarse mesh. Next, we fit an adaptive grid of $8$K points around the coarse mesh using our standard mesh fitting algorithm described in \autoref{sec:results}. In the late stage, we focus on refining the adaptive point cloud by exclusively optimizing the SDF values. This stage focuses on capturing fine-grained details and eliminating artifacts introduced in earlier stages. The gentle introduction of the optimization of the background grid makes it easier to converge to a correct solution. We observe that increasing the number of points to achieve more detailed meshes can lead to artifacts. To reduce additional degrees of freedom, we avoid using spherical harmonics for this application.}

{
The results for the photogrammetry application of \ourmethod{} are shown in \figref{fig:photogrammetry} and the quantitative metrics are displayed in \tableref{tab:benchmark}, following the benchmark established by Stanford ORB \cite{kuang2023stanfordorb}. While the full benchmark is presented, our discussion focuses on grid-adaptive isosurface extraction techniques, as these are most comparable to our method and were also used with NVDiffRec. Other methods improve on the rendering model and methods to estimate lighting and materials. In our experiments, we find that \ourmethod{} achieves better quality regarding novel lighting- and novel view synthesis. We are also able to retrieve better triangulations and higher resolution meshes. However, we observe that our method does not surpass FlexiCubes and DMTet in geometry metrics. This can be attributed to the fact that competing methods produce lower-resolution meshes that are less susceptible to high-frequency artifacts. Additionally, their grid structures inherently act as regularizers, whereas the high adaptivity of our point cloud prioritizes local geometric detail.
}

\section{Discussion}

This section assesses \ourmethod{}’s performance. We analyze its memory efficiency and runtime characteristics, and examine its limitations to provide a balanced perspective. We conclude by exploring future research directions to further advance unstructured mesh representations.

\subsection{Performance}
\label{sec:performance}

We discuss the performance of our method, focusing on comparisons with FlexiCubes \cite{shen2023flexicubes} in terms of both memory usage and computational speed. Experiments were conducted on an NVIDIA RTX 3090 GPU.

\paragraph{Memory} Our shape representation is particularly memory-efficient. This is due to several reasons: we store only the points, their SDF values, and, optionally, spherical harmonics coefficients. In contrast, FlexiCubes \cite{shen2023flexicubes} requires storing SDF values, deformation parameters per point, and 21 coefficients per voxel. Moreover, our method avoids using a fixed grid and instead adaptively resamples passive points around the shape. This allows each point to contribute to a higher number of output triangles, significantly reducing the number of points needed to achieve a desired level of detail and Chamfer Distance. As illustrated in \figref{fig:scaling}, our approach achieves better Chamfer Distance results with far fewer points compared to FlexiCubes, translating to significantly lower memory requirements. The graph demonstrates that \ourmethod{} scales more efficiently than FlexiCubes, allowing us to reconstruct shapes at a higher resolution.

\begin{table}[b]
\centering
\small
\caption{Quantitative comparison of runtime performance between \ourmethod{} and FlexiCubes \cite{shen2023flexicubes}. The forward pass of \ourmethod{} is divided into two components: the Delaunay triangulation step for generating the tetrahedral grid \cite{tetgenSi2025} and our implementation of the Marching Tetrahedra algorithm \cite{doi_marchingtet}, adapted to incorporate spherical harmonics coefficients (of degree 1 in this experiment).}
\resizebox{1.0\linewidth}{!}{
\begin{tabular}{l l c c c}
\toprule
Method & & \multicolumn{2}{c}{Forward Time (ms)} & Backward Time (ms)\\
\midrule
\multirow{3}{*}{FlexiCubes} & $32^3$ & \multicolumn{2}{c}{$5.88$} & $3.58$\\
 & $64^3$ & \multicolumn{2}{c}{$6.66$} & $5.07$\\
 & $128^3$ & \multicolumn{2}{c}{$9.63$} & $15.25$ \\
\midrule
& & Delaunay Triangulation & Marching Tetrahedra & \\
\cmidrule(l){3-3}\cmidrule(l){4-4}
\multirow{6}{*}{\ourmethod{}} & 8K & 22.66 & 1.48 & 5.25 \\
& 16K & 40.25 & 1.68 & 6.11 \\
& 32K & 76.77 & 2.04 & 8.54 \\
& 64K & 162.24 & 2.94 & 15.72 \\
& 128K & 323.64 & 4.52 & 29.88 \\
& 256K & 658.95 & 8.09 & 57.34 \\
\bottomrule
\end{tabular}
}
\label{table:runtime}
\end{table}

\paragraph{Runtime} Our memory efficiency comes with the trade-off of increased runtime. In \tableref{table:runtime}, we present the average runtime for both FlexiCubes and \ourmethod{}, measured on the shape shown in \figref{fig:teaser_supmat}. 
For our method, most of the time in the forward pass is spent on Delaunay triangulation. Although the theoretical time complexity of Delaunay triangulation is $\bigo (n\ln(n))$, this reflects the worst-case scenario and, in practice, the runtime of Tetgen \cite{tetgenSi2025} appears to scale linearly. Similarly, our backward pass runtime also exhibits linear scaling, which can be attributed to the linear increase in the number of vertices generated by our approach.
In practice, the runtime does not exhibit strict linear scaling due to several factors. First, the number of points is progressively increased throughout the optimization process. All point clouds in \figref{fig:teaser_supmat} were initialized with $8$K points, which were gradually increased to the target number through our resampling strategy. Second, the tetrahedral grid is not recomputed at every step; instead, gradients are accumulated over several iterations. Point positions and the tetrahedral grid are only updated every five steps. Lastly, during late-stage refinement, point positions and the triangulation remain fixed, significantly reducing the runtime for the final optimization steps. As a result, the runtime can vary from two minutes at $8$K points to less than eight minutes at $128$K.

\subsection{Limitations}

\begin{figure}[t!]
    \centering
    \small
    \setlength{\tabcolsep}{0pt}
    \begin{tabular}{c c c}
    \includegraphics[width=.33\linewidth]{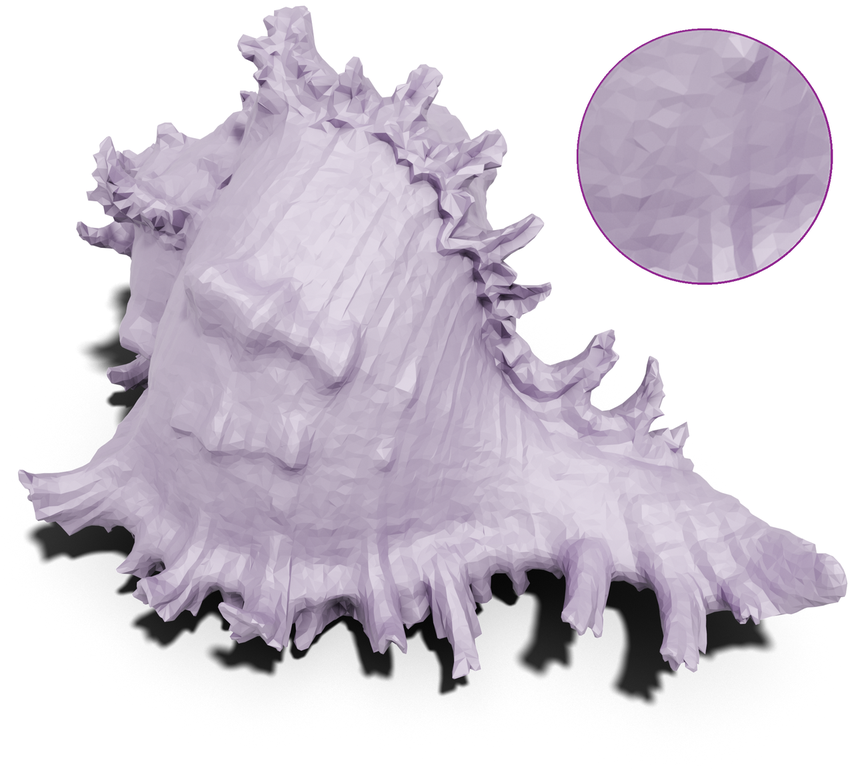} & 
    \includegraphics[width=.33\linewidth]{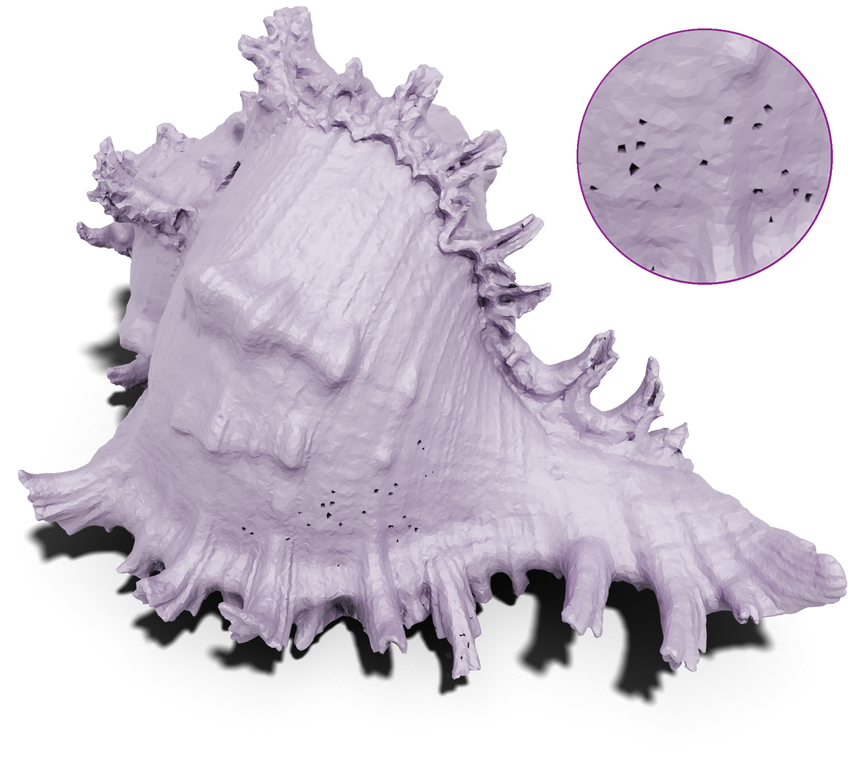}& 
    \includegraphics[width=.33\linewidth]{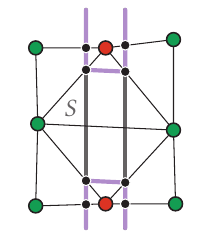} \\
    FlexiCubes & \ourmethod{} & visual interpretation
    \end{tabular}
    \caption{We evaluate our method against FlexiCubes on a shape featuring a particularly thin structure. In such cases, the Delaunay triangulation can sometimes link two points with similar signed distance function values, resulting in an inaccurate hole that deviates from the target geometry $S$.}
    \label{fig:thin_structure}
    \Description{TODO}
\end{figure}

Isosurface extraction methods are generally unsuitable for handling thin structures, and our approach is particularly sensitive to such cases. As illustrated in \figref{fig:thin_structure}, when dealing with meshes containing thin structures, the Delaunay triangulation may connect points with the same sign, resulting in holes in the final shape.

\subsection{Conclusion}
\label{sec:conclusion}

This paper presents \ourmethod{}, a scalable isosurface representation for gradient-based mesh optimization that constructs tetrahedral grids on the fly via Delaunay triangulation. Our method produces watertight, manifold, and intersection-free meshes, excelling at handling complex shapes with lower memory overhead. However, several challenges remain. While the final mesh resolution grows nearly linearly with the number of points, Delaunay triangulation can exhibit superlinear runtime; although this did not pose problems in our experiments, it implies that time complexity may become the limiting factor before memory. Furthermore, like other grid-adaptive approaches, internal cavities can arise unless regularization is applied, and thin structures remain problematic, to which our method is especially sensitive.

We believe that unstructured representations for mesh-based pipelines still have many opportunities for development. Since Marching Tetrahedra only requires consistently oriented tetrahedral grids, devising a technique to infer non-Delaunay connectivity on the fly could offer additional flexibility. Another promising avenue is a direct pipeline to convert arbitrary meshes into our representation, which could immediately enable near-lossless compression and provide a convenient parameter space for learning-based tasks—particularly valuable for generative models. Determining the feasibility and practical implementation of such a pipeline remains an open challenge.
\begin{acks}
We thank the anonymous reviewers for their constructive feedback and Marcel Padilla for his careful reading of our manuscript. The open source codebases of FlexiCubes and DMTet have been instrumental in the development of \ourmethod{}. This work was supported in part by the European Research Council (ERC) under the European Union’s Horizon 2020 research and innovation program (grant agreement No. 101003104, ERC CoG MYCLOTH).
\end{acks}

\balance
\bibliographystyle{ACM-Reference-Format}
\bibliography{references}


\begin{thebibliography}{87}


\ifx \showCODEN    \undefined \def \showCODEN     #1{\unskip}     \fi
\ifx \showISBNx    \undefined \def \showISBNx     #1{\unskip}     \fi
\ifx \showISBNxiii \undefined \def \showISBNxiii  #1{\unskip}     \fi
\ifx \showISSN     \undefined \def \showISSN      #1{\unskip}     \fi
\ifx \showLCCN     \undefined \def \showLCCN      #1{\unskip}     \fi
\ifx \shownote     \undefined \def \shownote      #1{#1}          \fi
\ifx \showarticletitle \undefined \def \showarticletitle #1{#1}   \fi
\ifx \showURL      \undefined \def \showURL       {\relax}        \fi
\providecommand\bibfield[2]{#2}
\providecommand\bibinfo[2]{#2}
\providecommand\natexlab[1]{#1}
\providecommand\showeprint[2][]{arXiv:#2}

\bibitem[Alliez et~al\mbox{.}(2005)]%
        {alliez2005variationaltetmeshing}
\bibfield{author}{\bibinfo{person}{Pierre Alliez}, \bibinfo{person}{David Cohen-Steiner}, \bibinfo{person}{Mariette Yvinec}, {and} \bibinfo{person}{Mathieu Desbrun}.} \bibinfo{year}{2005}\natexlab{}.
\newblock \showarticletitle{Variational tetrahedral meshing}.
\newblock \bibinfo{journal}{\emph{ACM Trans. Graph.}} \bibinfo{volume}{24}, \bibinfo{number}{3} (\bibinfo{date}{July} \bibinfo{year}{2005}), \bibinfo{pages}{617–625}.
\newblock
\showISSN{0730-0301}
\href{https://doi.org/10.1145/1073204.1073238}{doi:\nolinkurl{10.1145/1073204.1073238}}


\bibitem[Bey(1995)]%
        {Bey1995TetrahedralGR}
\bibfield{author}{\bibinfo{person}{J{\"u}rgen Bey}.} \bibinfo{year}{1995}\natexlab{}.
\newblock \showarticletitle{Tetrahedral grid refinement}.
\newblock \bibinfo{journal}{\emph{Computing}}  \bibinfo{volume}{55} (\bibinfo{year}{1995}), \bibinfo{pages}{355--378}.
\newblock
\urldef\tempurl%
\url{https://api.semanticscholar.org/CorpusID:20829446}
\showURL{%
\tempurl}


\bibitem[Boss et~al\mbox{.}(2021a)]%
        {boss2021nerd}
\bibfield{author}{\bibinfo{person}{Mark Boss}, \bibinfo{person}{Raphael Braun}, \bibinfo{person}{Varun Jampani}, \bibinfo{person}{Jonathan~T. Barron}, \bibinfo{person}{Ce Liu}, {and} \bibinfo{person}{Hendrik~P.A. Lensch}.} \bibinfo{year}{2021}\natexlab{a}.
\newblock \showarticletitle{NeRD: Neural Reflectance Decomposition from Image Collections}. In \bibinfo{booktitle}{\emph{IEEE International Conference on Computer Vision (ICCV)}}.
\newblock


\bibitem[Boss et~al\mbox{.}(2021b)]%
        {boss2021neuralpil}
\bibfield{author}{\bibinfo{person}{Mark Boss}, \bibinfo{person}{Varun Jampani}, \bibinfo{person}{Raphael Braun}, \bibinfo{person}{Ce Liu}, \bibinfo{person}{Jonathan~T. Barron}, {and} \bibinfo{person}{Hendrik~P.A. Lensch}.} \bibinfo{year}{2021}\natexlab{b}.
\newblock \showarticletitle{Neural-PIL: Neural Pre-Integrated Lighting for Reflectance Decomposition}. In \bibinfo{booktitle}{\emph{Advances in Neural Information Processing Systems (NeurIPS)}}.
\newblock


\bibitem[Chen and Xu(2004)]%
        {ODTChen2004}
\bibfield{author}{\bibinfo{person}{Long Chen} {and} \bibinfo{person}{Jin-chao Xu}.} \bibinfo{year}{2004}\natexlab{}.
\newblock \showarticletitle{OPTIMAL DELAUNAY TRIANGULATIONS}.
\newblock \bibinfo{journal}{\emph{Journal of Computational Mathematics}} \bibinfo{volume}{22}, \bibinfo{number}{2} (\bibinfo{year}{2004}), \bibinfo{pages}{299--308}.
\newblock
\showISSN{02549409, 19917139}
\urldef\tempurl%
\url{http://www.jstor.org/stable/43693155}
\showURL{%
\tempurl}


\bibitem[Chen et~al\mbox{.}(2022)]%
        {chen2022ndc}
\bibfield{author}{\bibinfo{person}{Zhiqin Chen}, \bibinfo{person}{Andrea Tagliasacchi}, \bibinfo{person}{Thomas Funkhouser}, {and} \bibinfo{person}{Hao Zhang}.} \bibinfo{year}{2022}\natexlab{}.
\newblock \showarticletitle{Neural Dual Contouring}.
\newblock \bibinfo{journal}{\emph{ACM Transactions on Graphics (Special Issue of SIGGRAPH)}} \bibinfo{volume}{41}, \bibinfo{number}{4} (\bibinfo{year}{2022}).
\newblock


\bibitem[Chen and Zhang(2021)]%
        {chen2021nmc}
\bibfield{author}{\bibinfo{person}{Zhiqin Chen} {and} \bibinfo{person}{Hao Zhang}.} \bibinfo{year}{2021}\natexlab{}.
\newblock \showarticletitle{Neural Marching Cubes}.
\newblock \bibinfo{journal}{\emph{ACM Transactions on Graphics (Special Issue of SIGGRAPH Asia)}} \bibinfo{volume}{40}, \bibinfo{number}{6} (\bibinfo{year}{2021}).
\newblock


\bibitem[de~Ara\'{u}jo et~al\mbox{.}(2015)]%
        {araujo_survey_poly}
\bibfield{author}{\bibinfo{person}{B.~R. de Ara\'{u}jo}, \bibinfo{person}{Daniel~S. Lopes}, \bibinfo{person}{Pauline Jepp}, \bibinfo{person}{Joaquim~A. Jorge}, {and} \bibinfo{person}{Brian Wyvill}.} \bibinfo{year}{2015}\natexlab{}.
\newblock \showarticletitle{A Survey on Implicit Surface Polygonization}.
\newblock \bibinfo{journal}{\emph{ACM Comput. Surv.}} \bibinfo{volume}{47}, \bibinfo{number}{4}, Article \bibinfo{articleno}{60} (\bibinfo{date}{may} \bibinfo{year}{2015}), \bibinfo{numpages}{39}~pages.
\newblock
\showISSN{0360-0300}
\href{https://doi.org/10.1145/2732197}{doi:\nolinkurl{10.1145/2732197}}


\bibitem[Delaunay(1934)]%
        {delaunay1934}
\bibfield{author}{\bibinfo{person}{Boris Delaunay}.} \bibinfo{year}{1934}\natexlab{}.
\newblock \showarticletitle{Sur la sphère vide}.
\newblock \bibinfo{journal}{\emph{Bulletin de l'Académie des Sciences de l'URSS, Classe des sciences mathématiques et naturelles}}  \bibinfo{volume}{7} (\bibinfo{year}{1934}), \bibinfo{pages}{793--800}.
\newblock


\bibitem[Dey and Levine(2007)]%
        {DelaunayIsosurfacingLevine2007}
\bibfield{author}{\bibinfo{person}{Tamal~K. Dey} {and} \bibinfo{person}{Joshua~A. Levine}.} \bibinfo{year}{2007}\natexlab{}.
\newblock \showarticletitle{Delaunay Meshing of Isosurfaces}. In \bibinfo{booktitle}{\emph{IEEE International Conference on Shape Modeling and Applications 2007 (SMI '07)}}. \bibinfo{pages}{241--250}.
\newblock
\href{https://doi.org/10.1109/SMI.2007.15}{doi:\nolinkurl{10.1109/SMI.2007.15}}


\bibitem[Doi and Koide(1991)]%
        {doi_marchingtet}
\bibfield{author}{\bibinfo{person}{Akio Doi} {and} \bibinfo{person}{A. Koide}.} \bibinfo{year}{1991}\natexlab{}.
\newblock \showarticletitle{An Efficient Method of Triangulating Equivalued Surfaces by using Tetrahedral Cells}.
\newblock \bibinfo{journal}{\emph{IEICE Transactions on Information and Systems}}  \bibinfo{volume}{74} (\bibinfo{date}{01} \bibinfo{year}{1991}).
\newblock


\bibitem[Edavamadathil~Sivaram et~al\mbox{.}(2024)]%
        {vs2024ngfs}
\bibfield{author}{\bibinfo{person}{Venkataram Edavamadathil~Sivaram}, \bibinfo{person}{Tzu-Mao Li}, {and} \bibinfo{person}{Ravi Ramamoorthi}.} \bibinfo{year}{2024}\natexlab{}.
\newblock \showarticletitle{Neural Geometry Fields For Meshes}. In \bibinfo{booktitle}{\emph{ACM SIGGRAPH 2024 Conference Papers}} (Denver, CO, USA) \emph{(\bibinfo{series}{SIGGRAPH '24})}. \bibinfo{publisher}{Association for Computing Machinery}, \bibinfo{address}{New York, NY, USA}, Article \bibinfo{articleno}{29}, \bibinfo{numpages}{11}~pages.
\newblock
\showISBNx{9798400705250}
\href{https://doi.org/10.1145/3641519.3657399}{doi:\nolinkurl{10.1145/3641519.3657399}}


\bibitem[Fuji~Tsang et~al\mbox{.}(2022)]%
        {KaolinLibrary}
\bibfield{author}{\bibinfo{person}{Clement Fuji~Tsang}, \bibinfo{person}{Maria Shugrina}, \bibinfo{person}{Jean~Francois Lafleche}, \bibinfo{person}{Towaki Takikawa}, \bibinfo{person}{Jiehan Wang}, \bibinfo{person}{Charles Loop}, \bibinfo{person}{Wenzheng Chen}, \bibinfo{person}{Krishna~Murthy Jatavallabhula}, \bibinfo{person}{Edward Smith}, \bibinfo{person}{Artem Rozantsev}, \bibinfo{person}{Or Perel}, \bibinfo{person}{Tianchang Shen}, \bibinfo{person}{Jun Gao}, \bibinfo{person}{Sanja Fidler}, \bibinfo{person}{Gavriel State}, \bibinfo{person}{Jason Gorski}, \bibinfo{person}{Tommy Xiang}, \bibinfo{person}{Jianing Li}, \bibinfo{person}{Michael Li}, {and} \bibinfo{person}{Rev Lebaredian}.} \bibinfo{year}{2022}\natexlab{}.
\newblock \bibinfo{title}{Kaolin: A Pytorch Library for Accelerating 3D Deep Learning Research}.
\newblock \bibinfo{howpublished}{\url{https://github.com/NVIDIAGameWorks/kaolin}}.
\newblock


\bibitem[Gao et~al\mbox{.}(2020)]%
        {gao2020deftet}
\bibfield{author}{\bibinfo{person}{Jun Gao}, \bibinfo{person}{Wenzheng Chen}, \bibinfo{person}{Tommy Xiang}, \bibinfo{person}{Clement~Fuji Tsang}, \bibinfo{person}{Alec Jacobson}, \bibinfo{person}{Morgan McGuire}, {and} \bibinfo{person}{Sanja Fidler}.} \bibinfo{year}{2020}\natexlab{}.
\newblock \showarticletitle{Learning Deformable Tetrahedral Meshes for 3D Reconstruction}. In \bibinfo{booktitle}{\emph{Advances In Neural Information Processing Systems}}.
\newblock


\bibitem[Gao et~al\mbox{.}(2022)]%
        {gao2022get3d}
\bibfield{author}{\bibinfo{person}{Jun Gao}, \bibinfo{person}{Tianchang Shen}, \bibinfo{person}{Zian Wang}, \bibinfo{person}{Wenzheng Chen}, \bibinfo{person}{Kangxue Yin}, \bibinfo{person}{Daiqing Li}, \bibinfo{person}{Or Litany}, \bibinfo{person}{Zan Gojcic}, {and} \bibinfo{person}{Sanja Fidler}.} \bibinfo{year}{2022}\natexlab{}.
\newblock \showarticletitle{GET3D: A Generative Model of High Quality 3D Textured Shapes Learned from Images}. In \bibinfo{booktitle}{\emph{Advances In Neural Information Processing Systems}}.
\newblock


\bibitem[Gkioxari et~al\mbox{.}(2019)]%
        {gkioxari2020meshrcnn}
\bibfield{author}{\bibinfo{person}{Georgia Gkioxari}, \bibinfo{person}{Jitendra Malik}, {and} \bibinfo{person}{Justin Johnson}.} \bibinfo{year}{2019}\natexlab{}.
\newblock \bibinfo{title}{Mesh R-CNN}.
\newblock


\bibitem[Google(2017)]%
        {draco2017}
\bibfield{author}{\bibinfo{person}{Google}.} \bibinfo{year}{2017}\natexlab{}.
\newblock \bibinfo{title}{Draco: 3D Data Compression}.
\newblock \bibinfo{howpublished}{\url{https://github.com/google/draco}}.
\newblock
\newblock
\shownote{Accessed: 2024-12-28}.


\bibitem[Groueix et~al\mbox{.}(2018)]%
        {groueix2018atlasnetpapiermacheapproachlearning}
\bibfield{author}{\bibinfo{person}{Thibault Groueix}, \bibinfo{person}{Matthew Fisher}, \bibinfo{person}{Vladimir~G. Kim}, \bibinfo{person}{Bryan Russell}, {and} \bibinfo{person}{Mathieu Aubry}.} \bibinfo{year}{2018}\natexlab{}.
\newblock \bibinfo{title}{{AtlasNet: A Papier-M\^ach\'e Approach to Learning 3D Surface Generation}}.
\newblock


\bibitem[Hanocka et~al\mbox{.}(2020)]%
        {hanocka_2020_point2mesh}
\bibfield{author}{\bibinfo{person}{Rana Hanocka}, \bibinfo{person}{Gal Metzer}, \bibinfo{person}{Raja Giryes}, {and} \bibinfo{person}{Daniel Cohen-Or}.} \bibinfo{year}{2020}\natexlab{}.
\newblock \showarticletitle{Point2Mesh: a self-prior for deformable meshes}.
\newblock \bibinfo{journal}{\emph{ACM Trans. Graph.}} \bibinfo{volume}{39}, \bibinfo{number}{4}, Article \bibinfo{articleno}{126} (\bibinfo{date}{aug} \bibinfo{year}{2020}), \bibinfo{numpages}{12}~pages.
\newblock
\showISSN{0730-0301}
\href{https://doi.org/10.1145/3386569.3392415}{doi:\nolinkurl{10.1145/3386569.3392415}}


\bibitem[Hasselgren et~al\mbox{.}(2022)]%
        {hasselgren2022nvdiffrecmc}
\bibfield{author}{\bibinfo{person}{Jon Hasselgren}, \bibinfo{person}{Nikolai Hofmann}, {and} \bibinfo{person}{Jacob Munkberg}.} \bibinfo{year}{2022}\natexlab{}.
\newblock \showarticletitle{Shape, Light, and Material Decomposition from Images using Monte Carlo Rendering and Denoising}.
\newblock   \bibinfo{volume}{35} (\bibinfo{year}{2022}), \bibinfo{pages}{22856--22869}.
\newblock
\urldef\tempurl%
\url{https://proceedings.neurips.cc/paper_files/paper/2022/file/8fcb27984bf16ca03cad643244ec470d-Paper-Conference.pdf}
\showURL{%
\tempurl}


\bibitem[He et~al\mbox{.}(2013)]%
        {GuidedFilterHe2013}
\bibfield{author}{\bibinfo{person}{Kaiming He}, \bibinfo{person}{Jian Sun}, {and} \bibinfo{person}{Xiaoou Tang}.} \bibinfo{year}{2013}\natexlab{}.
\newblock \showarticletitle{Guided Image Filtering}.
\newblock \bibinfo{journal}{\emph{IEEE Trans. Pattern Anal. Mach. Intell.}} \bibinfo{volume}{35}, \bibinfo{number}{6} (\bibinfo{date}{June} \bibinfo{year}{2013}), \bibinfo{pages}{1397–1409}.
\newblock
\showISSN{0162-8828}
\href{https://doi.org/10.1109/TPAMI.2012.213}{doi:\nolinkurl{10.1109/TPAMI.2012.213}}


\bibitem[Hertz et~al\mbox{.}(2020)]%
        {Hertz2020deep}
\bibfield{author}{\bibinfo{person}{Amir Hertz}, \bibinfo{person}{Rana Hanocka}, \bibinfo{person}{Raja Giryes}, {and} \bibinfo{person}{Daniel Cohen-Or}.} \bibinfo{year}{2020}\natexlab{}.
\newblock \showarticletitle{Deep Geometric Texture Synthesis}.
\newblock \bibinfo{journal}{\emph{ACM Trans. Graph.}} \bibinfo{volume}{39}, \bibinfo{number}{4}, Article \bibinfo{articleno}{108} (\bibinfo{year}{2020}).
\newblock
\showISSN{0730-0301}
\href{https://doi.org/10.1145/3386569.3392471}{doi:\nolinkurl{10.1145/3386569.3392471}}


\bibitem[Hertz et~al\mbox{.}(2021)]%
        {SAPE:2021}
\bibfield{author}{\bibinfo{person}{Amir Hertz}, \bibinfo{person}{Or Perel}, \bibinfo{person}{Raja Giryes}, \bibinfo{person}{Olga Sorkine-Hornung}, {and} \bibinfo{person}{Daniel Cohen-Or}.} \bibinfo{year}{2021}\natexlab{}.
\newblock \showarticletitle{SAPE: Spatially-Adaptive Progressive Encoding for Neural Optimization}. In \bibinfo{booktitle}{\emph{Proc.\ NeurIPS}}.
\newblock


\bibitem[Hertz et~al\mbox{.}(2022)]%
        {SPAGHETTI:2022}
\bibfield{author}{\bibinfo{person}{Amir Hertz}, \bibinfo{person}{Or Perel}, \bibinfo{person}{Raja Giryes}, \bibinfo{person}{Olga Sorkine-Hornung}, {and} \bibinfo{person}{Daniel Cohen-Or}.} \bibinfo{year}{2022}\natexlab{}.
\newblock \showarticletitle{{SPAGHETTI}: Editing Implicit Shapes Through Part Aware Generation}.
\newblock \bibinfo{journal}{\emph{ACM Transactions on Graphics}} \bibinfo{volume}{41}, \bibinfo{number}{4} (\bibinfo{year}{2022}), \bibinfo{pages}{106:1–20}.
\newblock


\bibitem[Hilton et~al\mbox{.}(1996)]%
        {marchingtriangles}
\bibfield{author}{\bibinfo{person}{A. Hilton}, \bibinfo{person}{A.J. Stoddart}, \bibinfo{person}{J. Illingworth}, {and} \bibinfo{person}{T. Windeatt}.} \bibinfo{year}{1996}\natexlab{}.
\newblock \showarticletitle{Marching triangles: range image fusion for complex object modelling}. In \bibinfo{booktitle}{\emph{Proceedings of 3rd IEEE International Conference on Image Processing}}, Vol.~\bibinfo{volume}{2}. \bibinfo{pages}{381--384 vol.2}.
\newblock
\href{https://doi.org/10.1109/ICIP.1996.560840}{doi:\nolinkurl{10.1109/ICIP.1996.560840}}


\bibitem[Hui and Jiang(1999)]%
        {Hui1999TetAdaptive}
\bibfield{author}{\bibinfo{person}{K.~C. Hui} {and} \bibinfo{person}{Z.~H. Jiang}.} \bibinfo{year}{1999}\natexlab{}.
\newblock \showarticletitle{Tetrahedra Based Adaptive Polygonization of Implicit Surface Patches}.
\newblock \bibinfo{journal}{\emph{Computer Graphics Forum}} \bibinfo{volume}{18}, \bibinfo{number}{1} (\bibinfo{year}{1999}), \bibinfo{pages}{57--68}.
\newblock
\href{https://doi.org/10.1111/1467-8659.00302}{doi:\nolinkurl{10.1111/1467-8659.00302}}


\bibitem[Ju et~al\mbox{.}(2002)]%
        {ju_dualcontouring_2002}
\bibfield{author}{\bibinfo{person}{Tao Ju}, \bibinfo{person}{Frank Losasso}, \bibinfo{person}{Scott Schaefer}, {and} \bibinfo{person}{Joe Warren}.} \bibinfo{year}{2002}\natexlab{}.
\newblock \showarticletitle{Dual contouring of hermite data}.
\newblock \bibinfo{journal}{\emph{ACM Trans. Graph.}} \bibinfo{volume}{21}, \bibinfo{number}{3} (\bibinfo{date}{jul} \bibinfo{year}{2002}), \bibinfo{pages}{339–346}.
\newblock
\showISSN{0730-0301}
\href{https://doi.org/10.1145/566654.566586}{doi:\nolinkurl{10.1145/566654.566586}}


\bibitem[Ju et~al\mbox{.}(2024)]%
        {Ju2024Adaptive}
\bibfield{author}{\bibinfo{person}{Yiwen Ju}, \bibinfo{person}{Xingyi Du}, \bibinfo{person}{Qingnan Zhou}, \bibinfo{person}{Nathan Carr}, {and} \bibinfo{person}{Tao Ju}.} \bibinfo{year}{2024}\natexlab{}.
\newblock \showarticletitle{Adaptive grid generation for discretizing implicit complexes}.
\newblock \bibinfo{journal}{\emph{ACM Trans. Graph.}} \bibinfo{volume}{43}, \bibinfo{number}{4}, Article \bibinfo{articleno}{82} (\bibinfo{date}{July} \bibinfo{year}{2024}), \bibinfo{numpages}{17}~pages.
\newblock
\showISSN{0730-0301}
\href{https://doi.org/10.1145/3658215}{doi:\nolinkurl{10.1145/3658215}}


\bibitem[Kim et~al\mbox{.}(2000)]%
        {Kim2000PolygonizationON}
\bibfield{author}{\bibinfo{person}{Dae-Hyun Kim}, \bibinfo{person}{Ulf Doring}, {and} \bibinfo{person}{Beat Bruderlin}.} \bibinfo{year}{2000}\natexlab{}.
\newblock \showarticletitle{Polygonization of Non-manifolds With the Aid of Interval Operators}.
\newblock
\urldef\tempurl%
\url{https://api.semanticscholar.org/CorpusID:14381080}
\showURL{%
\tempurl}


\bibitem[Kuang et~al\mbox{.}(2023)]%
        {kuang2023stanfordorb}
\bibfield{author}{\bibinfo{person}{Zhengfei Kuang}, \bibinfo{person}{Yunzhi Zhang}, \bibinfo{person}{Hong-Xing Yu}, \bibinfo{person}{Samir Agarwala}, \bibinfo{person}{Elliott Wu}, \bibinfo{person}{Jiajun Wu}, {et~al\mbox{.}}} \bibinfo{year}{2023}\natexlab{}.
\newblock \showarticletitle{Stanford-ORB: a real-world 3D object inverse rendering benchmark}.
\newblock \bibinfo{journal}{\emph{Advances in Neural Information Processing Systems Datasets and Benchmarks Track}}.
\newblock


\bibitem[Laine et~al\mbox{.}(2020)]%
        {Laine2020diffrast}
\bibfield{author}{\bibinfo{person}{Samuli Laine}, \bibinfo{person}{Janne Hellsten}, \bibinfo{person}{Tero Karras}, \bibinfo{person}{Yeongho Seol}, \bibinfo{person}{Jaakko Lehtinen}, {and} \bibinfo{person}{Timo Aila}.} \bibinfo{year}{2020}\natexlab{}.
\newblock \showarticletitle{Modular Primitives for High-Performance Differentiable Rendering}.
\newblock \bibinfo{journal}{\emph{ACM Transactions on Graphics}} \bibinfo{volume}{39}, \bibinfo{number}{6} (\bibinfo{year}{2020}).
\newblock


\bibitem[Laric(2012)]%
        {threedscans}
\bibfield{author}{\bibinfo{person}{Olivier Laric}.} \bibinfo{year}{2012}\natexlab{}.
\newblock \bibinfo{title}{Three D Scans: Free Downloadable 3D Scans}.
\newblock \bibinfo{howpublished}{\url{https://threedscans.com/}}.
\newblock
\newblock
\shownote{Scans based on collections from Albertina, Vienna; Kunsthistorisches Museum, Vienna; Musée Guimet, Paris; and other institutions. Supported by Lafayette Anticipation, Secession, and others. No copyright restrictions. Accessed: 2024-12-08}.


\bibitem[Liao et~al\mbox{.}(2018)]%
        {Liao2018CVPR}
\bibfield{author}{\bibinfo{person}{Yiyi Liao}, \bibinfo{person}{Simon Donn\'{e}}, {and} \bibinfo{person}{Andreas Geiger}.} \bibinfo{year}{2018}\natexlab{}.
\newblock \showarticletitle{Deep Marching Cubes: Learning Explicit Surface Representations}. In \bibinfo{booktitle}{\emph{Conference on Computer Vision and Pattern Recognition (CVPR)}}.
\newblock


\bibitem[Liu and Joe(1995)]%
        {LiuAdaptive1995}
\bibfield{author}{\bibinfo{person}{Anwei Liu} {and} \bibinfo{person}{Barry Joe}.} \bibinfo{year}{1995}\natexlab{}.
\newblock \showarticletitle{Quality Local Refinement of Tetrahedral Meshes Based on Bisection}.
\newblock \bibinfo{journal}{\emph{SIAM Journal on Scientific Computing}} \bibinfo{volume}{16}, \bibinfo{number}{6} (\bibinfo{year}{1995}), \bibinfo{pages}{1269--1291}.
\newblock
\href{https://doi.org/10.1137/0916074}{doi:\nolinkurl{10.1137/0916074}}


\bibitem[Liu et~al\mbox{.}(2018)]%
        {Liu:Paparazzi:2018}
\bibfield{author}{\bibinfo{person}{Hsueh-Ti~Derek Liu}, \bibinfo{person}{Michael Tao}, {and} \bibinfo{person}{Alec Jacobson}.} \bibinfo{year}{2018}\natexlab{}.
\newblock \showarticletitle{Paparazzi: Surface Editing by way of Multi-View Image Processing}.
\newblock \bibinfo{journal}{\emph{ACM Transactions on Graphics}} (\bibinfo{year}{2018}).
\newblock


\bibitem[Liu et~al\mbox{.}(2020)]%
        {Liu_2020_IERMeshing}
\bibfield{author}{\bibinfo{person}{Minghua Liu}, \bibinfo{person}{Xiaoshuai Zhang}, {and} \bibinfo{person}{Hao Su}.} \bibinfo{year}{2020}\natexlab{}.
\newblock \showarticletitle{Meshing Point Clouds with Predicted Intrinsic-Extrinsic Ratio Guidance}. In \bibinfo{booktitle}{\emph{Computer Vision – ECCV 2020: 16th European Conference, Glasgow, UK, August 23–28, 2020, Proceedings, Part VIII}} (Glasgow, United Kingdom). \bibinfo{publisher}{Springer-Verlag}, \bibinfo{address}{Berlin, Heidelberg}, \bibinfo{pages}{68–84}.
\newblock
\showISBNx{978-3-030-58597-6}
\href{https://doi.org/10.1007/978-3-030-58598-3_5}{doi:\nolinkurl{10.1007/978-3-030-58598-3_5}}


\bibitem[Lorensen and Cline(1987)]%
        {marchingcube}
\bibfield{author}{\bibinfo{person}{William~E. Lorensen} {and} \bibinfo{person}{Harvey~E. Cline}.} \bibinfo{year}{1987}\natexlab{}.
\newblock \showarticletitle{Marching cubes: A high resolution 3D surface construction algorithm}.
\newblock \bibinfo{journal}{\emph{SIGGRAPH Comput. Graph.}} \bibinfo{volume}{21}, \bibinfo{number}{4} (\bibinfo{date}{aug} \bibinfo{year}{1987}), \bibinfo{pages}{163–169}.
\newblock
\showISSN{0097-8930}
\href{https://doi.org/10.1145/37402.37422}{doi:\nolinkurl{10.1145/37402.37422}}


\bibitem[Loshchilov and Hutter(2019)]%
        {loshchilov2019decoupledweightdecayregularization}
\bibfield{author}{\bibinfo{person}{Ilya Loshchilov} {and} \bibinfo{person}{Frank Hutter}.} \bibinfo{year}{2019}\natexlab{}.
\newblock \bibinfo{title}{Decoupled Weight Decay Regularization}.
\newblock
\urldef\tempurl%
\url{https://openreview.net/forum?id=Bkg6RiCqY7}
\showURL{%
\tempurl}


\bibitem[Luo et~al\mbox{.}(2021)]%
        {LuoMT21DeepDT}
\bibfield{author}{\bibinfo{person}{Yiming Luo}, \bibinfo{person}{Zhenxing Mi}, {and} \bibinfo{person}{Wenbing Tao}.} \bibinfo{year}{2021}\natexlab{}.
\newblock \showarticletitle{DeepDT: Learning Geometry From Delaunay Triangulation for Surface Reconstruction}. In \bibinfo{booktitle}{\emph{Thirty-Fifth {AAAI} Conference on Artificial Intelligence, {AAAI} 2021, Thirty-Third Conference on Innovative Applications of Artificial Intelligence, {IAAI} 2021, The Eleventh Symposium on Educational Advances in Artificial Intelligence, {EAAI} 2021, Virtual Event, February 2-9, 2021}}. \bibinfo{publisher}{{AAAI} Press}, \bibinfo{pages}{2277--2285}.
\newblock
\href{https://doi.org/10.1609/AAAI.V35I3.16327}{doi:\nolinkurl{10.1609/AAAI.V35I3.16327}}


\bibitem[MacQueen(1967)]%
        {kmeans}
\bibfield{author}{\bibinfo{person}{J. MacQueen}.} \bibinfo{year}{1967}\natexlab{}.
\newblock \bibinfo{title}{Some methods for classification and analysis of multivariate observations}.
\newblock \bibinfo{howpublished}{Proc. 5th {Berkeley} {Symp}. {Math}. {Stat}. {Probab}., {Univ}. {Calif}. 1965/66, 1, 281-297 (1967).}
\newblock


\bibitem[Maruani et~al\mbox{.}(2023)]%
        {Maruani_2023_ICCV}
\bibfield{author}{\bibinfo{person}{Nissim Maruani}, \bibinfo{person}{Roman Klokov}, \bibinfo{person}{Maks Ovsjanikov}, \bibinfo{person}{Pierre Alliez}, {and} \bibinfo{person}{Mathieu Desbrun}.} \bibinfo{year}{2023}\natexlab{}.
\newblock \showarticletitle{VoroMesh: Learning Watertight Surface Meshes with Voronoi Diagrams}. In \bibinfo{booktitle}{\emph{Proceedings of the IEEE/CVF International Conference on Computer Vision (ICCV)}}. \bibinfo{pages}{14565--14574}.
\newblock


\bibitem[Mescheder et~al\mbox{.}(2019)]%
        {mescheder2019occupancynetworkslearning3d}
\bibfield{author}{\bibinfo{person}{Lars Mescheder}, \bibinfo{person}{Michael Oechsle}, \bibinfo{person}{Michael Niemeyer}, \bibinfo{person}{Sebastian Nowozin}, {and} \bibinfo{person}{Andreas Geiger}.} \bibinfo{year}{2019}\natexlab{}.
\newblock \bibinfo{title}{{ Occupancy Networks: Learning 3D Reconstruction in Function Space }}.
\newblock \bibinfo{numpages}{4455-4465}~pages.
\newblock
\href{https://doi.org/10.1109/CVPR.2019.00459}{doi:\nolinkurl{10.1109/CVPR.2019.00459}}


\bibitem[Mildenhall et~al\mbox{.}(2020)]%
        {mildenhall2020nerf}
\bibfield{author}{\bibinfo{person}{Ben Mildenhall}, \bibinfo{person}{Pratul~P. Srinivasan}, \bibinfo{person}{Matthew Tancik}, \bibinfo{person}{Jonathan~T. Barron}, \bibinfo{person}{Ravi Ramamoorthi}, {and} \bibinfo{person}{Ren Ng}.} \bibinfo{year}{2020}\natexlab{}.
\newblock \showarticletitle{NeRF: Representing Scenes as Neural Radiance Fields for View Synthesis}. In \bibinfo{booktitle}{\emph{ECCV}}.
\newblock


\bibitem[M\"uller et~al\mbox{.}(2022)]%
        {mueller2022instant}
\bibfield{author}{\bibinfo{person}{Thomas M\"uller}, \bibinfo{person}{Alex Evans}, \bibinfo{person}{Christoph Schied}, {and} \bibinfo{person}{Alexander Keller}.} \bibinfo{year}{2022}\natexlab{}.
\newblock \showarticletitle{Instant Neural Graphics Primitives with a Multiresolution Hash Encoding}.
\newblock \bibinfo{journal}{\emph{ACM Trans. Graph.}} \bibinfo{volume}{41}, \bibinfo{number}{4}, Article \bibinfo{articleno}{102} (\bibinfo{date}{July} \bibinfo{year}{2022}), \bibinfo{numpages}{15}~pages.
\newblock
\href{https://doi.org/10.1145/3528223.3530127}{doi:\nolinkurl{10.1145/3528223.3530127}}


\bibitem[Munkberg et~al\mbox{.}(2022)]%
        {Munkberg_2022_CVPR}
\bibfield{author}{\bibinfo{person}{Jacob Munkberg}, \bibinfo{person}{Jon Hasselgren}, \bibinfo{person}{Tianchang Shen}, \bibinfo{person}{Jun Gao}, \bibinfo{person}{Wenzheng Chen}, \bibinfo{person}{Alex Evans}, \bibinfo{person}{Thomas M\"uller}, {and} \bibinfo{person}{Sanja Fidler}.} \bibinfo{year}{2022}\natexlab{}.
\newblock \showarticletitle{{Extracting Triangular 3D Models, Materials, and Lighting From Images}}. In \bibinfo{booktitle}{\emph{Proceedings of the IEEE/CVF Conference on Computer Vision and Pattern Recognition (CVPR)}}. \bibinfo{pages}{8280--8290}.
\newblock


\bibitem[Muntoni and Cignoni(2021)]%
        {pymeshlab}
\bibfield{author}{\bibinfo{person}{Alessandro Muntoni} {and} \bibinfo{person}{Paolo Cignoni}.} \bibinfo{year}{2021}\natexlab{}.
\newblock \bibinfo{booktitle}{\emph{{PyMeshLab}}}.
\newblock
\href{https://doi.org/10.5281/zenodo.4438750}{doi:\nolinkurl{10.5281/zenodo.4438750}}


\bibitem[Nicolet et~al\mbox{.}(2021)]%
        {Nicolet2021Large}
\bibfield{author}{\bibinfo{person}{Baptiste Nicolet}, \bibinfo{person}{Alec Jacobson}, {and} \bibinfo{person}{Wenzel Jakob}.} \bibinfo{year}{2021}\natexlab{}.
\newblock \showarticletitle{Large Steps in Inverse Rendering of Geometry}.
\newblock \bibinfo{journal}{\emph{ACM Transactions on Graphics (Proceedings of SIGGRAPH Asia)}} \bibinfo{volume}{40}, \bibinfo{number}{6} (\bibinfo{date}{Dec.} \bibinfo{year}{2021}).
\newblock
\href{https://doi.org/10.1145/3478513.3480501}{doi:\nolinkurl{10.1145/3478513.3480501}}


\bibitem[Nielson(2004)]%
        {nielson_dualmarchingcube_2004}
\bibfield{author}{\bibinfo{person}{G.M. Nielson}.} \bibinfo{year}{2004}\natexlab{}.
\newblock \showarticletitle{Dual marching cubes}. In \bibinfo{booktitle}{\emph{IEEE Visualization 2004}}. \bibinfo{pages}{489--496}.
\newblock
\href{https://doi.org/10.1109/VISUAL.2004.28}{doi:\nolinkurl{10.1109/VISUAL.2004.28}}


\bibitem[Park et~al\mbox{.}(2019)]%
        {DeepSDF_Park_2019_CVPR}
\bibfield{author}{\bibinfo{person}{Jeong~Joon Park}, \bibinfo{person}{Peter Florence}, \bibinfo{person}{Julian Straub}, \bibinfo{person}{Richard Newcombe}, {and} \bibinfo{person}{Steven Lovegrove}.} \bibinfo{year}{2019}\natexlab{}.
\newblock \showarticletitle{DeepSDF: Learning Continuous Signed Distance Functions for Shape Representation}. In \bibinfo{booktitle}{\emph{The IEEE Conference on Computer Vision and Pattern Recognition (CVPR)}}.
\newblock


\bibitem[Paszke et~al\mbox{.}(2019)]%
        {paszke2019pytorchimperativestylehighperformance}
\bibfield{author}{\bibinfo{person}{Adam Paszke}, \bibinfo{person}{Sam Gross}, \bibinfo{person}{Francisco Massa}, \bibinfo{person}{Adam Lerer}, \bibinfo{person}{James Bradbury}, \bibinfo{person}{Gregory Chanan}, \bibinfo{person}{Trevor Killeen}, \bibinfo{person}{Zeming Lin}, \bibinfo{person}{Natalia Gimelshein}, \bibinfo{person}{Luca Antiga}, \bibinfo{person}{Alban Desmaison}, \bibinfo{person}{Andreas Köpf}, \bibinfo{person}{Edward Yang}, \bibinfo{person}{Zach DeVito}, \bibinfo{person}{Martin Raison}, \bibinfo{person}{Alykhan Tejani}, \bibinfo{person}{Sasank Chilamkurthy}, \bibinfo{person}{Benoit Steiner}, \bibinfo{person}{Lu Fang}, \bibinfo{person}{Junjie Bai}, {and} \bibinfo{person}{Soumith Chintala}.} \bibinfo{year}{2019}\natexlab{}.
\newblock \bibinfo{title}{PyTorch: An Imperative Style, High-Performance Deep Learning Library}.
\newblock
\showeprint[arxiv]{1912.01703}~[cs.LG]
\urldef\tempurl%
\url{https://arxiv.org/abs/1912.01703}
\showURL{%
\tempurl}


\bibitem[Peng et~al\mbox{.}(2020)]%
        {Peng2020ECCV}
\bibfield{author}{\bibinfo{person}{Songyou Peng}, \bibinfo{person}{Michael Niemeyer}, \bibinfo{person}{Lars Mescheder}, \bibinfo{person}{Marc Pollefeys}, {and} \bibinfo{person}{Andreas Geiger}.} \bibinfo{year}{2020}\natexlab{}.
\newblock \showarticletitle{Convolutional Occupancy Networks}. In \bibinfo{booktitle}{\emph{European Conference on Computer Vision (ECCV)}}.
\newblock


\bibitem[Petrov et~al\mbox{.}(2024)]%
        {petrov2024gem3d}
\bibfield{author}{\bibinfo{person}{Dmitry Petrov}, \bibinfo{person}{Pradyumn Goyal}, \bibinfo{person}{Vikas Thamizharasan}, \bibinfo{person}{Vladimir Kim}, \bibinfo{person}{Matheus Gadelha}, \bibinfo{person}{Melinos Averkiou}, \bibinfo{person}{Siddhartha Chaudhuri}, {and} \bibinfo{person}{Evangelos Kalogerakis}.} \bibinfo{year}{2024}\natexlab{}.
\newblock \bibinfo{title}{GEM3D: GEnerative Medial Abstractions for 3D Shape Synthesis}.
\newblock \bibinfo{numpages}{11}~pages.
\newblock
\showISBNx{9798400705250}
\href{https://doi.org/10.1145/3641519.3657415}{doi:\nolinkurl{10.1145/3641519.3657415}}


\bibitem[Poole et~al\mbox{.}(2022)]%
        {poole2022dreamfusion}
\bibfield{author}{\bibinfo{person}{Ben Poole}, \bibinfo{person}{Ajay Jain}, \bibinfo{person}{Jonathan~T. Barron}, {and} \bibinfo{person}{Ben Mildenhall}.} \bibinfo{year}{2022}\natexlab{}.
\newblock \showarticletitle{DreamFusion: Text-to-3D using 2D Diffusion}.
\newblock \bibinfo{journal}{\emph{arXiv}} (\bibinfo{year}{2022}).
\newblock


\bibitem[Radford et~al\mbox{.}(2021)]%
        {radford2021learning}
\bibfield{author}{\bibinfo{person}{Alec Radford}, \bibinfo{person}{Jong~Wook Kim}, \bibinfo{person}{Chris Hallacy}, \bibinfo{person}{Aditya Ramesh}, \bibinfo{person}{Gabriel Goh}, \bibinfo{person}{Sandhini Agarwal}, \bibinfo{person}{Girish Sastry}, \bibinfo{person}{Amanda Askell}, \bibinfo{person}{Pamela Mishkin}, \bibinfo{person}{Jack Clark}, \bibinfo{person}{Gretchen Krueger}, {and} \bibinfo{person}{Ilya Sutskever}.} \bibinfo{year}{2021}\natexlab{}.
\newblock \showarticletitle{Learning Transferable Visual Models From Natural Language Supervision}. In \bibinfo{booktitle}{\emph{Proceedings of the 38th International Conference on Machine Learning}} \emph{(\bibinfo{series}{Proceedings of Machine Learning Research}, Vol.~\bibinfo{volume}{139})}, \bibfield{editor}{\bibinfo{person}{Marina Meila} {and} \bibinfo{person}{Tong Zhang}} (Eds.). \bibinfo{publisher}{PMLR}, \bibinfo{pages}{8748--8763}.
\newblock
\urldef\tempurl%
\url{https://proceedings.mlr.press/v139/radford21a.html}
\showURL{%
\tempurl}


\bibitem[Rakotosaona et~al\mbox{.}(2021)]%
        {Rakotosaona_2021_CVPR_DSE}
\bibfield{author}{\bibinfo{person}{Marie-Julie Rakotosaona}, \bibinfo{person}{Paul Guerrero}, \bibinfo{person}{Noam Aigerman}, \bibinfo{person}{Niloy~J. Mitra}, {and} \bibinfo{person}{Maks Ovsjanikov}.} \bibinfo{year}{2021}\natexlab{}.
\newblock \showarticletitle{Learning Delaunay Surface Elements for Mesh Reconstruction}. In \bibinfo{booktitle}{\emph{Proceedings of the IEEE/CVF Conference on Computer Vision and Pattern Recognition (CVPR)}}. \bibinfo{pages}{22--31}.
\newblock


\bibitem[Remelli et~al\mbox{.}(2020)]%
        {remelli2020meshsdf}
\bibfield{author}{\bibinfo{person}{Edoardo Remelli}, \bibinfo{person}{Artem Lukoianov}, \bibinfo{person}{Stephan Richter}, \bibinfo{person}{Benoit Guillard}, \bibinfo{person}{Timur Bagautdinov}, \bibinfo{person}{Pierre Baque}, {and} \bibinfo{person}{Pascal Fua}.} \bibinfo{year}{2020}\natexlab{}.
\newblock \showarticletitle{MeshSDF: Differentiable Iso-Surface Extraction}. In \bibinfo{booktitle}{\emph{Advances in Neural Information Processing Systems}}, \bibfield{editor}{\bibinfo{person}{H.~Larochelle}, \bibinfo{person}{M.~Ranzato}, \bibinfo{person}{R.~Hadsell}, \bibinfo{person}{M.~F. Balcan}, {and} \bibinfo{person}{H.~Lin}} (Eds.), Vol.~\bibinfo{volume}{33}. \bibinfo{publisher}{Curran Associates, Inc.}, \bibinfo{pages}{22468--22478}.
\newblock
\urldef\tempurl%
\url{https://proceedings.neurips.cc/paper/2020/file/fe40fb944ee700392ed51bfe84dd4e3d-Paper.pdf}
\showURL{%
\tempurl}


\bibitem[Ren et~al\mbox{.}(2025)]%
        {mcgrids2024}
\bibfield{author}{\bibinfo{person}{Daxuan Ren}, \bibinfo{person}{Hezi Shi}, \bibinfo{person}{Jianmin Zheng}, {and} \bibinfo{person}{Jianfei Cai}.} \bibinfo{year}{2025}\natexlab{}.
\newblock \showarticletitle{McGrids: Monte Carlo-Driven Adaptive Grids for Iso-Surface Extraction}. In \bibinfo{booktitle}{\emph{Computer Vision -- ECCV 2024}}, \bibfield{editor}{\bibinfo{person}{Ale{\v{s}} Leonardis}, \bibinfo{person}{Elisa Ricci}, \bibinfo{person}{Stefan Roth}, \bibinfo{person}{Olga Russakovsky}, \bibinfo{person}{Torsten Sattler}, {and} \bibinfo{person}{G{\"u}l Varol}} (Eds.). \bibinfo{publisher}{Springer Nature Switzerland}, \bibinfo{address}{Cham}, \bibinfo{pages}{127--144}.
\newblock
\showISBNx{978-3-031-72998-0}


\bibitem[Schaefer et~al\mbox{.}(2007)]%
        {schaefer2007manifolfdualcontouring}
\bibfield{author}{\bibinfo{person}{Scott Schaefer}, \bibinfo{person}{Tao Ju}, {and} \bibinfo{person}{Joe Warren}.} \bibinfo{year}{2007}\natexlab{}.
\newblock \showarticletitle{Manifold Dual Contouring}.
\newblock \bibinfo{journal}{\emph{IEEE Transactions on Visualization and Computer Graphics}} \bibinfo{volume}{13}, \bibinfo{number}{3} (\bibinfo{date}{May} \bibinfo{year}{2007}), \bibinfo{pages}{610–619}.
\newblock
\showISSN{1077-2626}
\href{https://doi.org/10.1109/TVCG.2007.1012}{doi:\nolinkurl{10.1109/TVCG.2007.1012}}


\bibitem[Sell\'{a}n et~al\mbox{.}(2023)]%
        {Sellan2023RFTS}
\bibfield{author}{\bibinfo{person}{Silvia Sell\'{a}n}, \bibinfo{person}{Christopher Batty}, {and} \bibinfo{person}{Oded Stein}.} \bibinfo{year}{2023}\natexlab{}.
\newblock \showarticletitle{Reach For the Spheres: Tangency-aware surface reconstruction of SDFs}. In \bibinfo{booktitle}{\emph{SIGGRAPH Asia 2023 Conference Papers}}. Article \bibinfo{articleno}{73}, \bibinfo{numpages}{11}~pages.
\newblock


\bibitem[Sell\'{a}n et~al\mbox{.}(2024)]%
        {Sellan2024RFTA}
\bibfield{author}{\bibinfo{person}{Silvia Sell\'{a}n}, \bibinfo{person}{Yingying Ren}, \bibinfo{person}{Christopher Batty}, {and} \bibinfo{person}{Oded Stein}.} \bibinfo{year}{2024}\natexlab{}.
\newblock \showarticletitle{Reach For the Arcs: Reconstructing Surfaces from SDFs via Tangent Points}. In \bibinfo{booktitle}{\emph{SIGGRAPH 2024 Conference Papers}}. Article \bibinfo{articleno}{25}, \bibinfo{numpages}{11}~pages.
\newblock


\bibitem[Sharp and Ovsjanikov(2020)]%
        {sharp2020ptn}
\bibfield{author}{\bibinfo{person}{Nicholas Sharp} {and} \bibinfo{person}{Maks Ovsjanikov}.} \bibinfo{year}{2020}\natexlab{}.
\newblock \showarticletitle{"PointTriNet: Learned Triangulation of 3D Point Sets"}. In \bibinfo{booktitle}{\emph{Proceedings of the European Conference on Computer Vision (ECCV)}}.
\newblock


\bibitem[Shen et~al\mbox{.}(2021)]%
        {shen2021dmtet}
\bibfield{author}{\bibinfo{person}{Tianchang Shen}, \bibinfo{person}{Jun Gao}, \bibinfo{person}{Kangxue Yin}, \bibinfo{person}{Ming-Yu Liu}, {and} \bibinfo{person}{Sanja Fidler}.} \bibinfo{year}{2021}\natexlab{}.
\newblock \showarticletitle{Deep Marching Tetrahedra: a Hybrid Representation for High-Resolution 3D Shape Synthesis}. In \bibinfo{booktitle}{\emph{Advances in Neural Information Processing Systems (NeurIPS)}}.
\newblock


\bibitem[Shen et~al\mbox{.}(2024)]%
        {spacemesh2024}
\bibfield{author}{\bibinfo{person}{Tianchang Shen}, \bibinfo{person}{Zhaoshuo Li}, \bibinfo{person}{Marc Law}, \bibinfo{person}{Matan Atzmon}, \bibinfo{person}{Sanja Fidler}, \bibinfo{person}{James Lucas}, \bibinfo{person}{Jun Gao}, {and} \bibinfo{person}{Nicholas Sharp}.} \bibinfo{year}{2024}\natexlab{}.
\newblock \showarticletitle{SpaceMesh: A Continuous Representation for Learning Manifold Surface Meshes}. In \bibinfo{booktitle}{\emph{SIGGRAPH Asia 2024 Conference Papers (SA Conference Papers '24)}} (Tokyo, Japan, December 3-6, 2024). \bibinfo{publisher}{ACM}, \bibinfo{address}{New York, NY, USA}, \bibinfo{pages}{11}.
\newblock
\href{https://doi.org/10.1145/3680528.3687634}{doi:\nolinkurl{10.1145/3680528.3687634}}


\bibitem[Shen et~al\mbox{.}(2023)]%
        {shen2023flexicubes}
\bibfield{author}{\bibinfo{person}{Tianchang Shen}, \bibinfo{person}{Jacob Munkberg}, \bibinfo{person}{Jon Hasselgren}, \bibinfo{person}{Kangxue Yin}, \bibinfo{person}{Zian Wang}, \bibinfo{person}{Wenzheng Chen}, \bibinfo{person}{Zan Gojcic}, \bibinfo{person}{Sanja Fidler}, \bibinfo{person}{Nicholas Sharp}, {and} \bibinfo{person}{Jun Gao}.} \bibinfo{year}{2023}\natexlab{}.
\newblock \showarticletitle{Flexible Isosurface Extraction for Gradient-Based Mesh Optimization}.
\newblock \bibinfo{journal}{\emph{ACM Trans. Graph.}} \bibinfo{volume}{42}, \bibinfo{number}{4}, Article \bibinfo{articleno}{37} (\bibinfo{date}{jul} \bibinfo{year}{2023}), \bibinfo{numpages}{16}~pages.
\newblock
\showISSN{0730-0301}
\href{https://doi.org/10.1145/3592430}{doi:\nolinkurl{10.1145/3592430}}


\bibitem[Shewchuk(2002)]%
        {shewchuk:hal-04614934}
\bibfield{author}{\bibinfo{person}{Jonathan~Richard Shewchuk}.} \bibinfo{year}{2002}\natexlab{}.
\newblock \showarticletitle{{What is a Good Linear Element? Interpolation, Conditioning, and Quality Measures}}. In \bibinfo{booktitle}{\emph{{11th International Meshing Roundtable, \{IMR\} 2002}}}. \bibinfo{address}{Ithaca, United States}.
\newblock
\urldef\tempurl%
\url{https://hal.science/hal-04614934}
\showURL{%
\tempurl}


\bibitem[Si(2015)]%
        {tetgenSi2025}
\bibfield{author}{\bibinfo{person}{Hang Si}.} \bibinfo{year}{2015}\natexlab{}.
\newblock \showarticletitle{TetGen, a Delaunay-Based Quality Tetrahedral Mesh Generator}.
\newblock \bibinfo{journal}{\emph{ACM Trans. Math. Softw.}} \bibinfo{volume}{41}, \bibinfo{number}{2}, Article \bibinfo{articleno}{11} (\bibinfo{date}{Feb.} \bibinfo{year}{2015}), \bibinfo{numpages}{36}~pages.
\newblock
\showISSN{0098-3500}
\href{https://doi.org/10.1145/2629697}{doi:\nolinkurl{10.1145/2629697}}


\bibitem[Sitzmann et~al\mbox{.}(2020)]%
        {sitzmann2019siren}
\bibfield{author}{\bibinfo{person}{Vincent Sitzmann}, \bibinfo{person}{Julien~N.P. Martel}, \bibinfo{person}{Alexander~W. Bergman}, \bibinfo{person}{David~B. Lindell}, {and} \bibinfo{person}{Gordon Wetzstein}.} \bibinfo{year}{2020}\natexlab{}.
\newblock \showarticletitle{Implicit Neural Representations with Periodic Activation Functions}. In \bibinfo{booktitle}{\emph{Proc. NeurIPS}}.
\newblock


\bibitem[Stein et~al\mbox{.}(2018)]%
        {Stein:2018:DSF}
\bibfield{author}{\bibinfo{person}{Oded Stein}, \bibinfo{person}{Eitan Grinspun}, {and} \bibinfo{person}{Keenan Crane}.} \bibinfo{year}{2018}\natexlab{}.
\newblock \showarticletitle{Developability of triangle meshes}.
\newblock \bibinfo{journal}{\emph{ACM Trans. Graph.}} \bibinfo{volume}{37}, \bibinfo{number}{4}, Article \bibinfo{articleno}{77} (\bibinfo{date}{July} \bibinfo{year}{2018}), \bibinfo{numpages}{14}~pages.
\newblock
\showISSN{0730-0301}
\href{https://doi.org/10.1145/3197517.3201303}{doi:\nolinkurl{10.1145/3197517.3201303}}


\bibitem[Takikawa et~al\mbox{.}(2021)]%
        {takikawa2021nglod}
\bibfield{author}{\bibinfo{person}{Towaki Takikawa}, \bibinfo{person}{Joey Litalien}, \bibinfo{person}{Kangxue Yin}, \bibinfo{person}{Karsten Kreis}, \bibinfo{person}{Charles Loop}, \bibinfo{person}{Derek Nowrouzezahrai}, \bibinfo{person}{Alec Jacobson}, \bibinfo{person}{Morgan McGuire}, {and} \bibinfo{person}{Sanja Fidler}.} \bibinfo{year}{2021}\natexlab{}.
\newblock \showarticletitle{Neural Geometric Level of Detail: Real-time Rendering with Implicit {3D} Shapes}.
\newblock  (\bibinfo{year}{2021}).
\newblock


\bibitem[Tancik et~al\mbox{.}(2020)]%
        {tancik2020fourfeat}
\bibfield{author}{\bibinfo{person}{Matthew Tancik}, \bibinfo{person}{Pratul~P. Srinivasan}, \bibinfo{person}{Ben Mildenhall}, \bibinfo{person}{Sara Fridovich-Keil}, \bibinfo{person}{Nithin Raghavan}, \bibinfo{person}{Utkarsh Singhal}, \bibinfo{person}{Ravi Ramamoorthi}, \bibinfo{person}{Jonathan~T. Barron}, {and} \bibinfo{person}{Ren Ng}.} \bibinfo{year}{2020}\natexlab{}.
\newblock \showarticletitle{Fourier Features Let Networks Learn High Frequency Functions in Low Dimensional Domains}.
\newblock \bibinfo{journal}{\emph{NeurIPS}} (\bibinfo{year}{2020}).
\newblock


\bibitem[Tonon(2005)]%
        {Tonon2005TetExplicit}
\bibfield{author}{\bibinfo{person}{F. Tonon}.} \bibinfo{year}{2005}\natexlab{}.
\newblock \showarticletitle{Explicit Exact Formulas for the 3-D Tetrahedron Inertia Tensor in Terms of its Vertex Coordinates}.
\newblock \bibinfo{journal}{\emph{Journal of Mathematics and Statistics}}  \bibinfo{volume}{1} (\bibinfo{date}{Mar} \bibinfo{year}{2005}), \bibinfo{pages}{8--11}.
\newblock
\href{https://doi.org/10.3844/jmssp.2005.8.11}{doi:\nolinkurl{10.3844/jmssp.2005.8.11}}


\bibitem[Wang et~al\mbox{.}(2018)]%
        {wang2018pixel2mesh}
\bibfield{author}{\bibinfo{person}{Nanyang Wang}, \bibinfo{person}{Yinda Zhang}, \bibinfo{person}{Zhuwen Li}, \bibinfo{person}{Yanwei Fu}, \bibinfo{person}{Wei Liu}, {and} \bibinfo{person}{Yu-Gang Jiang}.} \bibinfo{year}{2018}\natexlab{}.
\newblock \showarticletitle{Pixel2Mesh: Generating 3D Mesh Models from Single RGB Images}. In \bibinfo{booktitle}{\emph{ECCV}}.
\newblock


\bibitem[Weisstein(2025)]%
        {weisstein_circumsphere}
\bibfield{author}{\bibinfo{person}{Eric~W. Weisstein}.} \bibinfo{year}{2025}\natexlab{}.
\newblock \bibinfo{title}{Circumsphere}.
\newblock
\urldef\tempurl%
\url{https://mathworld.wolfram.com/Circumsphere.html}
\showURL{%
\tempurl}
\newblock
\shownote{From MathWorld--A Wolfram Web Resource}.


\bibitem[Wieczorek and Meschede(2018)]%
        {wieczorek2018shtools}
\bibfield{author}{\bibinfo{person}{Mark~A. Wieczorek} {and} \bibinfo{person}{Matthias Meschede}.} \bibinfo{year}{2018}\natexlab{}.
\newblock \showarticletitle{SHTools: Tools for Working with Spherical Harmonics}.
\newblock \bibinfo{journal}{\emph{Geochemistry, Geophysics, Geosystems}} \bibinfo{volume}{19}, \bibinfo{number}{8} (\bibinfo{year}{2018}), \bibinfo{pages}{2574--2592}.
\newblock
\href{https://doi.org/10.1029/2018GC007529}{doi:\nolinkurl{10.1029/2018GC007529}}


\bibitem[Wu et~al\mbox{.}(2023)]%
        {Wu_2023_CVPR}
\bibfield{author}{\bibinfo{person}{Haoqian Wu}, \bibinfo{person}{Zhipeng Hu}, \bibinfo{person}{Lincheng Li}, \bibinfo{person}{Yongqiang Zhang}, \bibinfo{person}{Changjie Fan}, {and} \bibinfo{person}{Xin Yu}.} \bibinfo{year}{2023}\natexlab{}.
\newblock \showarticletitle{NeFII: Inverse Rendering for Reflectance Decomposition With Near-Field Indirect Illumination}. In \bibinfo{booktitle}{\emph{Proceedings of the IEEE/CVF Conference on Computer Vision and Pattern Recognition (CVPR)}}. \bibinfo{pages}{4295--4304}.
\newblock


\bibitem[Xu et~al\mbox{.}(2011)]%
        {Xu2011Lzero}
\bibfield{author}{\bibinfo{person}{Li Xu}, \bibinfo{person}{Cewu Lu}, \bibinfo{person}{Yi Xu}, {and} \bibinfo{person}{Jiaya Jia}.} \bibinfo{year}{2011}\natexlab{}.
\newblock \showarticletitle{Image smoothing via L0 gradient minimization}.
\newblock \bibinfo{journal}{\emph{ACM Trans. Graph.}} \bibinfo{volume}{30}, \bibinfo{number}{6} (\bibinfo{date}{Dec.} \bibinfo{year}{2011}), \bibinfo{pages}{1–12}.
\newblock
\showISSN{0730-0301}
\href{https://doi.org/10.1145/2070781.2024208}{doi:\nolinkurl{10.1145/2070781.2024208}}


\bibitem[Yang et~al\mbox{.}(2024)]%
        {yang2024galageometryawarelocaladaptive}
\bibfield{author}{\bibinfo{person}{Dingdong Yang}, \bibinfo{person}{Yizhi Wang}, \bibinfo{person}{Konrad Schindler}, \bibinfo{person}{Ali~Mahdavi Amiri}, {and} \bibinfo{person}{Hao Zhang}.} \bibinfo{year}{2024}\natexlab{}.
\newblock \bibinfo{title}{GALA: Geometry-Aware Local Adaptive Grids for Detailed 3D Generation}.
\newblock
\showeprint[arxiv]{2410.10037}~[cs.CV]
\urldef\tempurl%
\url{https://arxiv.org/abs/2410.10037}
\showURL{%
\tempurl}


\bibitem[Yariv et~al\mbox{.}(2020)]%
        {yariv2020multiview}
\bibfield{author}{\bibinfo{person}{Lior Yariv}, \bibinfo{person}{Yoni Kasten}, \bibinfo{person}{Dror Moran}, \bibinfo{person}{Meirav Galun}, \bibinfo{person}{Matan Atzmon}, \bibinfo{person}{Basri Ronen}, {and} \bibinfo{person}{Yaron Lipman}.} \bibinfo{year}{2020}\natexlab{}.
\newblock \showarticletitle{Multiview Neural Surface Reconstruction by Disentangling Geometry and Appearance}.
\newblock \bibinfo{journal}{\emph{Advances in Neural Information Processing Systems}}  \bibinfo{volume}{33} (\bibinfo{year}{2020}).
\newblock


\bibitem[Yariv et~al\mbox{.}(2024)]%
        {Yariv_2024_CVPR}
\bibfield{author}{\bibinfo{person}{Lior Yariv}, \bibinfo{person}{Omri Puny}, \bibinfo{person}{Oran Gafni}, {and} \bibinfo{person}{Yaron Lipman}.} \bibinfo{year}{2024}\natexlab{}.
\newblock \showarticletitle{Mosaic-SDF for 3D Generative Models}. In \bibinfo{booktitle}{\emph{Proceedings of the IEEE/CVF Conference on Computer Vision and Pattern Recognition (CVPR)}}. \bibinfo{pages}{4630--4639}.
\newblock


\bibitem[Yifan et~al\mbox{.}(2022)]%
        {yifan2022geometryconsistent}
\bibfield{author}{\bibinfo{person}{Wang Yifan}, \bibinfo{person}{Lukas Rahmann}, {and} \bibinfo{person}{Olga Sorkine-hornung}.} \bibinfo{year}{2022}\natexlab{}.
\newblock \showarticletitle{Geometry-Consistent Neural Shape Representation with Implicit Displacement Fields}. In \bibinfo{booktitle}{\emph{International Conference on Learning Representations}}.
\newblock
\urldef\tempurl%
\url{https://openreview.net/forum?id=yhCp5RcZD7}
\showURL{%
\tempurl}


\bibitem[Zhang et~al\mbox{.}(2022)]%
        {zhang20223dilg}
\bibfield{author}{\bibinfo{person}{Biao Zhang}, \bibinfo{person}{Matthias Nie{\ss}ner}, {and} \bibinfo{person}{Peter Wonka}.} \bibinfo{year}{2022}\natexlab{}.
\newblock \showarticletitle{3{DILG}: Irregular Latent Grids for 3D Generative Modeling}. In \bibinfo{booktitle}{\emph{Advances in Neural Information Processing Systems}}, \bibfield{editor}{\bibinfo{person}{Alice~H. Oh}, \bibinfo{person}{Alekh Agarwal}, \bibinfo{person}{Danielle Belgrave}, {and} \bibinfo{person}{Kyunghyun Cho}} (Eds.).
\newblock
\urldef\tempurl%
\url{https://openreview.net/forum?id=RO0wSr3R7y-}
\showURL{%
\tempurl}


\bibitem[Zhang et~al\mbox{.}(2023a)]%
        {3DShape2VecSetzhang2023}
\bibfield{author}{\bibinfo{person}{Biao Zhang}, \bibinfo{person}{Jiapeng Tang}, \bibinfo{person}{Matthias Nie\ss{}ner}, {and} \bibinfo{person}{Peter Wonka}.} \bibinfo{year}{2023}\natexlab{a}.
\newblock \showarticletitle{3DShape2VecSet: A 3D Shape Representation for Neural Fields and Generative Diffusion Models}.
\newblock \bibinfo{journal}{\emph{ACM Trans. Graph.}} \bibinfo{volume}{42}, \bibinfo{number}{4}, Article \bibinfo{articleno}{92} (\bibinfo{date}{jul} \bibinfo{year}{2023}), \bibinfo{numpages}{16}~pages.
\newblock
\showISSN{0730-0301}
\href{https://doi.org/10.1145/3592442}{doi:\nolinkurl{10.1145/3592442}}


\bibitem[Zhang et~al\mbox{.}(2023b)]%
        {Zhang_2023_ICCV_DMNet}
\bibfield{author}{\bibinfo{person}{Chen Zhang}, \bibinfo{person}{Ganzhangqin Yuan}, {and} \bibinfo{person}{Wenbing Tao}.} \bibinfo{year}{2023}\natexlab{b}.
\newblock \showarticletitle{DMNet: Delaunay Meshing Network for 3D Shape Representation}. In \bibinfo{booktitle}{\emph{Proceedings of the IEEE/CVF International Conference on Computer Vision (ICCV)}}. \bibinfo{pages}{14418--14428}.
\newblock


\bibitem[Zhang et~al\mbox{.}(2021a)]%
        {physg2021}
\bibfield{author}{\bibinfo{person}{Kai Zhang}, \bibinfo{person}{Fujun Luan}, \bibinfo{person}{Qianqian Wang}, \bibinfo{person}{Kavita Bala}, {and} \bibinfo{person}{Noah Snavely}.} \bibinfo{year}{2021}\natexlab{a}.
\newblock \showarticletitle{{PhySG}: {I}nverse Rendering with Spherical Gaussians for Physics-based Material Editing and Relighting}. In \bibinfo{booktitle}{\emph{The IEEE/CVF Conference on Computer Vision and Pattern Recognition (CVPR)}}.
\newblock


\bibitem[Zhang et~al\mbox{.}(2021b)]%
        {zhang2021nerfactor}
\bibfield{author}{\bibinfo{person}{Xiuming Zhang}, \bibinfo{person}{Pratul~P. Srinivasan}, \bibinfo{person}{Boyang Deng}, \bibinfo{person}{Paul Debevec}, \bibinfo{person}{William~T. Freeman}, {and} \bibinfo{person}{Jonathan~T. Barron}.} \bibinfo{year}{2021}\natexlab{b}.
\newblock \showarticletitle{NeRFactor: Neural Factorization of Shape and Reflectance under an Unknown Illumination}.
\newblock \bibinfo{journal}{\emph{ACM Trans. Graph.}} \bibinfo{volume}{40}, \bibinfo{number}{6}, Article \bibinfo{articleno}{237} (\bibinfo{date}{dec} \bibinfo{year}{2021}), \bibinfo{numpages}{18}~pages.
\newblock
\showISSN{0730-0301}
\href{https://doi.org/10.1145/3478513.3480496}{doi:\nolinkurl{10.1145/3478513.3480496}}


\bibitem[Zhao et~al\mbox{.}(2021)]%
        {Zhao2021progressivedomains}
\bibfield{author}{\bibinfo{person}{Tong Zhao}, \bibinfo{person}{Pierre Alliez}, \bibinfo{person}{Tamy Boubekeur}, \bibinfo{person}{Laurent Busé}, {and} \bibinfo{person}{Jean-Marc Thiery}.} \bibinfo{year}{2021}\natexlab{}.
\newblock \showarticletitle{Progressive Discrete Domains for Implicit Surface Reconstruction}.
\newblock \bibinfo{journal}{\emph{Computer Graphics Forum}} \bibinfo{volume}{40}, \bibinfo{number}{5} (\bibinfo{year}{2021}), \bibinfo{pages}{143--156}.
\newblock
\href{https://doi.org/10.1111/cgf.14363}{doi:\nolinkurl{10.1111/cgf.14363}}


\bibitem[Zhou and Jacobson(2016)]%
        {zhou2016thingi10kdataset100003dprinting}
\bibfield{author}{\bibinfo{person}{Qingnan Zhou} {and} \bibinfo{person}{Alec Jacobson}.} \bibinfo{year}{2016}\natexlab{}.
\newblock \bibinfo{title}{Thingi10K: A Dataset of 10,000 3D-Printing Models}.
\newblock
\showeprint[arxiv]{1605.04797}~[cs.GR]


\end{thebibliography}

\newpage
\clearpage

\appendix
\section{Implementation Details}
\label{sec:implementation_details}

In this section, we detail the implementation of our shape reconstruction pipeline, as described in \secref{sec:results}. Differentiable rasterization is performed using nvdiffrast \cite{Laine2020diffrast}, and many geometric operations are used from the Kaolin library \cite{KaolinLibrary}.

\paragraph{Initialization}Before starting the reconstruction process, the shapes in our dataset are normalized such that their bounding boxes fit within the cube $[-0.9, 0.9]^3$, similar to the preprocessing used in FlexiCubes \cite{shen2023flexicubes}. For grid initialization, we randomly sample $8000$ points within a sphere of radius $\sqrt{3}$. {Spherical harmonics coefficients are initialized to $0$, which leads to an initial state where $\hat{s}_i(e) = s_i$ due to the application of the $1+tanh$ function.} Thanks to our iterative resampling approach, \ourmethod{} is robust to the resolution of the initial sampling, as long as it adequately covers the target shape. In contrast, methods like FlexiCubes and DMTet \cite{shen2021dmtet} are more sensitive to the initial grid scale, requiring precise tuning to ensure adequate coverage of the target shape. If the grid is too small, some parts of the shape could be missed. If the grid is too large, the resulting mesh lacks detail.

\paragraph{Delaunay triangulation} To create a Delaunay tetrahedral grid from a point cloud, we use TetGen \cite{tetgenSi2025}. In practice, we observe that TetGen can stall when points are too close to one another. To remove this issue we perturb points that are too close to each other.

\begin{figure}[t]
    \centering
    \includegraphics[width=0.9\linewidth]{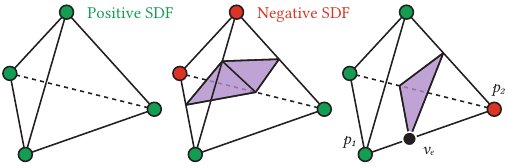}
    
    \caption{Marching Tetrahedra lookup configurations. Despite $2^4$ possible configurations, they amount to three distinct possibilities up to rotation and sign inversion displayed here. Contrary to Marching Cubes, there is no ambiguous case, and the resulting mesh is guaranteed intersection-free and 2-manifold.}
    \label{fig:marching_tets_configurations}
    \Description{TODO}
\end{figure}

\paragraph{Resampling and adaptive meshing} 

Our approach involves resampling passive points while progressively adding new points during optimization. We gradually increase the number of points from the initial $8000$ to the desired target count. For the mesh reconstruction experiment, we sample camera positions uniformly on a sphere using a Fibonacci lattice. We compute the normal maps of both the target shape and the current shape and evaluate the per-pixel difference using an $L_1$ loss. To facilitate adaptive resampling, we divide the space around the current shape into voxels using a grid resolution of $32^3$. For each voxel we determine a target number of sample points \(k\). To achieve this, we start with a precomputed blue noise point set which we scale such that \(k\) points fall within the voxel.

\paragraph{Parameters} {The final optimization objective can be written as

\begin{equation*}
    \mathcal{L} = \mathcal{L}_{\text{recons}} + \lambda_{\text{fairness}} \mathcal{L}_{\text{fairness}} + \lambda_{\text{ODT}} \mathcal{L}_{\text{ODT}} + \lambda_{\text{sign}} \mathcal{L}_{\text{sign}}
\end{equation*}

In our base experiment, we use $\lambda_M = 10$, $\lambda_D = 250$ $\lambda_N = 1$, $\lambda_{\text{fairness}} = 0.35$, $\lambda_{\text{ODT}} = 0.1$, and $\lambda_{\text{sign}} = 1.0$.} We use spherical harmonics of degree $2$. {Because the computation of the vertices of the generated mesh is differentiable with respect to the grid points, the signed distance function and the spherical harmonics coefficients, we can leverage PyTorch's automatic differentiation \cite{paszke2019pytorchimperativestylehighperformance} and update parameters using the AdamW optimizer \cite{loshchilov2019decoupledweightdecayregularization}.} The step size ($0.002$ for SDF values, $0.0003$ for point positions) remains constant throughout the optimization process. 

\section{Optimal Delaunay Triangulation}
\label{sec:odt_proof}

In this section, we provide an in-depth exploration of the Optimal Delaunay Triangulation loss $E_{\text{ODT}}$ \cite{ODTChen2004,alliez2005variationaltetmeshing}. We explain how it is computed and prove that this energy is minimized on a regular tetrahedron.
While \citeauthor{ODTChen2004} derive the update step to minimize the ODT loss, we are interested in computing the loss itself, to be included in a gradient-descent optimizer with autograd. This allows us to simultaneously optimize mesh quality and reconstruction losses. 

\subsection{Computation details}
The vertex positions of a tetrahedron $T$ are given as $\textsc{v}_0, \dots, \textsc{v}_3$. The ODT energy for $T$ is given as 
\begin{equation}
E_{\text{ODT}}(T) = \vert M_{S_{T}} - M_{T} \vert.
\label{eq:odt}
\end{equation}
where $ M_{T} $ represents the sum of the principal moments of $T$ relative to its circumcenter $c_{T}$. $M_{S_{T}}$ denotes the moment of inertia of $S_{T}$, defined as the circumsphere of $T$ and having an equivalent mass. Assuming unit mass density, the mass of $T$ is simply its volume $V_T$. 
With
$$
\mathbf{a} = \textsc{v}_1 - \textsc{v}_0, \quad \mathbf{b} = \textsc{v}_2 - \textsc{v}_0, \quad \mathbf{c} = \textsc{v}_3 - \textsc{v}_0, \quad D = \det([\mathbf{a}, \mathbf{b}, \mathbf{c}]).
$$
We obtain $$V_T = \frac{\vert D \vert }{6}.$$
The circumcenter $ c_T $ is computed as \cite{weisstein_circumsphere}
$$
c_T = \textsc{v}_0 + \frac{\|\mathbf{a}\|^2  (\mathbf{b} \times \mathbf{c}) + \|\mathbf{b}\|^2  (\mathbf{c} \times \mathbf{a}) + \|\mathbf{c}\|^2  (\mathbf{a} \times \mathbf{b})}{2D}.
$$
The circumradius is given by $R_T = \|c_T - \textsc{v}_0\|$. Hence, the moment of inertia of $S_T$ is given as
\begin{equation}
\label{eq:mst}
M_{S_T} = \frac{2}{5}V_T R_T^2.
\end{equation}

An explicit formula for computing $M_T$ is given by \citet{Tonon2005TetExplicit}. We define the relative coordinates of the vertices with respect to the circumcenter $c_T$:
\[
x_i = \textsc{v}_i^x - c_T^x, \quad y_i = \textsc{v}_i^y - c_T^y, \quad z_i = \textsc{v}_i^z - c_T^z, \quad \text{for } i = 0, 1, 2, 3.
\]

The coordinate quadratic sums are given by
\[
S_x = \sum_{i=0}^3 x_i^2 + \sum_{0 \leq i < j \leq 3} x_i x_j.
\]

We define $S_y$ and $S_z$ similarly for the $y$ and $z$ coordinates. Computing the moments of inertia along each axis yields:
\[
I_x = \frac{V_T}{10} (S_y + S_z), \quad
I_y = \frac{V_T}{10} (S_x + S_z), \quad
I_z = \frac{V_T}{10} (S_x + S_y),
\]
where $V_T$ is the volume of the tetrahedron. Hence, the sum of the principal moments of inertia is given by
\begin{equation}
\label{eq:mt}
M_T = I_x + I_y + I_z = \frac{1}{5}V_T\left( S_x + S_y + S_z\right).
\end{equation}

\subsection{Theoretical Analysis} We show the following property:

\begin{proposition}
\label{prop:min_odt}
For any regular tetrahedron $T$, $E_{\text{ODT}}(T) = 0$.
\end{proposition}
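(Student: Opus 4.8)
The plan is to reduce the claim to a single scalar identity and then settle it with a coordinate-free symmetry argument. By \equref{eq:mst} and \equref{eq:mt}, both $M_{S_T}$ and $M_T$ carry the common factor $\tfrac{1}{5}V_T$, so that $E_{\text{ODT}}(T) = \tfrac{1}{5}V_T\,\bigl\vert 2R_T^2 - (S_x + S_y + S_z)\bigr\vert$. Since $V_T > 0$ for a nondegenerate tetrahedron, it suffices to prove that $S_x + S_y + S_z = 2R_T^2$ whenever $T$ is regular.

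Next I would rewrite $S_x + S_y + S_z$ in coordinate-free form. Writing $v_i = \textsc{v}_i - c_T = (x_i, y_i, z_i)$ for the vertex positions relative to the circumcenter, the definition of $S_x$ (and analogously of $S_y, S_z$) gives $S_x + S_y + S_z = \sum_{i=0}^3 \lVert v_i\rVert^2 + \sum_{0\le i<j\le 3}\langle v_i, v_j\rangle$. By the very definition of the circumcenter, every vertex lies at distance $R_T$ from $c_T$, so $\lVert v_i\rVert = R_T$ for all $i$ and the first sum equals $4R_T^2$; note that no regularity assumption is needed for this part.

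The regularity of $T$ enters only through a single fact: for a regular tetrahedron the circumcenter coincides with the centroid $\tfrac{1}{4}\sum_i \textsc{v}_i$, hence $\sum_{i=0}^3 v_i = 0$. Expanding $0 = \bigl\lVert \sum_i v_i\bigr\rVert^2 = \sum_i \lVert v_i\rVert^2 + 2\sum_{i<j}\langle v_i, v_j\rangle = 4R_T^2 + 2\sum_{i<j}\langle v_i, v_j\rangle$ yields $\sum_{i<j}\langle v_i, v_j\rangle = -2R_T^2$. Substituting both sums back gives $S_x + S_y + S_z = 4R_T^2 - 2R_T^2 = 2R_T^2$, which is exactly the identity required above, so $E_{\text{ODT}}(T) = 0$.

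The computation itself is routine; the only point deserving care is the claim that the circumcenter and centroid coincide for a regular tetrahedron. I would justify this by symmetry: the symmetry group of a regular tetrahedron acts transitively on its vertices and fixes both the circumcenter (the unique point equidistant from all vertices) and the centroid, and since this group has a unique fixed point in $\mathds{R}^3$, the two must agree. An explicit alternative is to place the vertices at four corners of a cube, e.g. $(1,1,1)$, $(1,-1,-1)$, $(-1,1,-1)$, $(-1,-1,1)$, whose circumcenter and centroid are both the origin; one then checks directly that $\sum_i v_i = 0$, that $\lVert v_i\rVert^2 = 3 = R_T^2$, and that every pairwise inner product equals $-1$, so $\sum_{i<j}\langle v_i, v_j\rangle = -6 = -2R_T^2$, confirming the identity by hand.
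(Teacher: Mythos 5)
Your proof is correct, but it takes a genuinely different route from the paper's. The paper argues by rigid invariance and then a direct coordinate computation: it places the regular tetrahedron at the cube corners $(\pm\alpha,\pm\alpha,\pm\alpha)$, notes the circumcenter is the origin so $R_{T}^2 = 3\alpha^2$, computes $S_x = S_y = S_z = 2\alpha^2$ from the coordinates, and concludes $M_{S_T} = M_T = \tfrac{6}{5}V_T\alpha^2$. You instead work coordinate-free: after factoring $E_{\text{ODT}}(T) = \tfrac{1}{5}V_T\,\vert 2R_T^2 - (S_x+S_y+S_z)\vert$, you observe that $S_x+S_y+S_z = \sum_i \Vert v_i\Vert^2 + \sum_{i<j}\langle v_i,v_j\rangle$, that $\Vert v_i\Vert = R_T$ holds for \emph{any} tetrahedron, and that regularity enters only through $\sum_i v_i = 0$ (circumcenter equals centroid), which kills the cross terms via $\Vert\sum_i v_i\Vert^2 = 0$. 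This buys something real: your argument shows $E_{\text{ODT}}$ vanishes on the strictly larger class of tetrahedra whose circumcenter coincides with their centroid (the isosceles, or equifacial, tetrahedra — those with opposite edges of equal length), not just regular ones, and it isolates exactly which geometric fact is responsible. What the paper's version buys is brevity and concreteness — and indeed your fallback explicit verification at $(1,1,1),(1,-1,-1),(-1,1,-1),(-1,-1,1)$ is precisely the paper's computation with $\alpha = 1$. The one step you should tighten is the symmetry justification that the circumcenter and centroid coincide: asserting that the symmetry group "has a unique fixed point in $\mathds{R}^3$" deserves a one-line reason (e.g., the fixed sets of two distinct vertex--face rotation axes intersect in a single point), though your explicit coordinate check already covers this gap.
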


\begin{proof}
$M_{S_T}$ and $M_T$ are invariant under rigid transformations of $T$. The regular tetrahedron $T_\alpha$ is given by $$\textsc{v}_0 = \left(\alpha, \alpha, \alpha \right), \, \textsc{v}_1 = \left(\alpha, -\alpha, -\alpha \right), \, \textsc{v}_2 = \left(-\alpha, \alpha, -\alpha \right), \, \textsc{v}_3 = \left(-\alpha, -\alpha, \alpha \right).$$

Because the circumcenter of a regular tetrahedron is its barycenter, we have $c_{T_\alpha} = \left(0, 0, 0 \right)$ and therefore $R_{T_\alpha}^2 = 3\alpha^2$. This evaluates to $S_x = S_y = S_z = 2\alpha^2$. Combining these with Eq. \eqref{eq:mst} and \eqref{eq:mt}, we obtain $$M_{S_{T_\alpha}} = M_{T_\alpha} = \frac{6}{5}V_{T_\alpha}\alpha^2.$$
Since $E_{\text{ODT}}$ is the absolute value of the difference of these quantities, it attains the value zero for the regular tetrahedron.
\end{proof}

\section{Evaluation details}

We explain the metrics used in our evaluation and provide additional rendering-based metrics.

\subsection{Metric definition}

\label{sec:metric_def}

In this section we define the metrics used for our evaluation. They are consistent with the ones used in FlexiCubes \cite{shen2023flexicubes}.
The chamfer distance (\textbf{CD}) quantifies the similarity between two point clouds. For each point in one of the point sets we compute the distance to the closest point in the other set. Averaging all distances yields the chamfer distance. To compare meshes, we sample points on both the ground-truth and reconstructed mesh, resulting in a point cloud with one million points for each mesh. We also compute the F1-score (\textbf{F1}) as $$F_1 = \frac{2TP}{2TP + FP + FN}$$
based on the sampled points. If the nearest point has a distance below a certain threshold, we count it as a true positive ($TP$). If its distance is above the threshold, it is counted as a false positive ($FP$) if it belongs to the reconstructed mesh, and a false negative ($FN$) if it belongs to the ground truth. We use a threshold of $0.001$. The edge chamfer distance (\textbf{ECD}) and edge F1-score (\textbf{EF1}) are similar to the chamfer distance and F1 score, but apply to edge points \cite{chen2022ndc}, \emph{i.e.} points whose normal has an average dot product with neighboring points' normals that falls below a threshold of $0.2$. The inaccurate-normals metric (\textbf{IN} $\mathbf{> 5^\circ}$) captures the percentage of points for which the angle difference between predicted and ground truth normals exceeds 5 degrees. Normals of sampled points are copied from the face containing the sample point. The nearest point pairs between the predicted and ground truth meshes are identified to compute these angular differences. For a triangle, the aspect ratio (\textbf{AR}) is defined as the ratio of the longest edge to the shortest altitude, while the radius ratio (\textbf{RR}) is defined as the ratio of the inradius to the circumradius. Both metrics assess triangle quality on the extracted mesh, with lower values indicating better triangle regularity. The small angles metric (\textbf{SA}$\mathbf{<10^\circ}$) calculates the percentage of triangles with smallest internal angle below $10^\circ$, which is a proxy for the amount of sliver triangles. We also compute the percentage of intersecting faces (\textbf{SI}) using PyMeshLab \cite{pymeshlab}.

\subsection{Rendering metrics}

The rendering metrics presented in \tableref{fig:rendering_metrics} evaluate the fidelity of the reconstructed meshes to the ground truth through image-based comparisons, rather than purely geometric metrics. These metrics -- mask error, depth error, and normals error -- offer insight into how well the reconstructed shape aligns perceptually with the target shape when rendered from various viewpoints.
\ourmethod{} demonstrates competitive performance compared to FlexiCubes \cite{shen2023flexicubes} and DMTet \cite{shen2021dmtet}, consistently achieving lower errors across all metrics as the number of points is scaled up. This indicates that \ourmethod{} generates surfaces that are more visually aligned with the target shape in terms of silhouette, depth, and normal accuracy.

\begin{table}[b!]
\centering
\footnotesize
\setlength{\tabcolsep}{2pt}
\caption{Quantitative evaluation on the mesh reconstruction task with rendering metrics. We sample two thousand points using Fibonacci sampling over a sphere to position cameras and render the mask, depth map, and normal map of the reconstructed mesh and ground truth. We use an $L_2$ distance as comparison metric. The mask error is scaled by $10^3$, the depth error by $10^2$, and the normal error by $10^1$.}

\resizebox{1.0\linewidth}{!}{
\begin{tabular}{l c c c}
\toprule
Method & Mask Error $\times 10^{3}$ $\downarrow$ & Depth Error $\times 10^{2}$ $\downarrow$ & Normals Error $\times 10^{1}$ $\downarrow$ \\
\midrule
DMTet ($128^3$) & 0.56 & 1.12 & 2.15 \\
FlexiCubes ($128^3$) &0.29 & 0.54 & 1.51 \\
TetWeave ($16$K) & 0.39 & 0.70 & 1.79 \\
TetWeave ($64$K) & 0.23 & 0.36 & 1.30 \\
TetWeave ($128$K) & 0.17 & 0.26 & 1.10 \\
\bottomrule
\end{tabular}
}
\label{fig:rendering_metrics}
\end{table}

\begin{table*}[ht!]
\centering
\small
\setlength{\tabcolsep}{4pt}
\caption{Expanded version of \tableref{fig:metrics} on quantitative evaluation on the mesh reconstruction task. We refer to the text in the Appendix for a more detailed explanation of the different quantities. The chamfer distance (CD) is scaled by 1e5, and the edge chamfer distance (ECD) by 1e2. Ablation measures are done with respect to \ourmethod{} with 64K grid points.}
\resizebox{1.0\linewidth}{!}{
\begin{tabular}{l c c c c c c c c c c c c}
\toprule
Method & CD $\downarrow$ & F1 $\uparrow$ & ECD $\downarrow$ & EF1 $\uparrow$ & NC $\uparrow$ & IN$>5^{\circ}$(\%) $\downarrow$ & AR$>4$(\%) $\downarrow$ & RR$>4$(\%) $\downarrow$ & SA$<10^{\circ}$(\%) $\downarrow$ & SI(\%) $\downarrow$ & \#V & \#F \\
\midrule

DMTet ($32^3$) & 28.054 & 0.132 & 4.384 & 0.150 & 0.895 & 71.095 & 12.755 & 12.649 & 13.044 & 0.000 & 1355 & 27102 \\

DMTet ($64^3$) & 7.036 & 0.219 & 2.983 & 0.187 & 0.936 & 61.504 & 11.613 & 11.425 & 11.908 & 0.000 & 5030 & 10066 \\

DMTet ($128^3$) & 1.043 & 0.339 & 1.681 & 0.272 & 0.965 & 48.393 & 12.026 & 11.826 & 12.351 & 0.000 & 20677 & 41364 \\
\midrule
FlexiCubes ($32^3$) & 12.350 & 0.210 & 3.784 & 0.191 & 0.931 & 62.187 & 7.275 & 8.714 & 6.362 & 0.775 & 1776 & 3556 \\
FlexiCubes ($64^3$) & 2.900 & 0.329 & 2.378 & 0.257 & 0.961 & 49.814 & 6.055 & 7.327 & 5.236 & 0.341 & 7086 & 14181 \\
FlexiCubes ($128^3$)  & 0.752 & 0.416 & 1.254 & 0.393 & 0.979 & 36.911 & 5.418 & 6.701 & 4.588 & 0.203 & 28430 & 56873 \\ 
\midrule

\ourmethod{} (8K) &  1.176 & 0.376 & 1.952 & 0.283 & 0.967 & 48.635 & 2.511 & 3.534 & 1.922 & 0.000 & 14715 & 29468 \\
\ourmethod{} (16K) & 0.517 & 0.409 & 1.475 & 0.353 & 0.974 & 43.380 & 2.350 & 3.344 & 1.743 & 0.000 & 26484 & 53015 \\
\ourmethod{} (32K) & 0.489 & 0.431 & 1.157 & 0.433 & 0.979 & 38.367 & 2.292 & 3.287 & 1.675 & 0.000 & 46640 & 93330 \\
\ourmethod{} (64K) &  0.419 & 0.446 & 0.962 & 0.518 & 0.984 & 33.700 & 2.251 & 3.252 & 1.616 & 0.000 & 81027 & 162102\\
\ourmethod{} (128K) & 0.393 & 0.455 & 0.708 & 0.588 & 0.987 & 29.361 & 2.507 & 3.556 & 1.829 & 0.000 & 146514 & 293074 \\
\midrule
\multicolumn{13}{c}{\ourmethod{} - 64K} \\
\midrule
No SH & 0.415 & 0.436 & 0.919 & 0.466 & 0.981 & 37.712 & 2.180 & 3.147 & 1.556 & 0.000 & 83866 & 167785 \\
SH - degree 1 &  0.402 & 0.446 & 0.787 & 0.528 & 0.984 & 33.633 & 2.254 & 3.245 & 1.624 & 0.000 & 80865 & 161775 \\
SH - degree 2 &  0.419 & 0.446 & 0.962 & 0.518 & 0.984 & 33.700 & 2.251 & 3.252 & 1.616 & 0.000 & 81027 & 162102 \\
SH - degree 3 & 0.411 & 0.446 & 0.916 & 0.507 & 0.984 & 33.864 & 2.245 & 3.253 & 1.611 & 0.000 & 81141 & 162329 \\
SH - degree 4 & 0.412 & 0.446 & 1.212 & 0.508 & 0.984 & 33.949 & 2.261 & 3.271 & 1.623 & 0.000 & 81699 & 163446 \\
\midrule
\multicolumn{13}{c}{\ourmethod{} - 64K} \\
\midrule
No Fairness & 0.781 & 0.455 & 1.119 & 0.511 & 0.987 & 28.866 & 17.091 & 19.353 & 15.611 & 0.000 & 127428 & 255107 \\
No ODT & 0.419 & 0.446 & 1.071 & 0.517 & 0.984 & 33.733 & 2.253 & 3.248 & 1.617 & 0.000 & 81320 & 162682 \\
Uniform & 0.427 & 0.436 & 1.051 & 0.427 & 0.980 & 36.453 & 1.567 & 2.514 & 0.975 & 0.000 & 61794 & 123635 \\
\bottomrule
\end{tabular}
}
\label{tab:metrics_extended}
\end{table*}
\begin{table*}[h!]
\centering
\setlength{\tabcolsep}{3pt}
\scriptsize
\caption{Comparison on Stanford ORB \cite{kuang2023stanfordorb} using NVDiffRec \cite{Munkberg_2022_CVPR} as a backbone.
Depth SI-MSE $\times 10^{-3}$. Shape Chamfer distance $\times 10^{-3}$.
We highlight in bold the best-performing techniques overall, as well as the top-performing technique specifically within the isosurface extraction category, which is the primary focus of our analysis.}
\label{tab:benchmark}
\resizebox{1.0\linewidth}{!}{
\begin{tabular}{lccccccccccc}
\toprule
 \multirow{2}{*}{}  
 & \multicolumn{3}{c}{Geometry}
 & \multicolumn{4}{c}{Novel Scene Relighting}            & \multicolumn{4}{c}{Novel View Synthesis} 
 \\
 \cmidrule(l){2-4} \cmidrule(l){5-8} \cmidrule(l){9-12}
& Depth$\downarrow$ & Normal$\downarrow$ & Shape$\downarrow$ & PSNR-H$\uparrow$ & PSNR-L$\uparrow$ & SSIM$\uparrow$ & LPIPS$\downarrow$ & PSNR-H$\uparrow$ & PSNR-L$\uparrow$ & SSIM$\uparrow$ & LPIPS$\downarrow$\\\midrule
IDR~\cite{yariv2020multiview} & $0.35$ & $0.05$ & $\mathbf{0.30}$ & \multicolumn{4}{c}{N/A} & $\mathbf{30.11}$ & $\mathbf{39.66}$ & $\mathbf{0.990}$ & $\mathbf{0.017}$\\
NeRF~\cite{mildenhall2020nerf} & $2.19$ & $0.62$ & $62.05$ & \multicolumn{4}{c}{N/A} & $26.31$ & $33.59$ & $0.968$ & $0.044$\\
\midrule
Neural-PIL~\cite{boss2021neuralpil} & $0.86$ & $0.29$ & $4.14$ & \multicolumn{4}{c}{N/A} & $25.79$ & $33.35$ & $0.963$ & $0.051$\\
PhySG~\cite{physg2021} & $1.90$ & $0.17$ & $9.28$ & $21.81$ & $28.11$ & $0.960$ & $0.055$ & $24.24$ & $32.15$ & $0.974$ & $0.047$\\
NeRD~\cite{boss2021nerd} & $1.39$ & $0.28$ & $13.70$ & $23.29$ & $29.65$ & $0.957$ & $0.059$ & $25.83$ & $32.61$ & $0.963$ & $0.054$\\
NeRFactor~\cite{zhang2021nerfactor} & $0.87$ & $0.29$ & $9.53$ & $23.54$ & $30.38$ & $0.969$ & $0.048$ & $26.06$ & $33.47$ & $0.973$ & $0.046$\\
InvRender~\cite{Wu_2023_CVPR} & $0.59$ & $0.06$ & $0.44$ & $23.76$ & $30.83$ & $0.970$ & $0.046$ & $25.91$ & $34.01$ & $0.977$ & $0.042$\\
NVDiffRecMC~\cite{hasselgren2022nvdiffrecmc} & $0.32$ & $\mathbf{0.04}$ & $0.51$ & $24.43$ & $31.60$ & $0.972$ & $0.036$ & $28.03$ & $36.40$ & $0.982$ & $0.028$\\
\midrule
DMTet~\cite{shen2021dmtet} & $\mathbf{0.31}$ & $0.06$ & $0.62$ & $22.91$ & $29.72$ & $0.963$ & $0.039$ & $21.94$ & $28.44$ & $0.969$ & $0.030$\\
FlexiCubes ~\cite{shen2023flexicubes} & $0.32$ & $\mathbf{0.05}$ & $\mathbf{0.49}$ & $23.26$ & $29.99$ & ${0.964}$ & $\mathbf{0.037}$ & ${22.21}$ & ${28.72}$ & $\mathbf{0.970}$ & $\mathbf{0.028}$\\
\textbf{\ourmethod{}} & $0.35$ & $0.07$ & $0.58$ & $\mathbf{23.48}$ & $\mathbf{30.30}$ & $\mathbf{0.965}$ & $0.038$ & $\mathbf{22.30}$ & $\mathbf{28.98}$ & $\mathbf{0.970}$ & $0.031$ \\
\bottomrule
\end{tabular}
}
\end{table*}
\end{document}